\newcommand*{\black}{\color{black}}
\newcommand{\uppi}{\text{\greektext{p}\latintext}} 
\newcommand{\euler}{\textnormal{{e}}} 
\newcommand{\imun}{\textnormal{{i}}}  
\newcommand*{\mot}{~}
\newcommand*{\quac}{\mathcal{Q_{\mathrm{ac}}}}
\newcommand{\sharpp}{$\#\textnormal{P}$}
\newcommand{\qac}{\mathcal{Q_{\mathrm{ac}}}}
\newtheorem{theorem}{Theorem}
\newtheorem{definition}[theorem]{Definition}
\newtheorem{lemma}[theorem]{Lemma}
\newtheorem{corollary}[theorem]{Corollary}
\definecolor{DarkGray}{gray}{0.25} 
\definecolor{MidGray}{gray}{0.38} 
\definecolor{NeutralGray}{gray}{0.5}
\definecolor{LightGray}{gray}{0.7}
\definecolor{lightGray}{gray}{0.85}
\definecolor{DarkRed}{rgb}{0.7,0,0}
\definecolor{DarkBlue}{rgb}{0,0,0.5}
\definecolor{SteelBlue}{rgb}{0,0.4,0.6}
\definecolor{Orange}{rgb}{0.7,0.5,0}
\definecolor{Violette}{rgb}{0.5,0,0.5}
\definecolor{Sand}{rgb}{0.84,0.8,0.55}
\definecolor{niceblue}{rgb}{0.33,0.5,0.8}
\definecolor{OliveGreen}{RGB}{0,102,102}
\definecolor{NiceGreen}{RGB}{0,153,72}
\definecolor{LightGreen}{RGB}{0,200,72}
\definecolor{newblue}{RGB}{40,210,251}
\definecolor{lightblue}{RGB}{179,231,251}
\definecolor{steelblue}{RGB}{70,130,180}
\definecolor{cred}{RGB}{179,28,28}
\definecolor{applegreen}{rgb}{0.55, 0.71, 0.0}
\DeclareMathOperator{\id}{id}
\DeclareMathOperator{\tr}{tr}
\DeclareMathOperator{\poly}{poly}
\DeclareMathOperator{\diag}{diag}
\newcommand{\set}[1]{\{ #1  \}}
\newcommand{\ket}[1]{|{#1}\rangle}
\newcommand{\bra}[1]{\langle{#1}|}
\newcommand{\norm}[1]{\Vert #1\Vert}
\newcommand{\abs}[1]{\lvert #1\rvert}
\newcommand{\ee}{\mathrm{e}}
\newcommand{\ii}{\mathrm{i}}
\newcommand{\Pb}{\mathbb{P}}
\begin{document}

\date{\today}
\author{D.\ Hangleiter}
\orcid{0000-0002-4766-7967}
\email{dominik.hangleiter@fu-berlin.de}
\affiliation{Dahlem Center for Complex Quantum Systems, Freie Universit{\"a}t Berlin, 14195 Berlin, Germany}

\author{J.\ Bermejo-Vega}
\orcid{0000-0003-3727-8092}
\email{jbermejovega@gmail.com }
\affiliation{Dahlem Center for Complex Quantum Systems, Freie Universit{\"a}t Berlin, 14195 Berlin, Germany}

\author{M.\ Schwarz}
\email{martin.schwarz@gmail.com}
\affiliation{Dahlem Center for Complex Quantum Systems, Freie Universit{\"a}t Berlin, 14195 Berlin, Germany}

\author{J.\ Eisert}
\email{jense@zedat.fu-berlin.de}
\affiliation{Dahlem Center for Complex Quantum Systems, Freie Universit{\"a}t Berlin, 14195 Berlin, Germany}
\orcid{0000-0003-3033-1292}

\title{Anticoncentration theorems for schemes showing a quantum speedup}

\nohyphens{\maketitle}

\begin{abstract} One of the main milestones in quantum information science is to realise quantum devices that exhibit an exponential computational advantage over classical
ones without being universal quantum computers, a state of affairs dubbed quantum speedup, or sometimes ``quantum computational supremacy''. 
The known schemes heavily rely on mathematical assumptions that are plausible but unproven, prominently results on anticoncentration of random prescriptions. 
In this work, we aim at closing the gap by proving two anticoncentration theorems and accompanying hardness results, one for circuit-based schemes, the other for quantum quench-type schemes for quantum simulations. Compared to the few other known such results, these results give rise to a number of comparably simple, physically meaningful and resource-economical schemes showing a quantum speedup in one and two spatial dimensions. 
At the heart of the analysis are tools of unitary designs and random circuits that allow us to conclude that universal random circuits anticoncentrate as well as an embedding of known circuit-based schemes in a 2D translation-invariant architecture. 
\end{abstract}


\section{Introduction}

Realising a quantum device that computationally outperforms state-of-the art classical supercomputers for a certain task that is provably intractable classically has become a key milestone in the field of quantum simulation and computing. 
This goal is often referred to as  
``quantum (computational) supremacy'' \cite{preskill2013quantum} or quantum
speedup. 
Such a quantum speedup is not merely meant in the sense
of quantum dynamics being no longer tractable on classical supercomputers using the best known algorithms to date,
for which there is evidence already today \cite{Trotzky,MBL2D,Emergence}. 
Instead, to make sure that the inefficient classical simulation is not victim of a lack of imagination, such a quantum speedup is usually meant to refer to schemes for which the speedup can be related to a notion of computational complexity.
For a quantum speedup scheme to be physically realisable \emph{in principle} in the absence of quantum error correction, 
it is crucial that the hardness of the task is robust under physically realistic errors. 
To have any hope of realising such a scheme in the near term one would moreover wish for the resources required for an implementation of the architecture in the intractable regime to be achievable with present-day (or near-term) technology.

There are only very few quantum speedup architectures that are robust against physically realistic constant total-variation distance errors \cite{aaronson_computational_2010,bremner_average-case_2016,bremner_achieving_2017,boixo_characterizing_2016,gao_quantum_2017,Supremacy,morimae_hardness_2017,miller_quantum_2017}. 
Even fewer of those are physically realistic when it comes to an implementation in present-day technology in that they require only nearest-neighbour interactions and are feasible in the available experimental platforms such as linear optics \cite{aaronson_computational_2010}, superconducting qubits \cite{boixo_characterizing_2016,bremner_achieving_2017}, ion traps or cold atoms in optical lattices \cite{Supremacy}.
The computational task that is solved in all of these proposals is a sampling task, in particular, the task of sampling from the output
distribution of a certain random time-evolution. 
That random time evolution may take the form of a Haar-random unitary applied to a bosonic state \cite{aaronson_computational_2010}, a random circuit from a gate set \cite{boixo_characterizing_2016}, IQP circuits \cite{bremner_achieving_2017} applied to an all-zero state, or even a translation-invariant nearest-neighbour Ising Hamiltonian that is applied to a random product state \cite{Supremacy}. In addition to this discussion, there is the question to what extent 
schemes showing a quantum speedup can be certified in their correctness
\cite{SampleComplexity,AaronsonUniform,Hangleiter,Supremacy,gao_quantum_2017,miller_quantum_2017,kapourniotis_nonadaptive_2017}.

The central ingredient of all existing quantum speedup proofs is Stockmeyer's
algorithm\mot\cite{Stockmeyer} that implies a collapse of the Polynomial Hierarchy if sampling from the output distribution of the respective circuits is \sharpp-hard on average.
In order for this hardness argument to be valid, one crucially requires so-called 
anticoncentration bounds for the output probability
distribution of the respective random circuits \cite{lund_quantum_2017}. 

Indeed, it has been shown that one can efficiently classically sample from output distributions of certain circuit families, including IQP circuits, the output distribution of which concentrates on a polynomially small subset of the sample space \cite{Schwarz13_Sparse}. 
This shows that concentrated output distributions are in many relevant cases simulable, rendering anticoncentration a necessary condition for classical hardness for these cases. 

Despite of their central role in the hardness argument of quantum speedup proposals, only few proofs of anticoncentration bounds are known so far 
\cite{bremner_average-case_2016,bremner_achieving_2017,morimae_hardness_2017}. 
In all other speedup architectures -- boson sampling \cite{aaronson_computational_2010}, universal random circuits \cite{boixo_characterizing_2016}, and translation-invariant Ising models \cite{gao_quantum_2017,Supremacy} -- there exists 
none or merely numerical evidence for anticoncentration of the respective circuit families and the validity of the anticoncentration assumption needs to be conjectured. This still gives rise to plausible schemes, but in order to 
complete the program of realising quantum schemes showing a quantum speedup, these gaps must
necessarily be closed. Rigorous anticoncentration results for classically intractable circuit families are therefore both of crucial importance to corroborate the validity of those existing speedup proposals, as well as to shine light on the conditions of them coming about which are highly debated in the literature. 

In this work, we provide rigorous anticoncentration results for two types of such quantum  architectures that are at the same time not classically simulable. 
First, we show that random circuits drawn from a unitary 2-design anticoncentrate. 
We then apply this result to both show that random circuits comprised of nearest-neighbour gates that are drawn from a universal gate set containing inverses anticoncentrate in linear depth, and propose two new schemes based on this insight.
Second, we prove that the output distribution of a particular nearest-neighbour quantum quench architecture based on the time evolution of product states under certain translation-invariant Ising models anticoncentrate in constant depth. 
Thus, we consider two types of architectures tailored towards different kinds of experimental platforms in the following sense.
In platforms in which achieving large numbers of qubits is expensive and local control feasible, circuit-based schemes such as universal random circuits can be reasonably implemented. 
A paradigmatic example of such a platform might be constituted of superconducting qubits. 
In contrast, there are physically most natural settings of quantum simulators in which local control on the level of individual gates is difficult to achieve, but for which extremely large numbers of local constituents can be reached. 
Cold atoms in optical lattices, in which $10^4-10^5$ atoms are readily reachable, provide the most prominent example of such an architecture. 
Our quench-type architecture is tailored toward such settings. 
 
The application of our first result to universal random circuits complies with the intuition that due to the ballistic spread of correlations anticoncentration will generically arise in depth that scales linearly with the diameter of the system under consideration, and hence, linearly in a one-dimensional architecture \cite{boixo_characterizing_2016}. 
Still, to the best of our knowledge there is no rigorous proof for anticoncentration of universal random circuits. 
In contrast, in the light of this intuition the second result is quite surprising:
It has even been argued \cite{lund_quantum_2017} that it cannot be expected to reach anticoncentrating output distributions 
in constant depth, retaining its classical intractability. 
We conjecture the scaling of both results to be optimal in the settings considered (unstructured circuits in one dimension and highly structured circuits in two dimensions).

To prove the first result (Theorem~\ref{thm:anticoncentration}) we make use of properties of approximate unitary 2-designs and apply the Paley-Zygmund inequality. 
In applying this result to show that several schemes anticoncentrate we use the fact that all these schemes form approximate 2-designs. 
This includes, universal random circuits as in \cite{boixo_characterizing_2016}, Clifford circuits acting on input product magic states\mot\cite{jozsa_classical_2013}, and certain models of diagonal unitaries\mot\cite{nakata_generating_2014}. 
What is more, we provide new complexity-theoretic evidence for the hardness of classically simulating these random-circuit families in the approximate sampling sense. 
In doing so, we focus on universal random circuits, which attain this property already in \emph{linear} $O(n)$ depth\mot\cite{gross_evenly_2007,brandao_local_2016}. 
For this case, we derive a matching classical-hardness upper bound by proving that a broad class of 1D random quantum circuits are hard to simulate classically already in linear\mot$O(n)$ circuit depth, in the strong simulation sense (Lemma \ref{lem:ClassicalHardness}). 
The novel aspects of our work are the generality of these results and the minimal linear-depth requirements for achieving both classical hardness and anticoncentration on a 1D nearest-neighbour architecture. 
This drastically improves over prior work\mot\cite{boixo_characterizing_2016} on universal random circuits, which gave only numerical evidence for anticoncentration (in depth $O(\sqrt{n})$, and in 2D), and no matching classical-hardness depth bound. Prior to us, analogous depth results were only available for nearest-neighbour IQP circuit models supplemented by SWAP gates: for these,  Ref.\mot\cite{bremner_achieving_2017} gave $O(\sqrt{n}\log n)$ depth bounds for anticoncentration and hardness in 2D architectures, and Ref.\mot\cite{Supremacy} proved linear $O(n)$ ones in 1D layouts. 

For the second result (Theorem~\ref{lemma:IQPSampling}) and Corollaries~\ref{thm:AntiConOfQuench}--\ref{thm:QQuenchSupremacy} thereof, we use the facts that IQP circuits anticoncentrate \cite{bremner_achieving_2017} and can be implemented in a 1D architecture in linear depth \cite{Supremacy}. 
Our technical contribution is to provide an embedding of such IQP circuits in the constant-time evolution of a product state under a translation-invariant Ising model on a two-dimensional lattice. 
Via this embedding we are thus able to show the first anticoncentration bound for translation-invariant constant-depth schemes that exhibit a quantum speedup~\cite{gao_quantum_2017,Supremacy}.

This work is structured as follows: First, in Sec.~\ref{sec:definitions}, we
will introduce the formal statement of anticoncentration and show how it is used in the Stockmeyer hardness proof. In
Sec.~\ref{sec:random circuits} we will both state and prove the
anticoncentration result for approximate unitary 2-designs and then apply this result to relevant examples, most importantly, universal random circuits in linear depth. 
In Sec.~\ref{sec:ising} we will then prove the  anticoncentration result for the constant-time evolution of a random product state under a certain nearest-neighbour translation-invariant Ising Hamiltonian. 
Finally, we discuss the implications of our results in the context of the timely literature in Sec.~\ref{sec:discussion}, and conclude in Sec.~\ref{sec:conclusion}. 


\section{Preliminaries: Anticoncentration and quantum speedups}
\label{sec:definitions}

Throughout this work, we consider quantum systems consisting of $n$ qubits (with obvious generalisation to $d$-dimensional
local constituents). To start with, let us make precise, what is meant by anticoncentration of the output distribution of a unitary $U$ drawn from a certain measure $\mu$. 
We call the distribution of probabilities $\abs{\bra{x} U \ket{0}}^2$ of obtaining $x \in \set{0,1}^n$ when applying a unitary $U \in U(N)$,
$N=2^n$, to an initial state vector $\ket{0} :=  \ket{0}^{\otimes n }$ and measuring in the computational basis, the 
output distribution. 
We say that this output distribution anticoncentrates if there exist universal
constants $\alpha, \beta > 0$ such that for any  $x\in \{0,1\}^n$, the
probabilities $\abs{\langle x | U | 0 \rangle }^2$ of this unitary  anticoncentrate,
\begin{align} 
    \label{anticoncentration}
    \mathrm{Pr}_{U \sim \mu}\biggl(|\langle x | U | 0 \rangle |^2 \geq \frac{\alpha}{N} \biggr) > \beta
    \, . 
\end{align}
We can interpret this probability as the probability that an arbitrarily chosen
entry $x$ of the first column of a $\mu$-randomly chosen $U$ is larger than $\alpha/N$. 
Throughout this work, we say that a quantity $X$ is approximated by a quantity 
$\tilde{X}$ with \emph{multiplicative error} $c$ if $X/c \leq \tilde{X} \leq c
X$, with \emph{relative error} $r$ if $ (1- r) X \leq \tilde{X} \leq
(1 + r ) X$, and with \emph{additive error} $a$ if $ \norm{X - X }_* \leq a$
for some norm $\norm{\cdot}_*$.

In the commonly used proof technique for quantum speedups \cite{BremnerOld,terhal_adaptive_2004} Stockmeyer's algorithm \cite{Stockmeyer} is applied to show a collapse of the polynomial hierarchy if for an arbitrary such $x$ the amplitude $|\bra{x} U \ket{0}|^2$ is \#P-hard to approximate multiplicatively.
Anticoncentration comes into this proof when hardness is shown not up to multiplicative but up to an additive error in total-variation distance.
More specifically, to prove a quantum speedup with constant total-variation distance errors for sampling from the output distribution of a circuit family $\mathcal{F}$ using the argument developed in Refs.~\cite{aaronson_computational_2010,bremner_average-case_2016} one requires three ingredients: 
(i) The output distribution of $\mathcal{F}$ anticoncentrates in the sense of Eq.~\eqref{eq:anticoncentration}. 
(ii) post$\mathcal{F}$ = postBQP. 
By the result of Refs.\ \cite{Kuperberg15JonesPolynomial,fujii_commuting_2017} the output probabilities are then \#P-hard to approximate up to relative error $1/4$. 
(iii) The output probabilities of $\mathcal{F}$ are \#P-hard to approximate up to multiplicative errors in the \emph{average case}. This needs to be conjectured for all quantum speedup schemes\footnote{In the exact case, one can even prove average-case results for the permanent \cite{aaronson_computational_2010} and random-circuit based schemes \cite{bouland_quantum_2018}.}, preferably in terms of a universal quantity such as the imaginary-time partition function of Ising models \cite{bremner_average-case_2016}, the permanent \cite{aaronson_computational_2010}, or the Jones polynomial \cite{mann_complexity_2017}. 
Together, (ii) and (iii) permit a reduction from hardness of strong simulation up to multiplicative error to hardness of weak simulation up to an additive error using Stockmeyer's algorithm \cite{Stockmeyer} in the third level of the Polynomial Hierarchy. 
As a result, very often three conjectures need to be made when proving a quantum speedup using this technique:
\begin{enumerate}[leftmargin=18pt,label=\textnormal{C\arabic*}]
\item The Polynomial Hierarchy cannot collapse to its 3rd level \cite{aaronson2016,fortnow2005beyond,karp1980}.\label{conj:PH}
\item If it is $\#\textnormal{P}$-hard to approximate the output probability of a circuit drawn from $\mathcal{F}$ up to a constant relative error, then the same problem is $\#\textnormal{P}$-hard for a constant fraction of the instances\footnote{This conjecture may be regarded as a qubit analogue of the ``permanent-of-Gaussians'' conjecture of Ref.\ \cite{aaronson_computational_2010}.}. \label{conj:Average}
\item The output distribution of $\mathcal{F}$ anticoncentrates. \label{conj:AntiCon}
\end{enumerate} 

In the following, we will prove the anticoncentration conjecture \ref{conj:AntiCon} in the sense of equation
\eqref{anticoncentration} both for certain circuit-based schemes, in fact, those that form an approximate 2-design, and a quantum-quench architecture in the mindset of Ref.~\cite{Supremacy}.


\section{Anticoncentration of circuit-based schemes} 
\label{sec:random circuits}

In this section we will begin by introducing and proving our first result, namely, that approximate unitary 2-designs anticoncentrate, and then apply this result to three relevant examples of circuit-based schemes, most prominently, universal random circuits.

\subsection{Anticoncentration of unitary 2-designs}
\label{sec:2-designs}

Unitary $k$-designs approximate the uniform (Haar) measure on the unitary group (see App.~\ref{app:haar measure}) in the sense that the first $k$ moments of a
unitary $k$-design and the Haar measure match (exactly or approximately). 
The definition of a $k$-design is motivated by the fact that in experiments
samples from a unitary $k$-design are much easier to realise than samples from
the full Haar measure. 
In order to define the notion of a $k$ design, we need the notion of the
$k^{\text{th}}$-moment operator that acts as a unitary twirl with respect to
some measure $\mu$ on the unitary group maps on an operator.

\begin{definition}[$k^{\text{th}}$-moment operator]
Let $M_\mu^k$ be the $k$-th moment
operator on $\mathcal{L}(\mathcal{H}^{\otimes k } )$ with respect to a
distribution $\mu$ on $U(N)$, $N = 2^n = \dim \mathcal{H}$ defined as 
\begin{equation} 
\begin{split}
    X \mapsto M_\mu^k (X) & \coloneqq \mathbb{E}_\mu \left[ U^{\otimes k } X
    (U^\dagger)^{\otimes k } \right] \\
    &= \int_{U(N)}  U^{\otimes k } X
    (U^\dagger)^{\otimes k } \mu(U).
\end{split}
\end{equation} 
\end{definition}
We can now define unitary $k$-designs \cite{gross_evenly_2007,dankert_exact_2009}. 
\begin{definition}[Unitary $k$-design]
Let $\mu$ be a distribution on the unitary group $U(N)$.
Then $\mu$ is an exact unitary $k$-design if
$$M_\mu^k = M_{\mu_{\rm Haar}}^k \, .  \label{eq:exact design} $$ 
\end{definition}

In all of what follows, we will need to relax this notion to the notion of an \emph{approximate} unitary $k$-design. 
In such a definition we can allow for both relative and additive errors
on the equality \eqref{eq:exact design} \cite{brandao_local_2016,onorati_mixing_2017}: 
\begin{definition}[Approximate unitary $k$-designs]
Let $\mu$ be a distribution on the unitary group $U(N)$.
Then $\mu$ is  
\begin{enumerate}
	\item  an additive $\epsilon$-approximate unitary $k$-design if 
        $$ \Vert M_\mu^k - M_{\mu_{\rm Haar}}^k \Vert_\diamond \le \epsilon \,
        , $$
   \item a relative $\epsilon$-approximate unitary $k$-design if 
       $$(1-\epsilon) M_{\mu_{\rm Haar}}^k \leq M_\mu^k \leq (1 + \epsilon)
       M_{\mu_{\rm Haar}}^k \, . $$ 
\end{enumerate}
\end{definition}

Since the former definition is much more common in the literature, let us remark that the two definitions are closely related via the following Lemma of
Ref.~\cite{brandao_local_2016} in which, however, a factor of the dimension enters. 
\begin{lemma}[Additive and relative approximate designs]
	If $\mu$ is a relative $\epsilon$-approximate unitary $k$-design then $\norm{M_\mu^k - M_{\mu_{\rm Haar}}^k}_\diamond \leq 2\epsilon$. 
	Conversely, if  $\norm{M_\mu^k - M_{\mu_{\rm Haar}}^k}_\diamond \leq\epsilon$, then $\mu$ is a relative $\epsilon N^{2k}$-approximate unitary $k$-design. 
\end{lemma}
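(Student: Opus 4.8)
The plan is to prove the two implications separately, using throughout that $G:=M_{\mu_{\rm Haar}}^k$ and $M_\mu^k$ are completely positive and trace preserving (being averages of unitary channels), that $\tr J(\Phi)=N^k$ for any trace-preserving superoperator $\Phi$ on $\mathcal{L}(\mathcal{H}^{\otimes k})$ with $J(\Phi)$ its Choi matrix, and that $GM_\mu^k=M_\mu^k G=G$ by invariance of the Haar measure, so that the error $\Delta:=M_\mu^k-G$ obeys $G\Delta=\Delta G=0$.

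For the first implication I would work directly in the order of completely positive maps. The relative-design hypothesis states that $\mathcal{A}:=M_\mu^k-(1-\epsilon)G$ and $\mathcal{B}:=(1+\epsilon)G-M_\mu^k$ are completely positive, and since $\mathcal{A}+\mathcal{B}=2\epsilon G$ one has $0\preceq\mathcal{A}\preceq 2\epsilon G$ and $0\preceq\mathcal{B}\preceq 2\epsilon G$. As $\Delta=\tfrac12(\mathcal{A}-\mathcal{B})$, the triangle inequality gives $\norm{\Delta}_\diamond\le\tfrac12(\norm{\mathcal{A}}_\diamond+\norm{\mathcal{B}}_\diamond)$, and I would then invoke monotonicity of the diamond norm under the completely positive order: if $0\preceq\Phi\preceq\Psi$ then for every state $\rho$ both $(\Phi\otimes\id)(\rho)$ and $((\Psi-\Phi)\otimes\id)(\rho)$ are positive semidefinite, whence $\norm{(\Phi\otimes\id)(\rho)}_1=\tr(\Phi\otimes\id)(\rho)\le\tr(\Psi\otimes\id)(\rho)=\norm{(\Psi\otimes\id)(\rho)}_1$, so $\norm{\Phi}_\diamond\le\norm{\Psi}_\diamond$. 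Hence $\norm{\mathcal{A}}_\diamond,\norm{\mathcal{B}}_\diamond\le 2\epsilon\norm{G}_\diamond=2\epsilon$ (using $\norm{G}_\diamond=1$ for the channel $G$), and therefore $\norm{\Delta}_\diamond\le 2\epsilon$. (Tracking the traces more carefully one even finds $\norm{\mathcal{A}}_\diamond=\norm{\mathcal{B}}_\diamond=\epsilon$, so the constant $2$ is not tight.)

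For the converse I would pass to Choi matrices $J(\cdot)$ on $\mathcal{H}^{\otimes k}\otimes\mathcal{H}^{\otimes k}$, under which the completely positive order becomes the positive semidefinite order, and aim to prove $-\delta J(G)\preceq J(\Delta)\preceq\delta J(G)$ with $\delta=\epsilon N^{2k}$, which is equivalent to the relative $\epsilon N^{2k}$-design inequalities. This rests on three facts. \emph{(i) Support.} Writing $J(M_\mu^k)=N^k\,\mathbb{E}_{U\sim\mu}[\ketbra{\omega_U}{\omega_U}]$ with $\ket{\omega_U}:=(U^{\otimes k}\otimes\id)\ket{\omega}$ and $\ket{\omega}$ the normalised maximally entangled vector, we see $\supp J(M_\mu^k)\subseteq\mathcal{S}:=\mathrm{span}\{(W^{\otimes k}\otimes\id)\ket{\omega}:W\in U(N)\}$; the same holds for $J(G)$, whose support equals all of $\mathcal{S}$ since the Haar measure has full support, so $\supp J(\Delta)\subseteq\mathcal{S}$. \emph{(ii) Upper bound.} From $J(\Delta)=N^k(\Delta\otimes\id)(\ketbra{\omega}{\omega})$ we get $\norminf{J(\Delta)}\le\norm{J(\Delta)}_1\le N^k\norm{\Delta}_\diamond\le N^k\epsilon$. \emph{(iii) Spectral gap.} By Schur--Weyl duality the Haar twirl block-decomposes over partitions $\lambda\vdash k$ with at most $N$ rows, and the nonzero eigenvalues of $J(G)$ turn out to be precisely the ratios $m_\lambda/d_\lambda$, where $d_\lambda$ and $m_\lambda$ are the dimensions of the corresponding irreducible representations of $U(N)$ and $S_k$; since $m_\lambda\ge 1$ and $m_\lambda d_\lambda\le\sum_\mu m_\mu d_\mu=N^k$, each such ratio satisfies $m_\lambda/d_\lambda\ge 1/d_\lambda\ge 1/(m_\lambda d_\lambda)\ge N^{-k}$, so $J(G)\succeq N^{-k}P_{\mathcal{S}}$ with $P_{\mathcal{S}}$ the orthogonal projector onto $\mathcal{S}$.

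Combining these, on $\mathcal{S}$ one has $\pm J(\Delta)\preceq N^k\epsilon\,P_{\mathcal{S}}=\epsilon N^{2k}(N^{-k}P_{\mathcal{S}})\preceq\epsilon N^{2k}J(G)$, while on the orthogonal complement of $\mathcal{S}$ all three operators vanish; hence $-\delta J(G)\preceq J(\Delta)\preceq\delta J(G)$ with $\delta=\epsilon N^{2k}$, which translated back reads $(1-\delta)G\preceq M_\mu^k\preceq(1+\delta)G$, i.e.\ $\mu$ is a relative $\epsilon N^{2k}$-approximate unitary $k$-design. I expect fact (iii) to be the main obstacle: computing the spectrum of the Choi matrix of the Haar twirl and extracting the $N^{-k}$ lower bound on its smallest nonzero eigenvalue is where representation theory genuinely enters, whereas the remaining steps are routine manipulations with Choi matrices and the defining property of the diamond norm.
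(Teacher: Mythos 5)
The paper does not actually prove this lemma; it states it as a known fact and attributes it to Ref.~\cite{brandao_local_2016}, so there is no in-paper argument to compare against. Judged on its own terms, your proof is correct and well organised. The first direction is clean: writing $\Delta=\tfrac12(\mathcal{A}-\mathcal{B})$ with $\mathcal{A},\mathcal{B}$ the two completely positive witnesses of the relative-design condition, invoking monotonicity of $\norm{\cdot}_\diamond$ on the CP cone, and using $\norm{G}_\diamond=1$ is exactly the right mechanism. Your observation that $\mathcal{A}^*(\id)=\mathcal{B}^*(\id)=\epsilon\,\id$ (both superoperators being differences of unital adjoints) gives $\norm{\mathcal{A}}_\diamond=\norm{\mathcal{B}}_\diamond=\epsilon$ and hence $\norm{\Delta}_\diamond\le\epsilon$ is also correct; the factor $2$ in the stated lemma is indeed not tight.

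For the converse, the three-step Choi-matrix argument is sound. Fact (i) is routine; fact (ii) follows from $\norm{J(\Delta)}_\infty\le\norm{J(\Delta)}_1=N^k\norm{(\Delta\otimes\id)(\ketbra{\omega}{\omega})}_1\le N^k\norm{\Delta}_\diamond$; and fact (iii) is the genuine representation-theoretic content. The eigenvalue claim you state without derivation is correct: in the Schur--Weyl block $V_\lambda\otimes W_\lambda$, the Haar twirl acts as $A\otimes B\mapsto(\tr A/d_\lambda)I_{V_\lambda}\otimes B$, so the block of $J(G)$ is $(1/d_\lambda)\,I_{V_\lambda\otimes V_\lambda}\otimes\ketbra{\Phi_{W_\lambda}}{\Phi_{W_\lambda}}$ with $\ket{\Phi_{W_\lambda}}$ the unnormalised maximally entangled vector on $W_\lambda\otimes W_\lambda$; this has the single nonzero eigenvalue $m_\lambda/d_\lambda$ with multiplicity $d_\lambda^2$, consistent with $\tr J(G)=\sum_\lambda d_\lambda m_\lambda=N^k$. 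The bound $m_\lambda/d_\lambda\ge 1/d_\lambda\ge N^{-k}$ then completes the argument, and your concluding assembly $\pm J(\Delta)\preceq\epsilon N^k P_{\mathcal{S}}\preceq\epsilon N^{2k}J(G)$ correctly translates to the relative-design inequalities. You were right to flag the spectral computation as the only non-routine step; supplying the one-paragraph Schur--Weyl block calculation would make the proof fully self-contained.
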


We are now ready to state our first anticoncentration result on unitary 2-designs.

\begin{theorem}[Anticoncentration of unitary 2-designs]
    \label{thm:anticoncentration} 
    Let $\mu$ be a relative $\epsilon$-approximate unitary 2-design on
    the group $U(N)$.  
    Then the output probabilities $\abs{\bra{x} U \ket{0} }^2$ for $x \in
    \set{0,1}^n$ of a $\mu$-random unitary $U \in U(N)$ anticoncentrate in the sense that for $0 \leq \alpha \leq 1$
    \begin{equation}
        \Pb_{\substack{U \sim \mu}} \left( \abs{\bra{x} U \ket{0} }^2 > \frac{\alpha
(1-\epsilon) }{N} \right)  \geq  \frac{(1-\alpha)^2(1-\epsilon)^2}{2(1+\epsilon)} \, . 
    \label{eq:anticoncentration} 
\end{equation} 
\end{theorem}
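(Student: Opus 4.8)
The statement is a classic Paley–Zygmund-type anticoncentration argument applied to the random variable $p_x := \abs{\bra{x}U\ket{0}}^2$. The plan is to compute (or bound) its first and second moments over $U\sim\mu$ using the approximate $2$-design property, and then invoke the Paley–Zygmund inequality $\Pb(Z > \theta\,\Eb[Z]) \geq (1-\theta)^2 \Eb[Z]^2/\Eb[Z^2]$ for $0\le\theta\le1$ and a nonnegative random variable $Z$.

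First I would compute the Haar moments. Writing $p_x = \bra{x}\bra{x} U^{\otimes 2}\,\proj{0}^{\otimes 2}\,(U^\dagger)^{\otimes 2}\ket{x}\ket{x}$ for the second moment (and the analogous single-copy expression for the first moment), the quantities $\Eb_\mu[p_x]$ and $\Eb_\mu[p_x^2]$ are exactly of the form $\bra{\psi} M_\mu^k(X) \ket{\psi}$ with $k=1,2$, $X = \proj{0}^{\otimes k}$, and $\ket{\psi} = \ket{x}^{\otimes k}$. For the Haar measure the standard Weingarten/Schur-Weyl evaluation gives $\Eb_{\mathrm{Haar}}[p_x] = 1/N$ and $\Eb_{\mathrm{Haar}}[p_x^2] = 2/(N(N+1))$ (the $k=2$ moment operator on the symmetric subspace has eigenvalue $2/(N(N+1))$ on $\Proj{x}^{\otimes 2}$, using that $\mathrm{tr}$ against $\proj{0}^{\otimes 2}$ and the swap both contribute $1$). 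Next I would transfer these to $\mu$ using the relative $\epsilon$-approximate $2$-design condition $(1-\epsilon)M^2_{\mathrm{Haar}} \le M^2_\mu \le (1+\epsilon)M^2_{\mathrm{Haar}}$, which immediately yields $\Eb_\mu[p_x] \geq (1-\epsilon)/N$ and $\Eb_\mu[p_x^2] \leq (1+\epsilon)\,2/(N(N+1)) \leq 2(1+\epsilon)/N^2$. (A $2$-design is in particular a $1$-design, so the first-moment bound also follows from the design hypothesis; positivity of $M^2_{\mathrm{Haar}}$ on the relevant operators is what makes the operator inequality usable here.)

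Now I would feed these into Paley–Zygmund. Applying it to $Z = p_x$ with threshold $\theta=\alpha$,
\begin{equation*}
\Pb_{U\sim\mu}\!\left(p_x > \alpha\,\Eb_\mu[p_x]\right) \;\geq\; (1-\alpha)^2\,\frac{\Eb_\mu[p_x]^2}{\Eb_\mu[p_x^2]} \;\geq\; (1-\alpha)^2\,\frac{(1-\epsilon)^2/N^2}{2(1+\epsilon)/N^2} \;=\; \frac{(1-\alpha)^2(1-\epsilon)^2}{2(1+\epsilon)}\,.
\end{equation*}
Since $\Eb_\mu[p_x]\geq(1-\epsilon)/N$, the event $\{p_x > \alpha\Eb_\mu[p_x]\}$ contains $\{p_x > \alpha(1-\epsilon)/N\}$, which upgrades the left-hand side to exactly the probability appearing in Eq.~\eqref{eq:anticoncentration}, completing the argument.

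The only real subtlety — and the step I would be most careful with — is the exact evaluation of the Haar second moment $\Eb_{\mathrm{Haar}}[\abs{\bra{x}U\ket{0}}^4]$ and making sure the $2/(N(N+1))$ constant and the direction of the $\epsilon$-inequalities are tracked correctly; everything else is a mechanical application of Paley–Zygmund. One should also note that the bound is vacuous unless $\epsilon<1$ and $\alpha<1$, which is the stated regime. A clean way to get the Haar fourth moment without Weingarten bookkeeping is to use that $U\ket{0}$ is a Haar-random unit vector in $\Cb^N$, so $(\abs{\langle x|U|0\rangle}^2)$ is distributed as the first coordinate's squared modulus, i.e.\ $\mathrm{Beta}(1,N-1)$, whose first two moments are $1/N$ and $2/(N(N+1))$ — this is the most transparent route and the one I would present.
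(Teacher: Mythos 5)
Your proof is correct and takes essentially the same route as the paper's: Paley--Zygmund applied to $p_x:=\abs{\bra{x}U\ket{0}}^2$, after transferring the Haar moments $\Eb_{\mathrm{Haar}}[p_x]=1/N$ and $\Eb_{\mathrm{Haar}}[p_x^2]=2/(N(N+1))$ to $\mu$ via the relative $\epsilon$-approximate $2$-design (and hence $1$-design) condition. One small wording slip at the end: because $\alpha\Eb_\mu[p_x]\geq\alpha(1-\epsilon)/N$, the event $\{p_x>\alpha\Eb_\mu[p_x]\}$ is \emph{contained in}, not ``contains'', $\{p_x>\alpha(1-\epsilon)/N\}$ --- which is exactly the inclusion you need, since it gives $\Pb(p_x>\alpha(1-\epsilon)/N)\geq\Pb(p_x>\alpha\Eb_\mu[p_x])$, and the stated bound follows.
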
 


We point out that Theorem~\ref{thm:anticoncentration} also holds in exactly the
same way for relative $\epsilon$-approximate state 2-designs. 
This is a weaker condition than the unitary design condition since any (approximate) unitary 2-design generates an (approximate) state 2-design via application to an arbitrary reference state. 
Also note that the fact that $\mu$ is a relative $\epsilon$-approximate
1-design (cf.\ App.~\ref{app:k k-1 designs}) is crucial for the bound
\eqref{eq:anticoncentration} to become non-trivial. 
If instead $\mu$ was an additive design the lower bound would asymptotically tend to zero as $1/N$ and hence not stay larger than a constant. 
However, the 1-design condition holds even exactly for many distributions $\mu$, although for the higher moments it may only hold approximately.

\begin{proof}[Proof of Theorem \ref{thm:anticoncentration}]
    Our proof of the anticoncentration bound \eqref{eq:anticoncentration}
    has two steps and relies on two ingredients: In the first step, we prove 
    anticoncentration of a single \emph{but fixed} entry of Haar random
    unitaries. To this end we make use of the \emph{Paley-Zygmund
    inequality} and an explicit expression of the distribution of
    matrix elements of Haar-random unitaries. 
    In the second step, we extend this result to full anticoncentration of all
    output probabilities in the sense of equation \eqref{eq:anticoncentration}.

    The Paley-Zygmund inequality is a lower-bound analogue of Markov-type tail
    bounds and can be stated as follows. 
    If $Z \geq 0$ is a random variable with finite variance, and if $0 \le \alpha \le 1$
    \begin{align}
        \Pb ( Z > \alpha \mathbb{E}[Z] ) \geq (1- \alpha)^2 \frac{\mathbb{E}[Z]^2}{\mathbb{E}[Z^2]}\, . 
    \end{align}
    That is, it lower bounds the 
    probability that a positive random variable is small in terms of its mean and variance.

    Now let $\mu$ be a relative $\epsilon$-approximate unitary 2-design.
    Then for $l = 2, 4$, it holds that 
    \begin{equation} 
    \begin{split} ( 1 -   \epsilon) 
        \mathbb{E}_{U \sim \rm Haar}&  \left[ \abs{ \bra{a} U
\ket{b} } ^l  \right ] 
        \leq  
\mathbb{E}_{U \sim \mu} \left[ \abs{ \bra{a} U
        \ket{b} } ^l \right ] \\
        & \leq ( 1 +   \epsilon) 
        \mathbb{E}_{U \sim \rm Haar} \left[ \abs{ \bra{a} U
        \ket{b} } ^l  \right ]  \, . 
    \end{split}
\end{equation} 
    This is due to the fact that for any unitary $k$-design $\mu$
    the expectation value of an arbitrary polynomial $P$ of degree
    $2$ in the matrix elements of both $U$ and $U^\dagger$ over 
    $\mu$ equals the same expectation value but taken over the Haar measure up
    to a relative error $\epsilon>0$ \cite{harrow_random_2009}.
    To see this,
    observe that averaging a monomial in the matrix elements of $U$ 
    over the $k$-design $\mu$ can be expressed
    as $\langle i_1, \ldots, i_k | M_\mu^k (|j_1, \ldots, j_k \rangle \langle j_1',
    \ldots, j_k' |) | i_1', \ldots, i_k' \rangle$. 
    Hence, if $ M_\mu^k = M_{\rm
    Haar}^k$, ``then any polynomial of degree $k$ in the matrix elements of $U$ 
    will have the same expectation over both distributions''\mot\cite{harrow_random_2009}. 
This gives rise to 
    \begin{equation} 
    \begin{split}
        \Pb_{U \sim \mu}  &( \abs{\bra{x} U \ket{0} } ^2  > \alpha
        (1-\epsilon) \mathbb{E}_{U \sim \rm Haar} [\abs{\bra{x} U
        \ket{0} } ^2 ] )\\&  \geq \Pb_{U \sim \mu} ( \abs{\bra{x} U
            \ket{0} } ^2   > \alpha \mathbb{E}_{U \sim \mu} [\abs{\bra{x} U \ket{0} } ^2 ] ) \\
        & \geq (1- \alpha)^2
        \frac{\mathbb{E}_{U \sim \mu}[\abs{\bra{x} U \ket{0} } ^2
        ]^2}{\mathbb{E}_{U \sim \mu}[\abs{\bra{x} U \ket{0} } ^4]}\\
         & \geq (1- \alpha)^2
        \frac{(1-\epsilon)^2 \mathbb{E}_{U \sim \rm Haar}[\abs{\bra{x} 
        U \ket{0} } ^2 ]^2}{(1+\epsilon)\mathbb{E}_{U \sim \rm Haar}[\abs{\bra{x} U \ket{0} } ^4]}.
        \end{split}
\end{equation}

    \begin{lemma}[Marginal output distribution]
    \label{lemma:output distribution}
        The distribution of the marginal output probabilities $p = \abs{\bra{x}
        U\ket{0} } ^2$ of Haar random unitaries $U$ and arbitrary but fixed $x$ is
       given by
        \begin{equation}
            \label{eq:haar distribution}
            P_{\rm Haar }(p) = (N- 1)(1 - p )^{N-2} \xrightarrow{N \gg 1} N \exp(-N p ).  
        \end{equation}
        In particular, $P_{\rm Haar }$'s  first and second moments  are given by 
        \begin{align}
        \mathbb{E}_{\rm {Haar}}[p]   = \frac{1}{N} ,\quad
        \mathbb{E}_{\rm {Haar}}[p^2]   = \frac{2}{N(N + 1 )} .\label{eq:firstandsecondmoment}
        \end{align}
    \end{lemma}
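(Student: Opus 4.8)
The plan is to exploit the left-invariance of the Haar measure: since $U\ket{0}$ for Haar-random $U\in U(N)$ is the first column of $U$, left-invariance forces it to be a uniformly distributed unit vector in $\Cb^N$. Hence, denoting this random vector $u=(u_1,\dots,u_N)$, the quantity $p=\abs{\bra{x}U\ket{0}}^2$ equals $\abs{u_x}^2$, and by the unitary symmetry of the uniform measure on the sphere its distribution does not depend on $x$; so it suffices to determine the law of $p=\abs{u_1}^2$. First I would realise $u$ as the normalisation $u=g/\norm{g}$ of a vector $g=(g_1,\dots,g_N)$ of i.i.d.\ standard complex Gaussians; then the $\abs{g_i}^2$ are i.i.d.\ unit-mean exponential random variables and $p=\abs{g_1}^2/\sum_{i=1}^N\abs{g_i}^2$. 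Since $\abs{g_1}^2\sim\mathrm{Gamma}(1,1)$ and $\sum_{i=2}^N\abs{g_i}^2\sim\mathrm{Gamma}(N-1,1)$ are independent, the ratio $p$ is $\mathrm{Beta}(1,N-1)$-distributed, whose density on $[0,1]$ is $\frac{\Gamma(N)}{\Gamma(1)\Gamma(N-1)}(1-p)^{N-2}=(N-1)(1-p)^{N-2}$, which is exactly the claimed $P_{\rm Haar}(p)$ of Lemma~\ref{lemma:output distribution}. (Alternatively, one can bypass Gaussians and directly compute the normalised surface measure of $\{v\in S^{2N-1}\colon v_1^2+v_2^2\le t\}$, using the classical fact that the sum of squares of $k$ real coordinates of a uniform point on $S^{m-1}$ follows $\mathrm{Beta}(k/2,(m-k)/2)$; here $m=2N$ and $k=2$, since one complex coordinate is a pair of real ones.)

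With the density in hand, the moments follow from the Beta integral:
\begin{equation}
\Eb_{\rm Haar}[p^k]=(N-1)\int_0^1 p^k(1-p)^{N-2}\,\rmd p=(N-1)B(k+1,N-1)=\frac{(N-1)!\,k!}{(N+k-1)!},
\end{equation}
which for $k=1$ gives $1/N$ and for $k=2$ gives $2/(N(N+1))$, as stated in \eqref{eq:firstandsecondmoment}. The asymptotic form is then obtained by writing $(N-1)(1-p)^{N-2}=(N-1)\exp\!\big[(N-2)\ln(1-p)\big]$ and letting $N\to\infty$ in the relevant scaling regime $Np=O(1)$, where $(1-p)^{N-2}\to e^{-Np}$ and the prefactor satisfies $N-1\sim N$.

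I do not expect a genuine obstacle here; the only point requiring care is the bookkeeping of real versus complex dimensions — a uniform unit vector in $\Cb^N$ lives on $S^{2N-1}\subset\Rb^{2N}$, so a single complex amplitude corresponds to two real coordinates, which is precisely what yields $\mathrm{Beta}(1,N-1)$ rather than $\mathrm{Beta}(1/2,(N-1)/2)$ — together with invoking the invariance argument ``column of a Haar-random unitary $=$ uniform unit vector'' correctly. Everything else is a routine computation with Gamma and Beta functions.
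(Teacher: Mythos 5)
Your proof is correct, and it takes a noticeably different path from the paper's. The paper (App.~\ref{matrix elements}, following Haake Ch.~4.9) first passes through GUE: it invokes the fact that eigenvectors of GUE matrices share the distribution of columns of a Haar-random unitary, writes the joint density of the components as a normalised delta function $\propto \delta\bigl(1-\sum_k |c_k|^2\bigr)$, and then computes the marginal $P^{(N,l)}$ by direct integration over the sphere in radial coordinates, doubling the real dimension to handle complex entries; the moments are then simply stated. You instead extract the ``uniform unit vector in $\Cb^N$'' statement directly from left-invariance of Haar measure (which is cleaner than the GUE detour), realise the uniform vector as a normalised complex-Gaussian vector, and identify $p=|g_1|^2/\sum_i|g_i|^2$ as a Gamma ratio, i.e.\ $\mathrm{Beta}(1,N-1)$ — giving the density at once, the moments from the Beta integral $\Eb[p^k]=(N-1)!\,k!/(N+k-1)!$, and the Porter--Thomas limit from the standard $(1-p)^{N-2}\to e^{-Np}$ scaling. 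Your route is more elementary and entirely self-contained, whereas the paper's is anchored in the random-matrix-ensemble picture it sets up in App.~\ref{app:random matrices}; both lead to the same $(N-1)(1-p)^{N-2}$ and the same moments. One small omission you may wish to note explicitly: to justify that a \emph{fixed but arbitrary} entry $|U_{x,0}|^2$ has this law (rather than just the first entry $|U_{0,0}|^2$), you rely on permutation invariance of the uniform measure on the sphere, which you mention only in passing; the paper's formulation makes the analogous point via the ``only invariant is the norm'' remark.
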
 
    We prove this lemma in App.~\ref{matrix elements}. 
    Inserting the expressions Eqs.~\eqref{eq:firstandsecondmoment} for $\mathbb{E}_{U \sim \rm Haar}
    [\abs{\bra{x} U \ket{0} } ^2 ]$ and $\mathbb{E}_{U \sim \rm Haar} [\abs{\bra{x} U \ket{0} } ^4]$, we find
    \begin{equation} 
    \begin{split}
        P_{U \sim \mu} &\left( \abs{\bra{x} U \ket{0} } ^2  > \frac{\alpha (1-\epsilon)
}{N} \right)\\ 
                & \geq (1-\alpha)^2 \frac{N ( N + 1 ) }{2 N^2
    }\frac{(1-\epsilon)^2}{(1+\epsilon)}\geq (1-\alpha)^2
        \frac{(1-\epsilon)^2}{2(1+\epsilon)}\, ,  \notag \\
    \end{split}
\end{equation} 
which completes the proof. 
\end{proof} 

Note that the moments \eqref{eq:firstandsecondmoment}
can alternatively be obtained for both state and unitary 2-designs exploiting Schur-Weyl duality. 
This yields an explicit expression of the $k^{\text{th}}$-moment
operators $M_\mu^k(X)$ as the projector onto the span of the symmetric group on $k$ tensor copies of the Hilbert space $\mathcal{H}$. Moreover, the moments of the output probabilities of state 2-designs are also given by \eqref{eq:firstandsecondmoment}. 
This can be seen similarly using the fact that the expectation value over a state $k$-design is given by the projection onto the $k$-partite symmetric subspace of $\mathcal{H}^{\otimes k }$ \cite{zhu_clifford_2016}.
We note that the output 
distribution \eqref{eq:haar distribution} of a Haar random unitary asymptotically approaches the exponential (Porter-Thomas) distribution. 
This behaviour has already been observed numerically in many different contexts involving pseudo-random operators \cite{gross_evenly_2007,emerson_pseudo-random_2003}, non-adaptive measurement-based quantum computation \cite{brown_quantum_2008-1}, and universal random circuits \cite{boixo_characterizing_2016}. 

\subsection{Applications: a ``recipe'' for quantum speedups} 
\label{sec:applications}

Our anticoncentration theorem \ref{thm:anticoncentration} for approximate unitary (and state) 2-designs leads to a generic ``recipe'' for the identification of quantum circuit families and input states that are hard to simulate classically under plausible complexity-theoretic conjectures, building upon the approach of Refs.~\cite{aaronson_computational_2010,bremner_average-case_2016}. 
The strategy goes in three steps parallel to ingredients (i-iii) of the  proof based on Stockmeyer's algorithm introduced in Sec.~\ref{sec:definitions}.
In the following, we apply this general strategy to a few examples of random circuits, most prominently, universal random circuits. 

\begin{figure} 
    \includegraphics[width=\columnwidth]{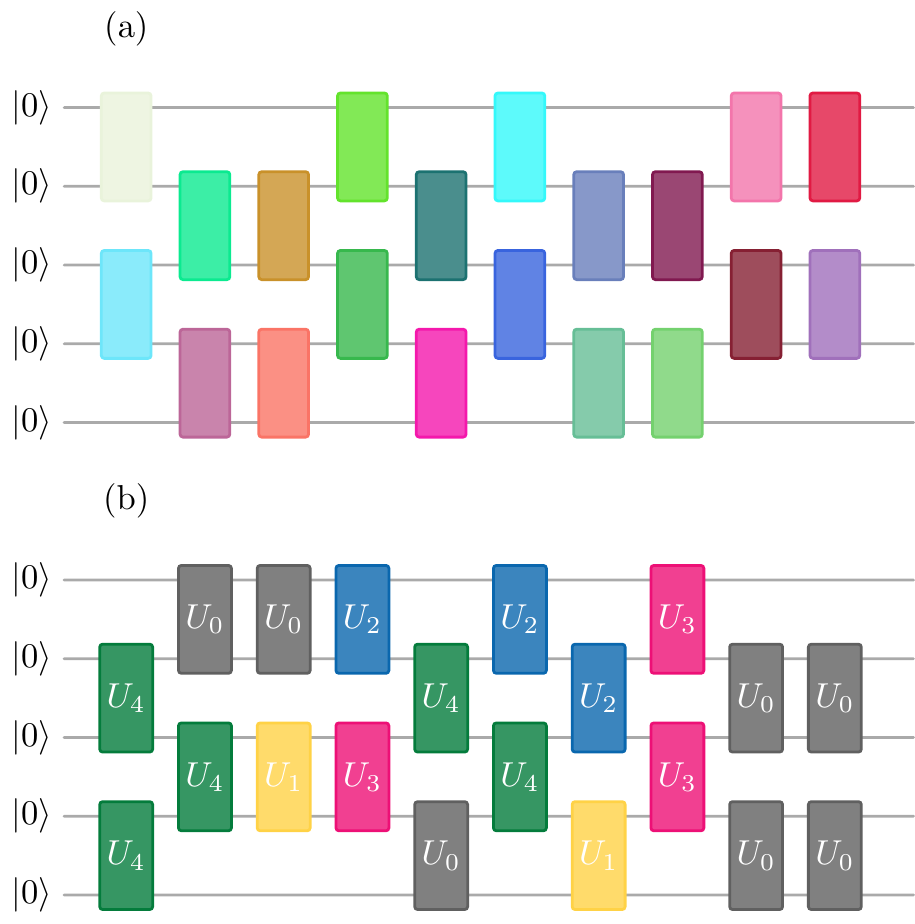}
    \caption{
        Layout of the parallel random circuit families. In each step either the
        even or odd configuration of parallel two-qubit unitaries is applied
        with probability $1/2$. Every two-qubit gate is chosen from the
        respective measure on $U(4)$ -- (a) the Haar measure, (b) the uniform
        distribution on the gate set $G$. 
        Here we depict a five-qubit random instance of depth 10 where in (a)
        the colour choice represents different gates, and in (b) the gate set
        consists
        of 5 two-qubit unitaries $G = \{U_0, U_2, \ldots, U_4\}$. 
        \label{circuit families}
    }
\end{figure} 

\paragraph{Universal random circuits.}

The first example that we also focus on are random quantum circuits constructed from
single- and two-qubit gates, most prominently, the gate set $\mathcal{G}_{\text{BIS}} =
\{ CZ, H, \sqrt{X}, \sqrt{Y}, T \} $ studied by Boixo \emph{et
al.}~\cite{boixo_characterizing_2016,neill_blueprint_2017}. 
In presenting this example we put particular emphasis on the \emph{circuit depth} required to reach a scheme that shows a provable quantum speedup. 
As a first step (i) of the general strategy, the following Corollary establishes that the output distribution of a random circuit formed in a particular fashion from $\mathcal{G}_{\text{BIS}}$ anticoncentrates. 
This holds already in \emph{linear depth}. 

\begin{corollary}[Universal random circuits anticoncentrate]
    \label{circuits corollary} 
    The output probabilities of universal random circuits in one dimension from the following
    two circuit families (illustrated in Fig.~\ref{circuit families}) anticoncentrate in a depth that scales as $O(n \log(1/\epsilon))$ in the sense of Eq.~\eqref{eq:anticoncentration}. 
\begin{itemize}
    \item \emph{Parallel local random circuits}: 
        In each step either the unitary $U_{1,2 } \otimes U_{3,4} \otimes \cdots \otimes U_{n-1,n}$ or the
        unitary $U_{2,3 } \otimes U_{4,5} \otimes \cdots \otimes U_{n -  2,n-1}$ is applied (each with probability $1/2$), with $U_{j,j+1}$ independent
        unitaries drawn from the Haar measure on $U(4)$. (This assumes $n$ is even.)

    \item \emph{Universal gate sets}: Let $G \coloneqq \set{g_i}_{i=1}^m$ with
        each $g_i \in U(4)$ be a universal gate set containing inverses with elements composed of algebraic identities, 
        i.e., a gate set $G$ such
        that the group generated by $G$ is dense in $U(4)$ and satisfying 
        $g_i \in G
        \Rightarrow g_i^{-1} \in G$. 
        In each step either the unitary $U_{1,2 } \otimes U_{3,4} \otimes \cdots \otimes U_{n-1,n}$ or the
        unitary $U_{2,3 } \otimes U_{4,5} \otimes \cdots \otimes U_{n-2,n-1}$ is applied (each with probability $1/2$), with $U_{j,j+1}$ independent
        unitaries drawn uniformly from $G$.

\end{itemize}
\end{corollary}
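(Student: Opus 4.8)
The plan is to deduce the Corollary immediately from Theorem~\ref{thm:anticoncentration}. That theorem already does all of the probabilistic work: \emph{any} relative $\epsilon$-approximate unitary 2-design on $U(N)$, $N=2^n$, anticoncentrates in the sense of Eq.~\eqref{eq:anticoncentration}. So it suffices to show that, for each of the two circuit families, the distribution $\mu_d$ on $U(N)$ induced after circuit depth $d=O(n\log(1/\epsilon))$ is a relative $\epsilon$-approximate unitary 2-design. No separate 1-design argument is needed, since a relative $\epsilon$-approximate 2-design is automatically a relative $\epsilon$-approximate 1-design (cf.\ App.~\ref{app:k k-1 designs}), which is exactly what keeps the right-hand side of Eq.~\eqref{eq:anticoncentration} bounded below by a positive constant; fixing, say, $\alpha=1/2$ and $\epsilon=1/2$ then yields the universal constants $\alpha,\beta$ demanded by the definition of anticoncentration.

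For the first family -- parallel (even/odd brickwork) layers of Haar-random nearest-neighbour two-qubit gates -- the statement we need is exactly the convergence result for 1D local random quantum circuits of Brand\~ao, Harrow and Horodecki~\cite{brandao_local_2016} (see also \cite{harrow_random_2009,gross_evenly_2007}): on a 1D nearest-neighbour layout such circuits form relative $\epsilon$-approximate unitary 2-designs once the depth reaches $O(n\log(1/\epsilon))$, the linear scaling in $n$ being the one anticipated from the ballistic spread of correlations along the chain. Applying Theorem~\ref{thm:anticoncentration} to this $\mu_d$ settles this case.

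For the second family I first reduce to a single-gate statement. What the local-to-global machinery of \cite{brandao_local_2016} feeds on is a lower bound on the spectral gap of the 2-fold moment operator $X\mapsto \tfrac{1}{\abs{G}}\sum_{g\in G} g^{\otimes 2}X(g^\dagger)^{\otimes 2}$ of one gate drawn uniformly from $G$ on two qubits, viewed as a linear map on $\mathcal{L}(\mathbb{C}^4\otimes\mathbb{C}^4)$. Because $G$ is closed under inverses this map is self-adjoint for the Hilbert--Schmidt inner product and has norm at most $1$; on the two-dimensional Haar-invariant subspace spanned by $\id$ and $\mathrm{SWAP}$ it is the identity, and on the orthogonal complement it is a \emph{strict} contraction. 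Indeed, if a nonzero $v$ in that complement had image of the same norm as itself, equality in the triangle inequality would force all $g^{\otimes 2}v(g^\dagger)^{\otimes 2}$, $g\in G$, to coincide, hence $v$ to commute with $W^{\otimes 2}$ for every $W$ in the group generated by $\set{g_1^{-1}g_2 : g_i\in G}$; that group is dense in $U(4)$ (as $G$ is universal), so $v$ commutes with all of $\set{W^{\otimes 2} : W\in U(4)}$, i.e.\ $v$ lies in the invariant subspace -- a contradiction. Hence the gap is a positive constant $\delta_G$. The algebraic-entries hypothesis is what lets one quote the gate-set design bounds of \cite{brandao_local_2016} (and the spectral-gap input behind them) verbatim; for fixed local dimension a positive $\delta_G$ in fact already follows from the compactness argument just sketched, so it mainly serves to land within their setting and to make $\delta_G$ uniform over such gate sets. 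Given a constant local gap, the same local-to-global argument as in the Haar case makes $\mu_d$ a relative $\epsilon$-approximate 2-design in depth $O(n\log(1/\epsilon))$ (with the implicit constant now also depending on $G$), and Theorem~\ref{thm:anticoncentration} again concludes.

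The part that requires genuine care is neither of these reductions but the bookkeeping around the \emph{type} of approximation. Theorem~\ref{thm:anticoncentration} really needs a \emph{relative} $\epsilon$-approximate 2-design, whereas random-circuit design results are frequently phrased with additive (diamond-norm) error, and the additive-to-relative conversion recorded in the Lemma above pays a factor $N^{2k}=2^{4n}$; used bluntly this would inflate the depth from $O(n\log(1/\epsilon))$ to $O(n(n+\log(1/\epsilon)))$ and destroy the linear-in-$n$ claim. The proof must therefore appeal to a version of the convergence argument of \cite{brandao_local_2016} that controls the relevant low-degree moments -- here only $\mathbb{E}_{\mu}\abs{\bra{x}U\ket{0}}^2$ and $\mathbb{E}_{\mu}\abs{\bra{x}U\ket{0}}^4$ -- in relative error directly (equivalently, one that bounds the moment operator in a way that does not degrade on the symmetric subspace); carrying this out while keeping the depth $O(n\log(1/\epsilon))$ is the one genuinely delicate point.
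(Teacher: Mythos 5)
Your proposal follows essentially the same route as the paper: both cite the Brand\~ao--Harrow--Horodecki result and then apply Theorem~\ref{thm:anticoncentration}. The paper's proof is a two-sentence affair that quotes Corollaries~6 and~7 of Ref.~\cite{brandao_local_2016} verbatim, noting that those corollaries already state that both circuit families form \emph{relative} $\epsilon$-approximate unitary $k$-designs in depth $\poly(k)\cdot O(n\log(1/\epsilon))$, specialises to $k=2$, and then invokes Theorem~\ref{thm:anticoncentration}.

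The two places where you go beyond the paper are both places where you are re-deriving material that the paper simply cites. First, your sketch of the local spectral-gap argument for the finite gate set (self-adjointness from closure under inverses, strict contraction off the commutant via density of the generated group, compactness giving a positive constant gap $\delta_G$) is essentially a condensed reconstruction of part of the machinery behind Corollary~7 of \cite{brandao_local_2016}; it is correct, but the paper does not need to reproduce it because the corollary is quoted wholesale. Second, the concern you flag as ``the one genuinely delicate point'' --- that most random-circuit design bounds are additive and the additive-to-relative conversion costs a $N^{2k}$ factor that would spoil linearity in $n$ --- is a legitimate worry in general, but it evaporates here: Corollaries~6 and~7 of \cite{brandao_local_2016} are already phrased for the relative (strong) notion of approximate design, with no diamond-norm-to-relative conversion required. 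The paper's proof takes this for granted; you identified the pitfall correctly but did not need to resolve it, since the source already delivers the bound in the needed form.
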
 

\begin{proof}[Proof of Corollary \ref{circuits corollary}]
    The central ingredient of our proof of Corollary \ref{circuits corollary}
    is the result of Ref.\ \cite{brandao_local_2016}. 
    There, the authors show that the two random circuit families are
    relative $\epsilon$-approximate unitary $k$-designs on $U(2^n)$ in
    depth $\poly(k) \cdot O(n  \log(1/\epsilon))$ (Corollary 6 and 7
    in Ref.\ \cite{brandao_local_2016}).

    Hence, in particular, these random circuits are relative
    $\epsilon$-approximate unitary 2-designs in depth $O(n 
    \log(1/\epsilon))$, i.e., linear in the number of qubits and logarithmic in
    $1/\epsilon$. 
    Applying Theorem \ref{thm:anticoncentration} to the output probabilities
    $ \abs{\bra{x} \mathcal{C} \ket{0}  }^2$ of a random
    circuit $\mathcal{C}$ applied to an initial all-zero state yields the claimed anticoncentration bound for the output probabilities of such circuits. 
\end{proof} 

To prove a quantum speedup using the Stockmeyer technique the second required ingredient is \#P-hardness of strong classical simulation of the output probabilities (ii).
Indeed, since the gate set $\mathcal{G}_{\text{BIS}}$ is universal, the post$\mathcal{F} =$postBQP connection is immediate. 
Boixo \emph{et al.}~\cite{boixo_characterizing_2016} moreover showed
that the output probabilities can be expressed in terms of the imaginary-time
partition function of a random Ising model, suggesting that the average-case
conjecture for random circuits is a natural one (iii). 
It remains to be shown that random universal circuits are both not classically strongly simulable and anticoncentrate in \emph{linear depth} in a one-dimensional setting. 
Lemma~\ref{lem:ClassicalHardness} (below) establishes this is indeed the case for a large class of finite gate sets with efficiently-computable matrix entries (so that they cannot artificially encode solutions to hard problems).
It is an open question whether this can be improved to square-root depth in a two-dimensional setting such as that of Refs.~\cite{boixo_characterizing_2016,neill_blueprint_2017}.


Given two $O(1)$-local gate sets $A$ and $B$, we say that $A$ \emph{exactly synthesises} $B$ if every gate $V\in B$ can
be exactly implemented via a polynomial-time computable constant-size circuit of gates in $A$. 
\begin{lemma}[Hardness of strong classical simulation]
\label{lem:ClassicalHardness} Let $G$ be any finite
    universal gate set with algebraic efficiently computable matrix entries
    that can exactly synthesise either the $\{\euler^{\imun\frac{\uppi}{8}
    X}$, $\euler^{\imun \frac{\uppi}{4}  X \otimes
    X},\textnormal{SWAP}\}$ or  $\{\euler^{\imun \frac{\uppi}{8}  X
    \otimes X},\textnormal{SWAP}\}$. 
    Then, approximating the output probabilities
of $O(n)$-depth circuits of $G$ nearest-neighbour gates in one dimension up to relative error $1/4$ is \#\textnormal{P}-hard.
\end{lemma}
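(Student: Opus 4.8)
The plan is to prove the lemma by establishing that the circuit family $\mathcal{F}_G$ consisting of $O(n)$-depth, one-dimensional, nearest-neighbour circuits built from $G$ (acting on $\ket{0}^{\otimes n}$ with computational-basis read-out) satisfies $\mathrm{post}\,\mathcal{F}_G = \mathrm{postBQP}$, and then invoking the results of Refs.~\cite{Kuperberg15JonesPolynomial,fujii_commuting_2017} exactly as in ingredient~(ii) of Sec.~\ref{sec:definitions}: once $\mathrm{post}\,\mathcal{F}_G = \mathrm{postBQP}$, approximating the output probabilities of $\mathcal{F}_G$ up to relative error $1/4$ is $\#\textnormal{P}$-hard. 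In this strategy the hypothesis that $G$ has algebraic and efficiently computable matrix entries is used only to guarantee that the output probabilities live in the complexity class for which $\#\textnormal{P}$-hardness is the correct statement, so that no hard instance is smuggled in through irrational gate parameters; and the requirement that $G$ \emph{exactly} synthesises the reference gate set is precisely what makes the whole reduction exact, so that a relative-error-$1/4$ estimate of an $\mathcal{F}_G$ output probability is literally a relative-error-$1/4$ estimate of an output probability of the simulated circuit.

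By hypothesis, $G$ exactly synthesises one of the two reference gate sets
$$B_1 = \{\euler^{\imun\frac{\uppi}{8}X},\, \euler^{\imun\frac{\uppi}{4}X\otimes X},\, \textnormal{SWAP}\}, \qquad B_2 = \{\euler^{\imun\frac{\uppi}{8}X\otimes X},\, \textnormal{SWAP}\},$$
so it suffices to show that each of $B_1$ and $B_2$ (on computational-basis inputs and outputs, with arbitrary connectivity realised via $\textnormal{SWAP}$ networks) already generates a $\mathrm{postBQP}$-universal family. Up to a global Hadamard conjugation that folds into the gates, both reference sets are sets of IQP generators together with a $\textnormal{SWAP}$ for routing. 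For $B_1$: the powers of $\euler^{\imun\frac{\uppi}{8}X}$ realise all single-qubit $X$-rotations $\euler^{\imun k\frac{\uppi}{8}X}$ (in particular $\euler^{-\imun\frac{\uppi}{8}X}$ and $\euler^{\pm\imun\frac{\uppi}{4}X}$), the powers of $\euler^{\imun\frac{\uppi}{4}X\otimes X}$ realise $\euler^{-\imun\frac{\uppi}{4}X\otimes X}$, and $\textnormal{SWAP}$ provides the routing; these synthesise the generators underlying the Bremner--Jozsa--Shepherd proof that $\mathrm{postIQP} = \mathrm{postBQP}$ \cite{BremnerOld}, so $B_1$-circuits form a $\mathrm{postBQP}$-universal family. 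For $B_2$: one has $\euler^{\imun\frac{\uppi}{4}X\otimes X} = (\euler^{\imun\frac{\uppi}{8}X\otimes X})^2$, while the single-qubit $X$-rotations that $B_2$ cannot produce directly are recovered through the standard ancilla-plus-postselection gadget (equivalently, a degree reduction of the diagonal part) used for two-local IQP/$X$-programs in Refs.~\cite{bremner_average-case_2016,bremner_achieving_2017,Supremacy}, again yielding a $\mathrm{postBQP}$-universal family.

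It remains to control the depth and the geometry, which is the actual content of the lemma. Here I would invoke the commuting structure of IQP: all diagonal gates of an IQP circuit mutually commute, so their order is irrelevant and they may be scheduled greedily along a line, and using $\textnormal{SWAP}$ networks to bring each required pair of qubits adjacent exactly once, any $n$-qubit IQP circuit of the above type can be realised with nearest-neighbour two-qubit gates in depth $O(n)$ -- this is precisely the linear-depth, one-dimensional layout of Ref.~\cite{Supremacy}. Composing this layout with the exact synthesis of each $B_i$-gate by a constant-size circuit over $G$ multiplies the depth only by a constant and adds only $O(1)$ ancilla rows, thus preserving the $O(n)$-depth, one-dimensional, nearest-neighbour structure. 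Hence $\mathrm{post}\,\mathcal{F}_G = \mathrm{postBQP}$, and by Refs.~\cite{Kuperberg15JonesPolynomial,fujii_commuting_2017} approximating the output probabilities of $\mathcal{F}_G$ up to relative error $1/4$ is $\#\textnormal{P}$-hard, which is the claim.

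I expect the main obstacle to be twofold. First, keeping the depth linear: a generic ``universal finite gate set $\Rightarrow$ $\mathrm{BQP}$'' argument yields only polynomial depth, so the whole reduction has to be channelled through IQP circuits, whose commuting structure is what permits the $O(n)$-depth one-dimensional layout of Ref.~\cite{Supremacy}, rather than through arbitrary quantum circuits. Second, and most delicately, synthesising effective single-qubit rotations from the bare two-local set $B_2$: every gate of $B_2$ is, up to a $\textnormal{SWAP}$-induced permutation, diagonal in the $X$ eigenbasis, so no single-qubit coherence can be generated without postselected measurements, and one has to both design the postselection gadget that extracts the missing rotations and verify that this gadget -- together with the final transfer of $\#\textnormal{P}$-hardness -- respects the relative error $1/4$, which is exactly why exact (rather than approximate) synthesis and an exact embedding are needed throughout.
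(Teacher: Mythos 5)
Your proposal follows essentially the same route as the paper's proof: reduce through exact synthesis to the dense IQP generators $\euler^{\imun k\frac{\uppi}{8}X}$, $\euler^{\imun k\frac{\uppi}{8}X\otimes X}$ (with SWAP for routing), establish postselection-universality of those groups (the paper cites the gate-teleportation gadget~\cite{GateTeleportation}, you invoke the equivalent ancilla-plus-postselection degree-reduction), import the linear-depth one-dimensional SWAP-network layout of dense IQP circuits from Ref.~\cite{Supremacy}, and conclude $\#$P-hardness of relative-error-$1/4$ approximation from the postselection result, with the constant overhead of exact synthesis preserving the $O(n)$ depth. The differences are cosmetic (e.g.\ citing \cite{Kuperberg15JonesPolynomial,fujii_commuting_2017} where the paper cites \cite{goldberg_complexity_2014,fujii_commuting_2017}), and your closing discussion of the two genuine subtleties — keeping the depth linear by channelling everything through the commuting IQP structure, and building single-qubit rotations out of the purely two-local set $B_2$ via postselected gadgets — accurately mirrors where the paper's proof also leans on cited technology rather than supplying full details.
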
 
Let us highlight that Lemma~\ref{lem:ClassicalHardness} applies to many
well-studied universal gate sets, including $\mathcal{G}_{\text{BIS}}$, the ubiquitous Clifford+$T$
\cite{boykin_new_2000},
Hadamard+controlled-$\sqrt{Z}$ \cite{kitaev2002classical}, Hadamard+Toffoli
\cite{Shi:2003:BTC:2011508.2011515,Paetznick13UFTQC_Transversal} and others
\cite{Shor:1996:FQC:874062.875509,Knillquant-ph/9610011,Knill365}. 
Interestingly, Lemma~\ref{lem:ClassicalHardness} holds
also for non-universal gate sets, though the latter may not always anticoncentrate.    
We now prove Lemma~\ref{lem:ClassicalHardness}. 

\begin{proof}[Proof of Lemma~\ref{lem:ClassicalHardness}]
We begin by showing that both given target gate sets can exactly implement subgroups of the 2-qubit dense IQP circuits of Ref.\ \cite{bremner_average-case_2016}. 
    Specifically, the first gate set gives us the group $\mathcal{G}_1$ generated by $\exp\left(\imun\frac{\uppi}{8}X_i\right)$, $\exp\left(\imun\frac{\uppi}{4}X_iX_j\right)$ gates acting on a complete graph, while the second gives the group generated by arbitrary long range $\exp\left(\imun\frac{\uppi}{8}X_iX_j\right)$ gates. 
    In both cases, long range interactions are obtained via the available SWAPs. 

Next, we show that, like the circuits in Ref.\ \cite{bremner_average-case_2016},
    both $\mathcal{G}_1$ and $\mathcal{G}_2$ are universal under post-selection.
    Indeed, both can adaptively implement a single-qubit Hadamard via gate
    teleportation \cite{GateTeleportation} (see also
    \cite{RaussendorfPhysRevA.68.022312,BremnerOld}), and non-adaptively, if we
    can post-select. 
    The claim follows from the universality of known gate sets\mot\cite{boykin_new_2000,kitaev2002classical}.

Last, due to Refs.\ \cite{goldberg_complexity_2014,fujii_commuting_2017},
the output probabilities of post-selected universal quantum circuits  are $\#$P-hard to approximate up to multiplicative error $\sqrt{2}$ (relative error\mot$1/4$). The previous fact implies that this holds for the dense IQP circuits in $\mathcal{G}_1$ and $\mathcal{G}_2$. Furthermore,  $n$-qubit dense IQP circuit can be  exactly implemented in $O(n)$ depth on a 1D nearest-neighbour architecture using  SWAP gates \mot\cite[Lemma~6]{Supremacy}. It follows that the output probabilities of linear-depth circuits in $\mathcal{G}_1$ or $\mathcal{G}_2$ are $\#$P-hard to approximate. This readily extends to any circuit family that can exactly synthesise either of the former, since this process only introduces a constant depth overhead.
\end{proof}

We do not know whether Lemma~\ref{lem:ClassicalHardness}
extends to arbitrary gate sets since applying some Solovay-Kitaev type gate synthesis algorithms
\cite{Kitaev97_SolovayKitaevEarly,NielsenChuang,kitaev2002classical} should introduce a polynomial overhead factor in depth. 
This is because due to Chernoff-Hoeffding's bound \#P-hard-to-approximate quantum probabilities need to be (at least) super-polynomially small, for otherwise they could be inferred in quantum polynomial time by mere sampling, which is not believed possible\mot\cite{Bennet97StrenghtsWeaknnesses_QC,Aaronson-ProcRS-2005}. 
To approximate such small probabilities via the Solovay-Kitaev algorithm 
requires $\Omega(n^\alpha)$ overhead for some $\alpha > 0$ assuming the
counting exponential time hypothesis \cite{dell_exponential_2014}.   
These issues are closely related to the open question of whether or not the
power of post-selected quantum circuits is gate set independent given some
$\tilde{O}(n^\alpha)$ depth bound\mot\cite{Kuperberg15JonesPolynomial}.

\begin{table*}
\centering
{\small
\begin{tabular}{|c|c|| c|c|c|}
\hline
Circuit family & Input state & (State) 2-design & Worst-case hardness & Average-case conjecture in   
\\
$\mathcal{F}$& $\ket{\psi_0} $ &  property& (post$\mathcal{F} = $ postBQP) & terms of universal quantity\\ \hline \hline

$\mathcal{G}_{\text{BIS}}$ & $\ket{0}^{\otimes n}$ & \cite{brandao_local_2016} & \cite{boixo_characterizing_2016,boykin_new_2000} & Ising part.\ func./Jones polyn.\\
\hline
Diagonal unitaries & $\ket{+}^{\otimes n}$ & \cite{nakata_generating_2014} & \cite{fujii_commuting_2017,Kuperberg15JonesPolynomial} & Ising partition function\\\hline
Clifford circuits & $(T \ket{0})^{\otimes n} $ & \cite{dankert_exact_2009}& \cite{jozsa_classical_2013} & Ising partition function\\
\hline 
\end{tabular}
\caption{ Examples of random circuit families that exhibit a provable quantum speedup up to total-variation distance errors. 
	\label{tab:recipe}
}
}
\end{table*}


\paragraph{Commuting circuits.}

As a second example, we consider circuits of diagonal unitaries composed of controlled-phase type one- and two-qubit gates of the form $\diag(1, 1, 1, \ee^{\ii \phi}) $, and an input state $\ket{+}^{\otimes n}$. 
By the result of Ref.~\cite{nakata_generating_2014} this gate set yields a state 2-design if the phases are picked from discrete sets ($\{0,\pi\}$ for the two-qubit gates, and $\{ 0, 2\pi/3, 4\pi/3 \}$ for the single qubit gates), and thus satisfies anticoncentration in the sense of Eq.~\ref{eq:anticoncentration} (i).
Adding the $S$-gate to the gate set and measuring all qubits in the $X$-basis we obtain post$\mathcal{F} =$ postBQP by Refs.~\cite{fujii_commuting_2017,bremner_average-case_2016} as IQP circuits are an instance of diagonal unitaries (ii). 
Here, we have used the fact that adding the $S$-gate and post-selection gives us access to the universal gate set Clifford + $\pi/12$ \cite{cui_universal_2015,bocharov_efficient_2015}. 
Again, the average-case conjecture can be phrased in terms of an Ising partition function (iii). Last, the circuits can be implemented in linear depth if either long-range interactions or nearest-neighbour SWAPs are allowed\mot\cite{Supremacy}.


\paragraph{Clifford circuits with product-state inputs.} A similar argument can be applied to Clifford circuits which are known to be an exact 2-design \cite{dankert_exact_2009,cleve_near-linear_2015} applied to magic input states. 
By the result of Ref.~\cite{koenig_how_2014} an arbitrary element of the Clifford group in $2^n$ dimensions can be decomposed into $O(n^3)$ elementary Clifford gates. 
The result of Ref.~\cite{cleve_near-linear_2015} even achieves an exact 2-design using only quasi-linearly many one- and two-qubit Clifford gates. 
The post$\mathcal{F} = $postBQP for this case is due to Ref.~\cite{jozsa_classical_2013}. 
We summarise these examples in Table~\ref{tab:recipe}.


\section{Anticoncentration of quenched many-body dynamics}\label{sec:ising}

In this section, we investigate anticoncentration of a particular type of quantum simulation scheme exhibiting a quantum speedup based on the architecture recently introduced in Ref.\ \cite{Supremacy} (specifically, architectures I-II therein). 
These implement quenched (constant-time) dynamical evolutions \cite{1408.5148,ngupta_Silva_Vengalattore_2011} under many-body Ising Hamiltonians on the square lattice whose graph we denote by $\mathcal{L}=(V,E)$. 
More specifically, we consider a particular variant of such a setting and prove both anticoncentration for its output distribution and the hardness of strongly classically simulating it. 
In virtue of the Stockmeyer-type argument presented in Sec.~\ref{sec:definitions} this gives rise to a new quantum speedup result for this architecture.

\subsection{A new quantum quench architecture}
\label{sec:qac}

Let us start by reviewing the idea of the quench architectures introduced in Ref.~\cite{Supremacy}. 
There, the computation in the circuit model amounts to, first, preparing a  product state vector $\ket{\psi_\beta} = \bigotimes_{i \in V} (\ket{0} + \ee^ {\ii \beta_i} \ket{1})/\sqrt{2}$ with $\beta_i$ chosen randomly from a finite set of angles; 
second, implementing a constant-time evolution $U \coloneqq \euler^{-\imun H}$ under a nearest-neighbour translation-invariant Ising Hamiltonian
\begin{align}\label{eq:HardIsingModels}
H:= \sum_{(i,j)\in E}J_{i,j} Z_iZ_j - \sum_{i\in V} h_i Z_i,
\end{align} 
and, third, measuring all qubits in the $X$ basis. 
Ref.\ \cite{Supremacy} proved that quantum simulations of this form cannot be efficiently classically sampled from up to constant total-variation distance assuming variants of the average-case and anticoncentration conjecture.  As supporting evidence for the anticoncentration conjecture, Ref.~\cite{Supremacy} built a link to the anticoncentration of certain families of universal random circuits and provided  numerical data. 

We note that these architectures are closely related to that of Gao \emph{et al.} \cite{gao_quantum_2017}. The similarities and differences are spelled out in Ref.~\cite{Supremacy}.

We now introduce a new variant of such a quantum quench architecture, named $\quac$ and illustrated in Fig.\ \ref{fig:QuenchArchitecture} that produces provably hard-to-approximate anticoncentrated distributions which cannot be classically sampled from if an average-case conjecture holds and the Polynomial Hierarchy does not collapse. 
The architecture picks a uniformly-random input product state from a finite family $\mathcal{S}_{\mathrm{ac}}=\{\ket{\psi_\beta}\}_\beta$ and lets it evolve under a nearest-neighbour translation-invariant Hamiltonian $H_\mathrm{ac}$ (defined below). 

Specifically, the architecture uses $n(m):=m(2m+1)$ qubits, arranged in an $m$-row  $(2m+1)$-column square lattice.  
Boundary qubits on even-rows are initialised on  $\ket{0}$ or $\ket{1}$ uniformly at random. 
The remaining ones are divided in two groups, named ``blue'' and ``yellow'', using a lattice 2-colouring that places no blue qubit on the top-left and top-right columns. Blue qubits  are  initialised on $(\ket{0}+\ket{1})/\sqrt{2}$;
yellow ones  on  $\euler^{-\imun k_i\uppi  Z_i/8}\ket{+}$   with uniformly-random $k_i\in\{0,1,2,3\}$. 

Next, the prepared state evolves under a translation-invariant Ising Hamiltonian $H_{ac}$. Letting $[i,j]$ denote the qubit on the $i$-th row and $j$-th column lattice (in left-to-right top-to-bottom order), the latter reads
\begin{align}\label{eq:AntiConHamiltonian}
H_{\textrm{ac}}&{=}\sum_{\substack{i \leq k,\, j \leq l \\([i,j],[k,l])\in E}} \tfrac{\uppi}{4} \delta_{i,j}^{k,l} Z_{[i,j]}Z_{[k,l]} - \sum_{ v \in V} \tfrac{\uppi}{4} \mathrm{deg}_\mathcal{I}(v) Z_v,\\
\delta_{i,j}^{k,l}&{:=}\begin{cases}
0 &\textnormal{if $(i \neq k)\wedge\left(j=0\bmod 4\right) \wedge ([i,j]$ is blue)},\notag\\
0 &\textnormal{if $(i\neq k)\wedge\left(j=2\bmod 4\right) \wedge ([i,j]$ is yellow)},\notag\\
1 & \textnormal{otherwise.}
\end{cases}
\end{align}
Above, $\delta_{i,j}^{k,l}$ is the indicator function of the edge set $E_\mathcal{I}$ of a (4,2)-periodic interaction sub-lattice $\mathcal{I}=(V,E_\mathcal{I}) \subset \mathcal{L}$ (Fig.\mot\ref{fig:QuenchArchitecture}) and $\mathrm{deg}_\mathcal{I}(v)$ is the degree of $v\in V$ in $\mathcal{I}$. It is easily seen  that $\mathcal{I}$ is a brickwork pattern of 2-square-cells with closed boundaries (Fig.\mot\ref{fig:QuenchArchitecture}). The net effect of the dynamics is to implement a controlled-Z gate on every pair of  neighbouring qubits in $\mathcal{I}$ before all qubits are measured in the $X$ basis.

In what follows the central quantity will be the probability distribution defined by the probabilities
\begin{equation}\label{eq:QAC_Probability}
q_\mathrm{ac}(x,\beta)\coloneqq \left\lvert\langle x| H^{\otimes n}\euler^{-\imun H_\mathrm{ac}}\ket{\psi_\beta}\right\rvert^2 / |\mathcal{S}_{\mathrm{ac}}|
\end{equation}
of measuring the outcomes $x$ after picking $\ket{\psi_\beta}$ and evolving it under $H_{\mathrm{ac}}$ for unit time.
We also let $x_R$ be $x$'s sub-string of rightmost column's outcomes in the square lattice, and $x_L:=x-x_R$ be its set-theoretic complement.

\begin{figure}[t]
\centering
\includegraphics[width=0.8\linewidth]{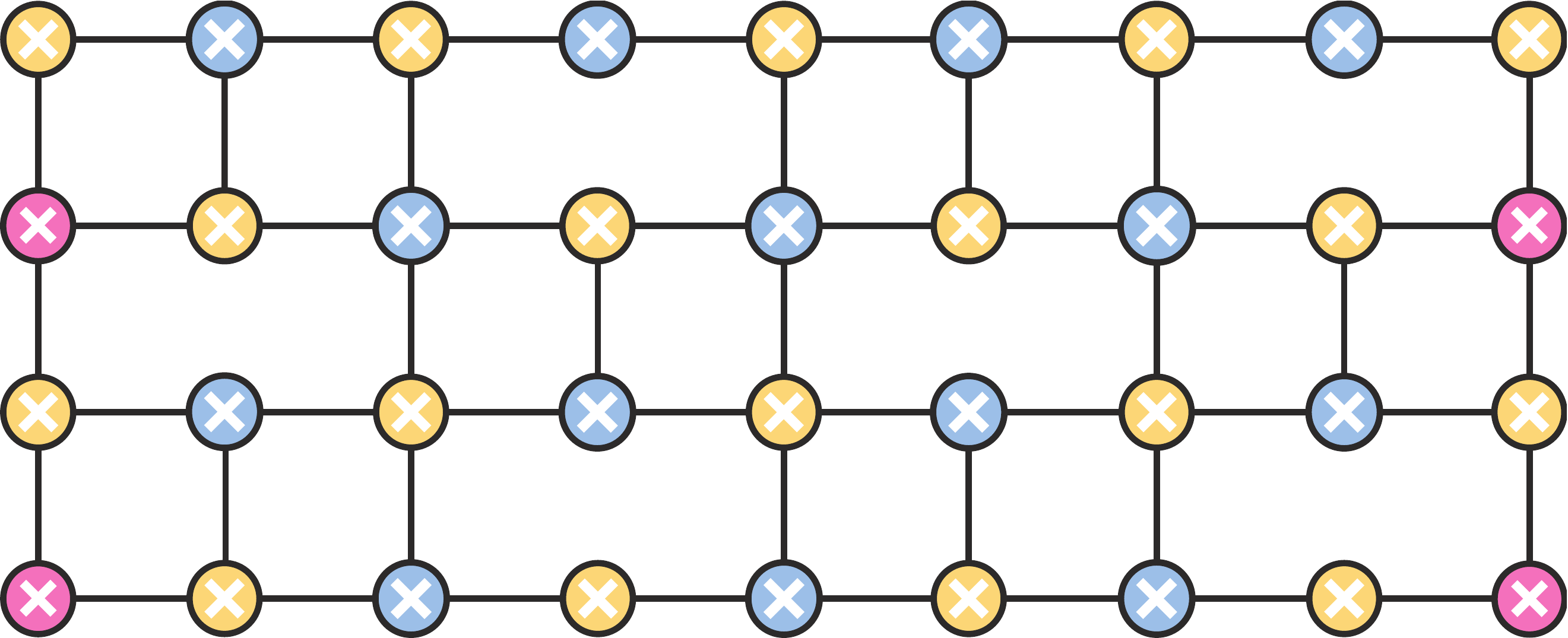}
\caption{Quantum quench architecture. Circles denote qubits, lines denote interactions. Blue sites are initialised on $|+\rangle$; pink ones on $\ket{0}$ or $\ket{1}$ at random; yellow ones  on  $\euler^{-\imun k_i\uppi  Z/8}\ket{+}$   with  random $k_i\in\{0,1,2,3\}$. The Hamiltonian evolution (\ref{eq:AntiConHamiltonian}) implements  CZ gates on connected qubit pairs. Qubits are measured 
in the $X$ basis.}
\color{lightgray} \hrule \black
\label{fig:QuenchArchitecture}.
\vspace{-15pt}
\end{figure}


\subsection{Anticoncentration and classical hardness of $\mathcal{Q}_{\mathrm{ac}}$}

Our second main result from which anticoncentration and classical hardness follow, shows that this quantum quench architecture is closely related to the ``dense'' IQP circuit family of Ref.\ \cite{bremner_average-case_2016} consisting of circuits of $\euler^{\imun \theta_i \uppi X_i}$, $\euler^{\imun \theta_{i,j} \uppi X_iX_j}$ gates acting at arbitrary pairs of qubits, with $\theta_i, \theta_{i,j}$ chosen fully and uniformly at random from $\{k \uppi/8: k \in \{0,\ldots, 7\}\}$. 
Dense IQP circuits form a commutative finite group under multiplication $\mathcal{G}_{\textrm{IQP}}$ with Haar measure $\mu_{\textrm{IQP}}$. 
The implementation of a dense IQP circuit proposed in \cite{bremner_average-case_2016} requires a fully-connected architecture, and the enactment of $\Theta(m^2)$ long-range gates for $m$-qubits in average. 
Here, we show that our constant-depth nearest-neighbour  architecture $\quac$ implements \emph{exact} sampling over dense IQP circuits with a linear $(2m+1)$ overhead-factor in qubit number.

\begin{theorem}[Quantum quench architecture]\label{lemma:IQPSampling}
For $n(m)=m(2m+1)$ qubits, the output probability distribution $q_\mathrm{ac}$ of architecture $\quac$ fulfils
\begin{equation}
q_\mathrm{ac}(x_L|\beta)=\frac{1}{2^{n-m}},\,\,\,\,q_\mathrm{ac}(x_R|x_L,\beta)=|\langle x_R|V_{x_L,\beta}\ket{0}|^2\notag 
\end{equation}
for  some $x_L,\beta$-dependent $m$-qubit dense IQP circuit  $V_{x_L,\beta}\in\mathcal{G}_{\textnormal{IQP}}$ such that $\mathrm{Pr}_{(x_L,\beta)\sim q_\mathrm{ac}}(V_{x_L,\beta})=\mu_{\textnormal{IQP}}(V_{x_L,\beta})$.
\end{theorem}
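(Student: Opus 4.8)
The plan is to trace the quench evolution layer by layer and show that, after the $X$-basis measurement, what remains on the rightmost column is exactly a dense IQP circuit applied to $\ket{0}^{\otimes m}$, with the induced randomness matching $\mu_{\textnormal{IQP}}$. First I would fix the combinatorial picture: the interaction sublattice $\mathcal{I}$ is a brickwork of $2$-square cells, and the Hamiltonian $H_{\mathrm{ac}}$ of Eq.~\eqref{eq:AntiConHamiltonian} — with the single-qubit fields tuned to $\tfrac{\uppi}{4}\mathrm{deg}_\mathcal{I}(v)$ — enacts precisely a $CZ$ on each edge of $\mathcal{I}$ (the field terms cancel the local phases so that $\euler^{-\imun H_{\mathrm{ac}}}=\prod_{(u,v)\in E_\mathcal{I}} CZ_{u,v}$ up to global phase). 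This is the standard ``cluster-state Hamiltonian'' computation and is the routine part. The upshot is that $\quac$ prepares a graph state on $\mathcal{I}$ with non-uniform local bases (the $\ket{0}/\ket{1}$ boundary qubits, the $\ket{+}$ blue qubits, and the $\euler^{-\imun k_i\uppi Z/8}\ket{+}$ yellow qubits), then measures everything in $X$.

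Next I would invoke measurement-based computation: measuring all but the last column in the $X$ basis teleports the blue/yellow/boundary columns' ``input'' through the brickwork, leaving on the rightmost column an $m$-qubit state of the form $V\ket{0}^{\otimes m}$ (or $V$ times a fixed computational basis vector, absorbed into $V$), where $V$ is built out of (i) the $CZ$ gates internal to the last column and to the bridges feeding it, (ii) the Hadamard-type wire operations generated by the $X$-measurements, and (iii) the diagonal $Z$-rotation phases $\euler^{-\imun k_i\uppi Z/8}$ carried by the yellow qubits, together with measurement-outcome-dependent Pauli byproducts. The key structural claim is that conjugating these $Z$-rotations and $CZ$'s through the teleportation Hadamards turns them into $X$-type rotations $\euler^{\imun\theta_i\uppi X_i}$ and $\euler^{\imun\theta_{i,j}\uppi X_iX_j}$ on the output register, i.e.\ $V=V_{x_L,\beta}\in\mathcal{G}_{\textnormal{IQP}}$ — this is exactly the ``dense IQP in a 1D/brickwork layout'' construction of Ref.~\cite{Supremacy}, adapted to the $(4,2)$-periodicity that is engineered precisely so the right angles $\{k\uppi/8\}$ land on the right qubit pairs. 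The byproduct Paulis either commute through to relabel $x_R$ or, being $X$-type on an IQP circuit that starts from $\ket{0}$... here one must be slightly careful, but $X$-byproducts on the output just flip outcome bits and $Z$-byproducts commute into the diagonal part, so $|\langle x_R|V_{x_L,\beta}\ket{0}|^2$ is the correct conditional probability.

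For the two marginal statements: $q_\mathrm{ac}(x_L|\beta)=2^{-(n-m)}$ follows because every qubit outside the last column is effectively measured in a basis unbiased with respect to its state after the $CZ$ layer — concretely, an $X$-measurement on a qubit that has been acted on by at least one $CZ$ (equivalently, entangled into the graph state) has a uniform marginal, and the $2$-colouring/boundary conditions guarantee no qubit in $x_L$ is left a product $\ket{0}/\ket{1}$ unmeasured-against; so the $n-m$ outcomes in $x_L$ are fair coins. Then $q_\mathrm{ac}(x_R|x_L,\beta)$ is whatever is left, giving the stated conditional. Finally, the distributional claim $\mathrm{Pr}_{(x_L,\beta)\sim q_\mathrm{ac}}(V_{x_L,\beta})=\mu_{\textnormal{IQP}}(V_{x_L,\beta})$: each of the $\Theta(m^2)$ IQP angles $\theta_{i,j},\theta_i$ in $V_{x_L,\beta}$ is a fixed affine function (mod $1$, in units of $\uppi$) of the independent uniform data $k_i\in\{0,1,2,3\}$ (yellow phases), the boundary bits, and the measurement outcomes $x_L$ — and one checks this affine map pushes the uniform product measure on $(k_i, x_L, \text{boundary})$ forward onto the uniform measure on $\{k\uppi/8: k\in\{0,\dots,7\}\}$ for each gate, independently; a mod-$2$ (or mod-$8$) rank/linear-algebra argument over $\mathbb{Z}_8$ shows the combined map is a bijection-after-marginalising onto $\mathcal{G}_{\textnormal{IQP}}$ with the right multiplicities.

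The main obstacle I expect is bookkeeping, not any single hard idea: correctly tracking how the $X$-measurement byproduct operators and the teleportation Hadamards conjugate the yellow $Z_i$-rotations and the inter-column $CZ$'s into genuinely $X$-type dense-IQP generators with the promised angle distribution, while simultaneously verifying that the $(4,2)$-periodic placement of the blue/yellow sites is exactly what makes the $\{k\uppi/8\}$ angles appear on all $\binom{m}{2}$ output pairs (the ``dense'' requirement) rather than only on neighbours. In other words, the crux is checking that the very specific lattice $2$-colouring and the periodicity pattern $\delta_{i,j}^{k,l}$ have been reverse-engineered to reproduce the Bremner–Montanaro–Shepherd dense IQP family \cite{bremner_average-case_2016} on the last column, and that the overhead is the claimed linear $(2m+1)$ factor; most of this can be cited from or patterned on Ref.~\cite{Supremacy}, with the brickwork geometry doing the work.
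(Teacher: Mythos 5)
Your proposal follows essentially the same route as the paper: identify $\euler^{-\imun H_\mathrm{ac}}$ as the $CZ$-layer that builds a graph state on $\mathcal{I}$, use $X$-teleportation/MBQC to compile the 2D quench into a 1D brickwork circuit on the rightmost column, propagate Hadamards so the surviving gates are $X$-type IQP rotations, and invoke the lemma from Ref.~\cite{Supremacy} that the SWAP-laden brickwork makes every pair of logical qubits meet exactly once (this is what delivers ``dense'' rather than nearest-neighbour IQP). The paper organises the bookkeeping you flag via explicit ``odd/even'' circuit gadgets that give an $m$-layered brickwork normal form; that is a presentational choice, not a different idea.

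Where you genuinely diverge is in the two probabilistic claims. For $q_\mathrm{ac}(x_L|\beta)=2^{-(n-m)}$, your stated reason---``an $X$-measurement on a qubit that has been acted on by at least one $CZ$ has a uniform marginal''---is false as a general principle (e.g., a qubit $CZ$-coupled only to a $\ket{0}$ neighbour keeps whatever $X$-bias it started with); the correct reason, which the paper uses, is the $X$-teleportation property: when a qubit in an arbitrary state is $CZ$-coupled to a fresh $\ket{+}$ and then measured in the $X$ basis the outcome is uniform, and this applies layer by layer along the MBQC flow. For $\mathrm{Pr}_{(x_L,\beta)}(V_{x_L,\beta})=\mu_{\textnormal{IQP}}$, you propose a rank/linear-algebra argument over $\mathbb{Z}_8$ showing the affine map from $(k_i,x_L,\text{boundary bits})$ to the $\Theta(m^2)$ IQP angles is a uniform surjection. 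That can be made to work but is substantially heavier than what the paper does: the paper observes that the bulk already samples uniformly from $\mathcal{G}_{\textnormal{IQP}}$ (Ref.~\cite{Supremacy}) and that the leftover gates are either elements of $\mathcal{G}_{\textnormal{IQP}}$ (which leave the Haar measure of the finite abelian group invariant under multiplication) or terminal Pauli $Z$'s (which drop out of $X$-basis statistics). The group-invariance argument sidesteps the $\mathbb{Z}_8$ rank bookkeeping entirely and is worth adopting; your affine-map route, if pursued, would also need care that the angle contributions from $x_L$, from the random yellow $k_i\in\{0,\dots,3\}$, and from the $Z_i^{x}$ byproducts actually tile all of $\{0,\dots,7\}$ independently across pairs, which is exactly the step you'd rather get for free from group structure.
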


Before we turn to proving this theorem, let us state the two Corollaries important for us, namely, that both the anticoncentration result and the simulability result proven for dense IQP circuits in Ref.~\cite{bremner_average-case_2016} carry over to the quench architecture $\mathcal{Q}_{\text{ac}}$.

\begin{corollary}[Anticoncentration from quenched dynamics]\label{thm:AntiConOfQuench} The distribution $q_\mathrm{ac}$ (\ref{eq:QAC_Probability}) of the  quench architecture $\quac$ described below satisfies $q_\mathrm{ac}(\beta)=1/|\mathcal{S}_{\mathrm{ac}}|$ and
\begin{equation}
\mathrm{Pr}_{\beta\sim q_\mathrm{ac}} \left( q_\mathrm{ac}(x|\beta)\geq \frac{1}{2N}\right)\geq \frac{1}{12}.\label{eq:AntiConcentrationOfQuench}
\end{equation}
\end{corollary}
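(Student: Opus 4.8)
The plan is to reduce the claim to the anticoncentration of dense IQP circuits established in Ref.~\cite{bremner_average-case_2016}, using the structural identity of Theorem~\ref{lemma:IQPSampling} as the bridge. The identity $q_\mathrm{ac}(\beta)=1/|\mathcal{S}_{\mathrm{ac}}|$ is the easy part: from the definition \eqref{eq:QAC_Probability} one has $\sum_x q_\mathrm{ac}(x,\beta)=|\mathcal{S}_{\mathrm{ac}}|^{-1}\sum_x\abs{\langle x|H^{\otimes n}\euler^{-\imun H_\mathrm{ac}}\ket{\psi_\beta}}^2=|\mathcal{S}_{\mathrm{ac}}|^{-1}$, since $H^{\otimes n}\euler^{-\imun H_\mathrm{ac}}\ket{\psi_\beta}$ is a unit vector and $\{\ket{x}\}$ an orthonormal basis, so that $\beta$ is drawn uniformly.

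The first real step is to push the anticoncentration threshold through the chain rule supplied by Theorem~\ref{lemma:IQPSampling}. Writing $x=(x_L,x_R)$ with $x_R$ the $m$ rightmost-column outcomes and $x_L$ the remaining $n-m$ bits,
\begin{equation*}
q_\mathrm{ac}(x\mid\beta)=q_\mathrm{ac}(x_L\mid\beta)\,q_\mathrm{ac}(x_R\mid x_L,\beta)=\frac{1}{2^{n-m}}\,\abs{\langle x_R|V_{x_L,\beta}\ket{0}}^2 ,
\end{equation*}
so that, with $N=2^n$, the event $\{q_\mathrm{ac}(x\mid\beta)\ge 1/(2N)\}$ coincides with
\begin{equation*}
\left\{\ \abs{\langle x_R|V_{x_L,\beta}\ket{0}}^2\ \ge\ \frac{2^{n-m}}{2^{n+1}}\ =\ \frac{1}{2\cdot 2^m}\ \right\} ,
\end{equation*}
i.e.\ the $m$-qubit IQP output probability of the fixed string $x_R$ must exceed one half of its mean $2^{-m}$.

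Next I would average over the randomness. Because $q_\mathrm{ac}(x_L\mid\beta)=2^{-(n-m)}$ does not depend on $\beta$, the pair $(x_L,\beta)$ is distributed as a product of uniform distributions under $q_\mathrm{ac}$, and Theorem~\ref{lemma:IQPSampling} then certifies that the induced random circuit $V_{x_L,\beta}$ is exactly $\mu_{\textnormal{IQP}}$-distributed. Combining this with the anticoncentration of dense IQP circuits from Ref.~\cite{bremner_average-case_2016} --- which, via a second-moment bound fed into the Paley--Zygmund inequality at $\alpha=1/2$ exactly as in the proof of Theorem~\ref{thm:anticoncentration}, gives $\mathrm{Pr}_{V\sim\mu_{\textnormal{IQP}}}\bigl(\abs{\langle y|V\ket{0}}^2\ge 1/(2\cdot 2^m)\bigr)\ge 1/12$ for any string $y$ --- yields the asserted bound \eqref{eq:AntiConcentrationOfQuench}.

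I expect the main obstacle to be the distributional bookkeeping rather than any analytic estimate: one must make sure that the randomness entering \eqref{eq:AntiConcentrationOfQuench} is precisely the randomness (over the preparation angles $\beta$ together with the uniformly random non-final outcomes $x_L$) under which Theorem~\ref{lemma:IQPSampling} guarantees $V_{x_L,\beta}\sim\mu_{\textnormal{IQP}}$, and that the IQP anticoncentration bound is applied to an \emph{arbitrary} fixed outcome $y=x_R$ and not merely $y=0^{m}$ --- which is legitimate because $\mu_{\textnormal{IQP}}$ is invariant under relabelling computational-basis outcomes by $X$-strings. Everything else is the normalisation identity and the chain-rule rescaling above.
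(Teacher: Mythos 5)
Your proof follows essentially the same route as the paper's: use Theorem~\ref{lemma:IQPSampling} to factor $q_\mathrm{ac}(x\mid\beta)$ through the uniform $x_L$-marginal and the $\mu_{\textnormal{IQP}}$-distributed circuit $V_{x_L,\beta}$, translate the $1/(2N)$ threshold into the $1/2^{m+1}$ threshold for the $m$-qubit IQP output probability, and invoke the dense-IQP anticoncentration bound of Ref.~\cite{bremner_average-case_2016}. The only small imprecision is the claim that the $1/12$ bound follows ``exactly as in the proof of Theorem~\ref{thm:anticoncentration}''---the Paley--Zygmund step is analogous, but the relevant second moment for dense IQP circuits (yielding $1/12$ rather than the 2-design $1/8$) is specific to that family and is simply imported from Ref.~\cite{bremner_average-case_2016}, which is also what the paper does.
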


\begin{corollary}[Hardness of approximation]\label{lemma:SharpPhardness}
Approximating either $q_\mathrm{ac}(x,\beta)$ or $q_\mathrm{ac}(x|\beta)$ up to relative error 1/4 is \#\textnormal{\textnormal{P}}-hard.
\end{corollary}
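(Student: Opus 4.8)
The plan is to obtain Corollary~\ref{lemma:SharpPhardness} as a direct consequence of Theorem~\ref{lemma:IQPSampling} together with the known worst-case hardness of approximating dense IQP circuit amplitudes. First I would record the exact relation between $q_\mathrm{ac}$ and IQP amplitudes. By Theorem~\ref{lemma:IQPSampling}, for every admissible input label $\beta$ and bit string $x=(x_L,x_R)$ one has $q_\mathrm{ac}(x_L|\beta)=2^{-(n-m)}$ and $q_\mathrm{ac}(x_R|x_L,\beta)=|\langle x_R|V_{x_L,\beta}\ket{0}|^2$ for a dense IQP circuit $V_{x_L,\beta}\in\mathcal{G}_{\textnormal{IQP}}$; moreover $q_\mathrm{ac}(\beta)=1/|\mathcal{S}_{\mathrm{ac}}|$ directly from the definition \eqref{eq:QAC_Probability}. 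Multiplying these yields the exact identities
\[
q_\mathrm{ac}(x|\beta)=\frac{|\langle x_R|V_{x_L,\beta}\ket{0}|^2}{2^{\,n-m}},\qquad q_\mathrm{ac}(x,\beta)=\frac{q_\mathrm{ac}(x|\beta)}{|\mathcal{S}_{\mathrm{ac}}|},
\]
in which the prefactors $2^{n-m}$ and $|\mathcal{S}_{\mathrm{ac}}|$ are strictly positive and computable in polynomial time from $m$. Since a relative error is unchanged under division by an exactly known positive constant, any algorithm that approximates either $q_\mathrm{ac}(x,\beta)$ or $q_\mathrm{ac}(x|\beta)$ up to relative error $1/4$ also approximates the dense IQP output probability $|\langle x_R|V_{x_L,\beta}\ket{0}|^2$ up to relative error $1/4$.

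It then remains to invoke the hardness of that problem and to check that the reduction can realise worst-case instances. Dense IQP circuits are universal under post-selection (this is exactly the post$\mathcal{F} = $ postBQP step used in Sec.~\ref{sec:random circuits} for the $\mathcal{G}_1,\mathcal{G}_2$ constructions), so by Refs.~\cite{bremner_average-case_2016,fujii_commuting_2017,Kuperberg15JonesPolynomial,goldberg_complexity_2014} it is $\#\textnormal{P}$-hard to approximate the output probability of a worst-case $V\in\mathcal{G}_{\textnormal{IQP}}$ up to relative error $1/4$. To feed such a worst-case $V$ into the above identities I need a pair $(x_L,\beta)$ with $V_{x_L,\beta}=V$, computable in polynomial time from a description of $V$. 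Surjectivity of $(x_L,\beta)\mapsto V_{x_L,\beta}$ onto $\mathcal{G}_{\textnormal{IQP}}$ is immediate: Theorem~\ref{lemma:IQPSampling} states that the pushforward of $q_\mathrm{ac}$ under this map is the uniform measure $\mu_{\textnormal{IQP}}$ on the finite group $\mathcal{G}_{\textnormal{IQP}}$, which has full support; and the explicit correspondence established in the proof of Theorem~\ref{lemma:IQPSampling} lets one write $(x_L,\beta)$ (and a suitable outcome $x_R$) down efficiently given $V$. Composing the two reductions gives: an efficient relative-$1/4$ approximator for $q_\mathrm{ac}$ would yield one for worst-case dense IQP amplitudes, contradicting $\#\textnormal{P}$-hardness; hence the claim.

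The arithmetic of the first paragraph (factorising $q_\mathrm{ac}$ and dividing out known constants) is routine. The one point that genuinely needs care --- and which I expect to be the only real obstacle --- is the embedding bookkeeping in the second paragraph: one must confirm that the worst-case IQP instances underlying the hardness result of Ref.~\cite{bremner_average-case_2016} really lie in $\mathcal{G}_{\textnormal{IQP}}$ (rescaling gate angles into the allowed set $\{k\uppi/8:k\in\{0,\ldots,7\}\}$ if needed) and that inverting the map of Theorem~\ref{lemma:IQPSampling} does not blow up instance size beyond the $n(m)=m(2m+1)$ qubit count already built into that theorem. Both are handled by the explicit construction underlying Theorem~\ref{lemma:IQPSampling}, so no new ideas are required.
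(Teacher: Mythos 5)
Your proof is correct, and it is the argument the paper implicitly intends. In fact, the paper does not spell out a proof of Corollary~\ref{lemma:SharpPhardness} at all---the text preceding the proofs explicitly mentions only Corollaries~\ref{thm:AntiConOfQuench} and~\ref{thm:QQuenchSupremacy} as being derived from the mapping of Theorem~\ref{lemma:IQPSampling}, and Corollary~\ref{lemma:SharpPhardness} is treated as immediate from that mapping together with the known \#P-hardness of approximating dense IQP output probabilities (Refs.~\cite{bremner_average-case_2016,fujii_commuting_2017,goldberg_complexity_2014}, also used in the proof of Lemma~\ref{lem:ClassicalHardness}). Your write-up fills in exactly the steps the authors elided: the factorisation $q_\mathrm{ac}(x|\beta)=2^{-(n-m)}\,|\langle x_R|V_{x_L,\beta}\ket{0}|^2$ and $q_\mathrm{ac}(x,\beta)=q_\mathrm{ac}(x|\beta)/|\mathcal{S}_{\mathrm{ac}}|$, the observation that dividing by an exactly known positive constant preserves relative error, the surjectivity of $(x_L,\beta)\mapsto V_{x_L,\beta}$ (guaranteed because the pushforward is the full-support uniform measure $\mu_{\textnormal{IQP}}$ on a finite group), and the efficiency of inverting this map from the explicit gadget construction of Theorem~\ref{lemma:IQPSampling}. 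Your two flagged checkpoints---that the hard IQP instances lie in $\mathcal{G}_{\textnormal{IQP}}$ (they do, since $\mathcal{G}_1,\mathcal{G}_2\subset\mathcal{G}_{\textnormal{IQP}}$, the angles $\pi/8$ and $\pi/4$ both being multiples of $\pi/8$), and that the lattice embedding costs only the $n(m)=m(2m+1)=O(m^2)$ overhead already fixed by the theorem---are indeed the only places requiring care, and both resolve favourably. No gap.
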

Corollary \ref{thm:AntiConOfQuench} proves anticoncentration for $\quac$'s  output probabilities, while Corollary \ref{lemma:SharpPhardness} shows that  the latter are \#\textnormal{P}-hard to approximate. As an application of these two technical statements, a quantum-speedup result follows from a Stockmeyer argument. 
This result is based on the average case conjecture \ref{conj:Ising}:
\begin{enumerate}[leftmargin=18pt,label=\textnormal{C\arabic**}]
	\item Let $H_v \coloneqq H+\sum_{i\in V} v_i Z_i$ be the random Ising model derived from (\ref{eq:HardIsingModels}) by adding uniformly random on-site fields $v_i Z_i$ with random angles $v_i = (\beta_i + x_i)/2$, where $\beta_i$ are the phases of the input state $\ket{\psi_\beta}$ and $x_i$ are the measurement outcomes. The conjecture states that, if its $\#\textnormal{P}$-hard to approximate the imaginary temperature partition function  $\mathrm{tr}\left[\exp(\imun H_v) \right]$  up to a constant relative error, then the same problem is $\#\textnormal{P}$-hard for a constant fraction of the instances---intuitively, because random Ising models have no visible structure making this problem easier in average  (see also Refs.\ \cite{bremner_average-case_2016,boixo_characterizing_2016,bremner_achieving_2017}).\label{conj:Ising}
\end{enumerate}

\begin{corollary}[Intractability of classical sampling]\label{thm:QQuenchSupremacy}
If conjectures \ref{conj:PH}-\ref{conj:Ising} hold, then a classical computer cannot 
sample from the outcome distribution of architecture $\quac$ up to $\ell_1$-error  $1/192$  in time 
$O(\mathrm{poly}(n))$.
\end{corollary}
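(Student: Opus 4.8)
\emph{Proof strategy.} The plan is to run the three‑ingredient Stockmeyer argument of Sec.~\ref{sec:definitions} (following Refs.~\cite{Stockmeyer,aaronson_computational_2010,bremner_average-case_2016}) for the architecture $\quac$, now that all three ingredients are in place: anticoncentration of the output distribution (Corollary~\ref{thm:AntiConOfQuench}), $\sharpp$-hardness of approximating the output probabilities up to relative error $1/4$ in the worst case (Corollary~\ref{lemma:SharpPhardness}, which replaces the post$\mathcal{F}=$postBQP step), and the average‑case conjecture~\ref{conj:Ising} phrased in terms of the imaginary‑temperature Ising partition function $\tr[\exp(\imun H_v)]$. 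I would argue by contradiction. Assume a classical algorithm runs in time $O(\poly(n))$ and outputs pairs $(x,\beta)$ from a distribution $\tilde q$ with $\normone{\tilde q-q_\mathrm{ac}}\le 1/192$. Viewing this sampler as a polynomial‑time randomised algorithm, Stockmeyer's approximate‑counting theorem \cite{Stockmeyer} yields a $\bpp^{\np}$ algorithm that, given $(x,\beta)$, outputs an estimate $\hat p(x,\beta)$ of $\tilde q(x,\beta)$ with multiplicative error $1+1/g(n)$ and additive error $2^{-h(n)}$, for any polynomials $g,h$ of our choosing.

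The next step is to turn this into a good relative‑error estimate of the \emph{ideal} probability $q_\mathrm{ac}(x,\beta)$ on a constant fraction of inputs, which is where the constant $1/192$ comes in. Since $\sum_{(x,\beta)}\lvert\tilde q(x,\beta)-q_\mathrm{ac}(x,\beta)\rvert\le 1/192$, Markov's inequality over the uniform distribution on the $M:=\lvert\mathcal{S}_{\mathrm{ac}}\rvert\,2^n$ pairs shows that on all but a $1/24$ fraction of them $\lvert\tilde q(x,\beta)-q_\mathrm{ac}(x,\beta)\rvert\le \tfrac{1}{8M}$. By Corollary~\ref{thm:AntiConOfQuench} one has $q_\mathrm{ac}(\beta)=1/\lvert\mathcal{S}_{\mathrm{ac}}\rvert$ and $\Pr_{\beta\sim q_\mathrm{ac}}(q_\mathrm{ac}(x|\beta)\ge \tfrac{1}{2\cdot 2^n})\ge \tfrac1{12}$, hence $q_\mathrm{ac}(x,\beta)\ge \tfrac{1}{2M}$ on at least a $1/12$ fraction of pairs. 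On the overlap of the two events, of density $\ge 1/12-1/24=1/24$, the estimate $\hat p(x,\beta)$ approximates $q_\mathrm{ac}(x,\beta)$ with relative error $\le \tfrac14 + O(1/g(n))$, the additive $2^{-h(n)}$ being negligible because $q_\mathrm{ac}(x,\beta)\ge 2^{-\poly(n)}$ there. The $O(1/g(n))$ slack above $1/4$ is harmless: Corollary~\ref{lemma:SharpPhardness} states the threshold $1/4$ conservatively, since its proof establishes hardness already for multiplicative error $\sqrt2$, i.e.\ for relative error up to $\sqrt2-1>1/4$.

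It remains to invoke the average‑case conjecture. By Theorem~\ref{lemma:IQPSampling} together with the standard identity expressing dense‑IQP amplitudes as imaginary‑temperature Ising partition functions \cite{bremner_average-case_2016}, $q_\mathrm{ac}(x,\beta)$ equals, up to an efficiently computable positive prefactor, $\lvert\tr[\exp(\imun H_v)]\rvert^2$ for the random Ising model $H_v$ of conjecture~\ref{conj:Ising} with $v_i=(\beta_i+x_i)/2$; relative‑error approximations transfer between the two quantities up to a factor $2$. Corollary~\ref{lemma:SharpPhardness} makes approximating $\tr[\exp(\imun H_v)]$ to small constant relative error $\sharpp$-hard in the worst case, and conjecture~\ref{conj:Ising} upgrades this to $\sharpp$-hardness even when the approximation need only succeed on a constant fraction of the instances $v$ --- a fraction that the $\bpp^{\np}$ algorithm above attains. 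Hence $\mathrm{P}^{\sp}\subseteq\bpp^{\np}$, and since $\bpp^{\np}\subseteq\Sigma_3^{p}\cap\Pi_3^{p}$ while $\pH\subseteq\mathrm{P}^{\sp}$ by Toda's theorem, the polynomial hierarchy collapses to its third level --- contradicting conjecture~\ref{conj:PH}. Tracing the constants, $1/192=1/(8\cdot 24)$ is the largest $\ell_1$-error for which the Markov step above still leaves a positive‑density ($1/24$) set of inputs on which relative error $\le 1/4$ is certified, given the anticoncentration constant $1/12$.

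The one genuinely delicate point I expect is matching the density of the ``good'' set on which the $\bpp^{\np}$ estimator is certified (here $1/24$) against the robustness that conjecture~\ref{conj:Ising} must provide: because our anticoncentration bound (Corollary~\ref{thm:AntiConOfQuench}) has only a \emph{constant}, rather than $1-o(1)$, success probability, the conjecture has to be of the ``hard on a constant fraction of instances'' type --- as conjecture~\ref{conj:Ising} indeed is --- and one must check that the worst‑to‑average‑case direction it asserts tolerates the corresponding constant fraction of errors. Everything else --- the Stockmeyer reduction, the Markov bookkeeping, the $\mathrm{IQP}\leftrightarrow$Ising dictionary, and the inclusion $\bpp^{\np}\subseteq\Sigma_3^{p}$ --- is routine.
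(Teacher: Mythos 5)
Your proposal is correct and follows the same Stockmeyer-type argument that the paper invokes. The paper's own proof is essentially a pointer to Ref.~\cite[Theorem 1]{Supremacy} and Ref.~\cite[Theorem 7]{bremner_average-case_2016}, noting only (a) the architecture-specific equivalence of $X$-measurements on $\ket{0}/\ket{1}$ inputs with the $Z$-measurements of architecture III of Ref.~\cite{Supremacy}, (b) that the resulting output probabilities are proportional to the Ising partition function of conjecture~\ref{conj:Ising}, and (c) that the $1/192$ constant carries over because Corollary~\ref{thm:AntiConOfQuench} reproduces the anticoncentration bound of Ref.~\cite[Theorem 7]{bremner_average-case_2016}. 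You fill in the details they delegate to the references, and you correctly trace the constant $1/192 = 1/(8\cdot 24)$ from Markov's inequality plus the $1/12$-anticoncentration constant and the $1/4$-relative-error threshold. The only (inessential) divergence is that you reach the Ising partition function via Theorem~\ref{lemma:IQPSampling} and the IQP$\leftrightarrow$Ising dictionary of Ref.~\cite{bremner_average-case_2016}, whereas the paper routes it through the architecture-III mapping of Ref.~\cite{Supremacy}; both are valid and land on the same quantity.
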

The significance of Corollary \ref{thm:QQuenchSupremacy} is that, as shown below, architecture $\qac$ defines a resource-wise plausible, certifiable experiment for demonstrating a quantum speedup with  experimental demands competitive to those in Ref.\ \cite{Supremacy}.  
However, unlike the latter, the speedup of $\qac$ relies only on a natural average-case hardness conjecture about Ising models and a Polynomial Hierarchy collapse.  Hence, we believe this corollary should help in the analysis of quantum speedups in near-term  quantum devices.

\subsection{Proof of Theorem \ref{lemma:IQPSampling} and its Corollaries}

\begin{proof}[Proof of Theorem~\ref{lemma:IQPSampling}]
We make use of two circuit gadgets, named ``odd'' and ``even'', illustrated next,
\begin{figure}[h]
	\label{fig:Gadgets}
	\centering
	\vspace{-5pt}
	\begin{tabular}{c   c}
	\textsf{Odd gadget} & \textsf{Even gadget}\\
	\includegraphics[width=0.45\linewidth]{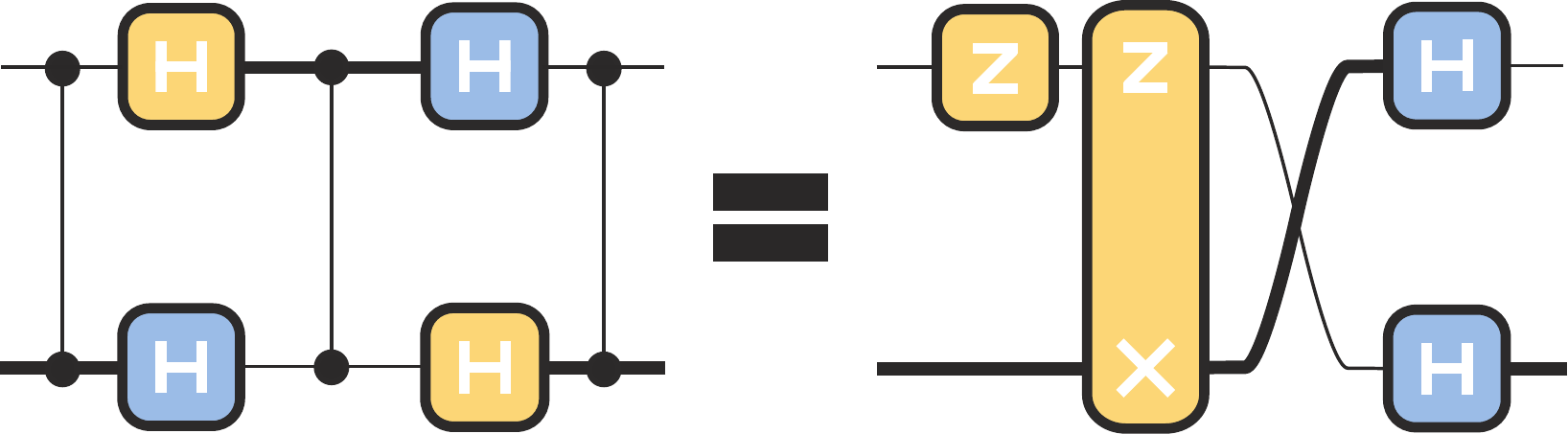}
	&
	\includegraphics[width=0.45\linewidth]{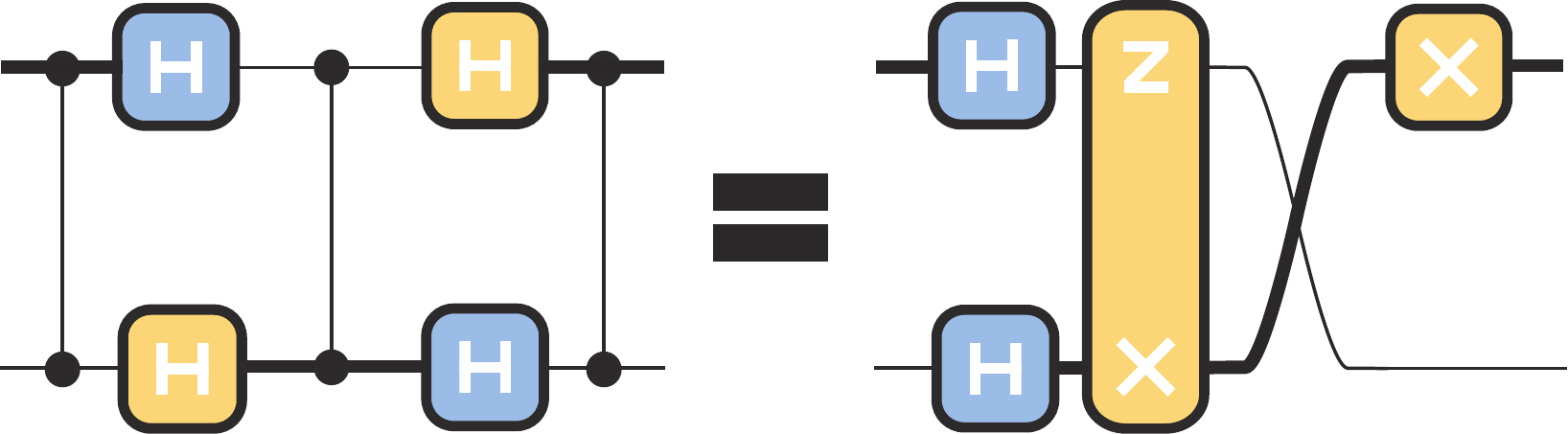}
	\end{tabular}
	\vspace{-10pt}
\end{figure}\\
which represent quantum circuit identities modulo terminal Pauli operator corrections. Vertical links represent CZ gates. Crossing qubit lines perform SWAP gates. Blue ``H'' blocks implement Hadamard gates preceded by  uniformly-random $Z_i^{x},x\in\{0,1\}$, 
single-qubit gates. Yellow ``H'' blocks,  $H_i\euler^{-\imun \uppi k Z_i/8}Z_i^{x'}$  gates with uniformly-random $0\leq k \leq 3$, $x'\in\{0,1\}$, where $\euler^{-\imun k\uppi  Z_i/8} Z_i^{x'}$ is  a uniformly-random power of $\euler^{-\imun \uppi  Z_i/8}$ up to a global phase since the latter has order 8 and $Z_i\propto\euler^{-\imun 4 \uppi  Z_i/8}$. Analogously, yellow ``Z'' (resp.\ ``ZX'' blocks) perform  uniformly-random powers of $\euler^{-\imun \frac{\uppi }{8} Z_i}$ (resp.\ $\euler^{-\imun \frac{\uppi }{8} Z_i X_{i+1}}$). The correctness of the identities is easily verified using the stabiliser formalism \cite{NielsenChuang}. 
Pauli corrections correspond to ``by-product'' $Z$s in blue blocks, which we can propagate to the end of the circuit by flipping some of the $\euler^{-\imun k\uppi  Z_i/8}$ gates' angles in yellow blocks, which leaves them  invariant.

We next show that the computation carried out by $\qac$ is equivalent to a 1D circuit of our odd and even gadgets composed in a brickwork layout. We begin by reminding the reader of the  properties of $X$-teleportation circuits \cite{GateTeleportation}, namely, that given an $(r+1)$-qubit state vector $\ket{\psi}\ket{+}$, the effect of measuring the $i$-th qubit  of $\ket{\psi}$ in the $D^\dagger XD$ basis after entangling it with $\ket{+}$ via a CZ gate is, first, to produce a uniformly-random bit $x$; second, teleport the value of the former qubit onto the latter; and, third, implement a single-qubit gate $H_{r+1} Z_{r+1}^x D_{r+1}$ on site $r+1$. Next, note that pink sites in Fig.\ \ref{fig:QuenchArchitecture} can be eliminated from the lattice by introducing  uniformly-random simultaneous $Z_i$ rotations on their neighbouring qubits. Combining these three facts and using induction, we obtain that $\qac$ can be simulated exactly by an algorithm that first generates a uniformly-random classical bit-string $x_L\in\{0,1\}^{n-m}$ and then draws $x_R$ from the output of the following network of random 1D nearest-neighbour quantum gates,
\begin{figure}[h]
\centering
\includegraphics[width=1\linewidth]{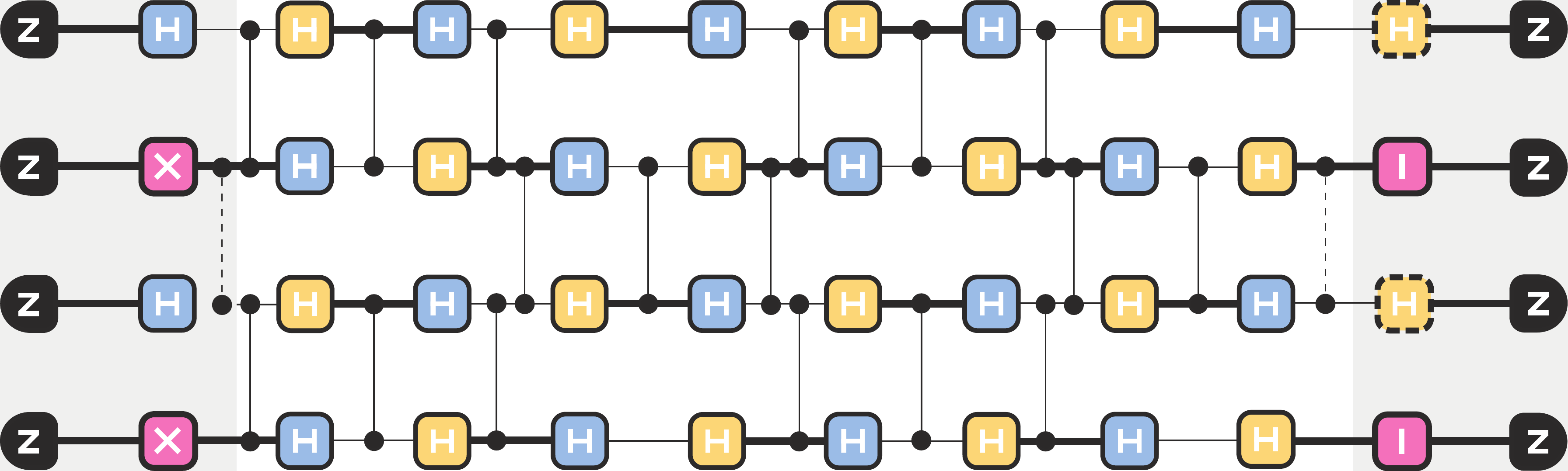}\label{fig:LogicalQuenchArchitecture}
\end{figure}\\
which we draw for $m=4$ and explicate next. The ``bulk'' of this  network (white area) contains an $m$-layered  brickwork layout of odd and even gadgets with boundaries connected by pairs of blue and yellow blocks. Blue/yellow blocks act as before. $n-m$ out of these are placed in the bulk; their associated random $Z_i^{x_{L_i}}$ gates originally correspond to the by-product rotations introduced via $X$-teleportation, and are activated by the algorithm depending distinct bits values of $x_L$. Qubits are initialised on $\ket{0}$, followed by a ``blue'' Hadamard (resp.\ a ``pink'' uniformly random $\{I_i, X_i\}$) gate on odd (resp.\ even) rows.  Even qubit lines are measured on the $Z$ basis (preceded by ``pink'' identity gates in the figure); and odd ones  on the $X$ basis preceded by a $\euler^{\imun \frac{\uppi k}{8}Z_i}$  gate. Straight-line random blocks are mutually uncorrelated (terminal ``dashed'' ones are not). Dashed CZs  are ``gauge gates'' that can be included or removed from by inserting  CNOT gates at predetermined input/output locations and reinterpreting the measurement outcomes. As before, we assume $H$ block's by-product Pauli operators are w.l.o.g.\ conjugated to the end.

Next, we apply  our odd/even gadgets to the bulk of our network to rewrite the full quantum circuit in an $m$-layered brickwork normal form 
\begin{figure}[h]
\centering
\includegraphics[width=.8\linewidth]{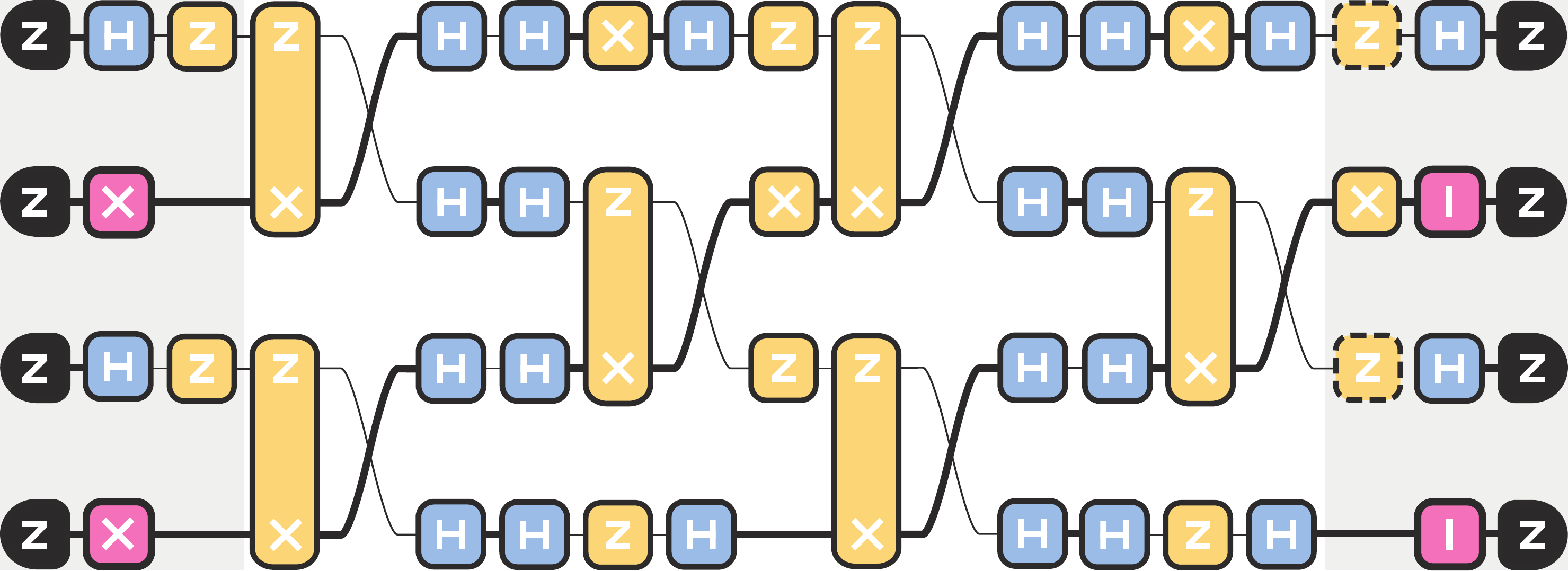} \label{fig:DenseIQPGauged}
\end{figure}\\
where odd layers execute random gates of the form 
\begin{equation}
\prod_{\textrm{odd }i}\left[H_{i}H_{i+1}\textrm{SWAP}_{i,i+1}  \euler^{-\imun \frac{\uppi a_i}{8}  Z_i X_{i+1}} \euler^{-\imun \frac{\uppi b_i}{8} Z_i}\right],\notag
\end{equation}
with  $a_i,b_i{\in}\mathbb{Z}_8$, followed by random-gate even layers of the form
\begin{equation}
\prod_{\textrm{even }i}\left[\euler^{-\imun \frac{\uppi d_i}{8} X_i}  \textrm{SWAP}_{i,i+1}  \euler^{-\imun \frac{\uppi c_i}{8}  Z_i X_{i+1}} H_{i-1}H_{i}\right], \notag
\end{equation} 
where $c_i,d_i{\in}\mathbb{Z}_8$ and we define $H_j=Z_j=X_j=\textrm{SWAP}_{k,k+1}=1$ for $j,k<1$ and $j,k+1> n$. Trailing Hadamard gates in odd layers cancel out  with their counterparts in even-layers. By a  parity-counting argument, it follows  that SWAP gates move qubits initially on odd (resp.\ even) rows travel down (resp.\ up) the circuit; the latter first undergo $Z$-type (resp.\ $X$-type) interactions, meet an odd number of $H$ gates when they reach the bottom (resp.\ top) qubit line, and then undergo the opposite process. By propagating all Hadamards in the full circuit to the measurement step, we are left only with a bulk of $n$ brickwork layers of uniformly-random $\euler^{-\imun \frac{\uppi a_i}{8}  X_i X_{i+1}}$, $\euler^{-\imun \frac{\uppi b_i}{8} X_i}$ and SWAPs, and some additional IQP gates and random Pauli by-products in the preparation/measurement steps. It was shown in Ref.\ \cite{Supremacy} that all pairs of qubits in a bulk circuit of the given form meet exactly once, hence, the  network implements exact sampling  over dense IQP circuits (crucially, due to their lack of temporal structure). Furthermore the remaining gates are either also dense IQP gates, which  leave the Haar measure $\mu_{\textnormal{IQP}}$ invariant,  or terminal Pauli $Z$ gates, which do not affect the final measurement statistics.
\end{proof}

We now exploit the mapping in Lemma \ref{lemma:IQPSampling} between $\quac$'s  and IQP circuits' output statistics to prove Corollaries \ref{thm:AntiConOfQuench} and \ref{thm:QQuenchSupremacy}.
\begin{proof}[Proof of Corollary~\ref{thm:AntiConOfQuench}] 
Recall  that  $m$-qubit dense IQP circuits fulfil
\begin{equation} 
\mathrm{Pr}_{V\sim\mu_{\textnormal{IQP}}}\left[|\langle x|V \ket{0}|^2\geq \tfrac{1}{2^{m+1}} \right]
 \geq \tfrac{1}{12},\forall x{\in}\{0,1\}^m.
\end{equation}
Since $V_{x_L,\beta}$ is drawn according to $\mu_\textrm{IQP}$ in Lemma \ref{lemma:IQPSampling}, we get 
\begin{equation} 
\mathrm{Pr}_{(x_L,\beta)\sim q_\mathrm{ac}}\left[
q_\mathrm{ac}(x_R |x_L,\beta)\geq \tfrac{1}{2^{m+1}}\right]\geq  \tfrac{1}{12}.
\end{equation}
Since $q_\mathrm{ac}(x_L|\beta)={1}/{2^{n-m}}$, we derive (\ref{eq:AntiConcentrationOfQuench}). Last, $q_\mathrm{ac}(\beta)={1}/{|\mathcal{S}_\mathrm{ac}|}$ by definition.
\end{proof}

\begin{proof}[Proof of Corollary \ref{thm:QQuenchSupremacy}]
The proof of Corollary \ref{thm:QQuenchSupremacy} is analogous to those of 
Ref.\ \cite[Theorem 1]{Supremacy} and Ref.\mot\cite[Theorem 7]{bremner_average-case_2016},
    noting that $X$-measurements on qubits prepared in states $\ket{0}$ or $\ket{1}$ in $\quac$ are equivalent to the $Z$-measurements on qubits prepared in the $\ket{+}$-state of architecture III in Ref.\mot\cite{Supremacy}. 
Then, the same argument as in Ref.~\cite{Supremacy} shows that the output probabilities $q_\mathrm{ac}(x_L,x_R|\beta)$ are proportional to an Ising partition function as in conjecture \ref{conj:Ising}. 

The only remaining difference with the proof of Theorem 1 in Ref.~\cite{Supremacy} is that we employ a different anticoncentration bound. Here, we use Eq.\mot(\ref{eq:AntiConcentrationOfQuench}) of Theorem \ref{thm:AntiConOfQuench}, which is the same bound used in Ref.\mot\cite[Theorem 7]{bremner_average-case_2016}. 
As a result, we obtain a bound of 1/192 for the allowed sampling error identical to that of Theorem 7 in \cite{bremner_average-case_2016}.
\end{proof}

\section{Implications and discussion} 
\label{sec:discussion} 

We now discuss the implications of our two main anticoncentration results and discuss possible improvements in both settings.

First, we conjecture that the linear-circuit-depth scaling in our anticoncentration result for universal random circuits in one dimension is optimal. 
Indeed, on the one hand, this result is in agreement with the intuition that anticoncentration arises as soon as correlations have spread across the entire system, a process that occurs ballistically and thus scales with the diameter of the system. On the other hand, for one-dimensional random universal circuits to be intractable classically, the depth needs to be polynomial in the number of qubits. 
Hence, our result only leaves room for a sub-linear improvement, since for circuits of poly-logarithmic depth there is a quasi-polynomial time classical simulation based on matrix-product states.
However, as is argued in Refs.~\cite{bremner_achieving_2017,lund_quantum_2017}, it would seem counter-intuitive that one can achieve sub-linear depth.
Indeed, standard tensor network contraction techniques would allow any
output probabilities of a circuit of depth $t$ in one dimension to be computed in a time scaling as $O(2^{t})$ 
\cite{Jozsa2006}. 
Hence, if the depth $t$ as a function of $n$ required for the classical hardness of generic circuits could be brought down to sub-linear, this would violate the counting exponential time hypothesis  \cite{CountingConjecture} and is therefore considered highly unlikely.

Second, we highlight that  the anticoncentration result for the two-dimensional quenched-dynamics setting provably achieves the optimal asymptotic scaling of depth, namely, constant in the number of qubits. 
This is due to the highly specific structure of the dynamical evolution and not believed to hold in an approach that relies on sampling random gates such as Refs.\ \cite{boixo_characterizing_2016,bremner_achieving_2017,aaronson_complexity-theoretic_2016}.  
Indeed, in such settings a scaling as $\Theta(\sqrt{n})$ is expected to be necessary and sufficient for an average-case hardness result and hence for anticoncentration. Again, this is due to the ballistic spreading of correlations in the system. Last, the discussed connections between 2D quenches and  one-dimensional random circuits lead us to conjecture that the required lattice width in our result, $m\times (2m+1)\in O(m^2)$, is also asymptotically optimal.

\section{Conclusion} 
\label{sec:conclusion} 

In summary, we have presented two anticoncentration theorems for quantum 
speedup schemes that are based on simple nearest-neighbour interactions and hence realisable with
plausible physical architectures, filling a significant gap in the literature.
We contrast the anticoncentration property of random circuits in one dimension that are sampled from a universal gate set with anticoncentration of the output distribution of quenched constant-time evolution of product states under translation-invariant nearest-neighbour Ising models. 
In the former setting the depth required to achieve classical hardness and at the same time anticoncentration of the output distribution scales with the diameter of the system size. 
In the latter setting a similar hardness and anticoncentration result is achieved after evolution for constant time. 
We argue that both results are optimal for the respective setting. We hope that this kind of endeavour significantly
contributes to the quest of realising quantum devices that outperform classical supercomputers, equipped with
strong complexity-theoretic claims.

\section{Acknowledgements}
\label{acknowledgements}
We are grateful to Richard Kueng for pointing us to the application of our result to diagonal unitary circuits.
Moreover, we thank Richard Kueng and Emilio Onorati for insightful
discussions and comments on the draft, Andreas Elben for discussions on Haar
random matrices, Tomoyuki Morimae for comments on the draft, and the EU Horizon 2020 (640800 AQuS), the ERC (TAQ), the Templeton Foundation,
the DFG (CRC 183, EI 519/7-1, EI 519/14-1), and the Alexander-von-Humboldt Foundation for support.

\bibliographystyle{apsrev4-1}
\bibliography{Random_unitaries}

\begin{thebibliography}{69}%
\makeatletter
\providecommand \@ifxundefined [1]{%
 \@ifx{#1\undefined}
}%
\providecommand \@ifnum [1]{%
 \ifnum #1\expandafter \@firstoftwo
 \else \expandafter \@secondoftwo
 \fi
}%
\providecommand \@ifx [1]{%
 \ifx #1\expandafter \@firstoftwo
 \else \expandafter \@secondoftwo
 \fi
}%
\providecommand \natexlab [1]{#1}%
\providecommand \enquote  [1]{``#1''}%
\providecommand \bibnamefont  [1]{#1}%
\providecommand \bibfnamefont [1]{#1}%
\providecommand \citenamefont [1]{#1}%
\providecommand \href@noop [0]{\@secondoftwo}%
\providecommand \href [0]{\begingroup \@sanitize@url \@href}%
\providecommand \@href[1]{\@@startlink{#1}\@@href}%
\providecommand \@@href[1]{\endgroup#1\@@endlink}%
\providecommand \@sanitize@url [0]{\catcode `\\12\catcode `\$12\catcode
  `\&12\catcode `\#12\catcode `\^12\catcode `\_12\catcode `\%12\relax}%
\providecommand \@@startlink[1]{}%
\providecommand \@@endlink[0]{}%
\providecommand \url  [0]{\begingroup\@sanitize@url \@url }%
\providecommand \@url [1]{\endgroup\@href {#1}{\urlprefix }}%
\providecommand \urlprefix  [0]{URL }%
\providecommand \Eprint [0]{\href }%
\providecommand \doibase [0]{http://dx.doi.org/}%
\providecommand \selectlanguage [0]{\@gobble}%
\providecommand \bibinfo  [0]{\@secondoftwo}%
\providecommand \bibfield  [0]{\@secondoftwo}%
\providecommand \translation [1]{[#1]}%
\providecommand \BibitemOpen [0]{}%
\providecommand \bibitemStop [0]{}%
\providecommand \bibitemNoStop [0]{.\EOS\space}%
\providecommand \EOS [0]{\spacefactor3000\relax}%
\providecommand \BibitemShut  [1]{\csname bibitem#1\endcsname}%
\let\auto@bib@innerbib\@empty
\bibitem [{\citenamefont {Preskill}(2013)}]{preskill2013quantum}%
  \BibitemOpen
  \bibfield  {author} {\bibinfo {author} {\bibfnamefont {J.}~\bibnamefont
  {Preskill}},\ }\href {http://meetings.aps.org/link/BAPS.2013.APR.P1.2}
  {\bibfield  {journal} {\bibinfo  {journal} {Bull. Am. Phys. Soc.}\ }\textbf
  {\bibinfo {volume} {58}} (\bibinfo {year} {2013})},\ \Eprint
  {http://arxiv.org/abs/1203.5813} {arXiv:1203.5813} \BibitemShut {NoStop}%
\bibitem [{\citenamefont {Trotzky}\ \emph {et~al.}(2012)\citenamefont
  {Trotzky}, \citenamefont {Chen}, \citenamefont {Flesch}, \citenamefont
  {McCulloch}, \citenamefont {Schollw\"ock}, \citenamefont {Eisert},\ and\
  \citenamefont {Bloch}}]{Trotzky}%
  \BibitemOpen
  \bibfield  {author} {\bibinfo {author} {\bibfnamefont {S.}~\bibnamefont
  {Trotzky}}, \bibinfo {author} {\bibfnamefont {Y.-A.}\ \bibnamefont {Chen}},
  \bibinfo {author} {\bibfnamefont {A.}~\bibnamefont {Flesch}}, \bibinfo
  {author} {\bibfnamefont {I.~P.}\ \bibnamefont {McCulloch}}, \bibinfo {author}
  {\bibfnamefont {U.}~\bibnamefont {Schollw\"ock}}, \bibinfo {author}
  {\bibfnamefont {J.}~\bibnamefont {Eisert}}, \ and\ \bibinfo {author}
  {\bibfnamefont {I.}~\bibnamefont {Bloch}},\ }\href {\doibase
  doi:10.1038/nphys2232} {\bibfield  {journal} {\bibinfo  {journal} {Nature
  Phys.}\ }\textbf {\bibinfo {volume} {8}},\ \bibinfo {pages} {325} (\bibinfo
  {year} {2012})},\ \Eprint {http://arxiv.org/abs/1101.2659} {arXiv:1101.2659}
  \BibitemShut {NoStop}%
\bibitem [{\citenamefont {Choi}\ \emph {et~al.}(2016)\citenamefont {Choi},
  \citenamefont {Hild}, \citenamefont {Zeiher}, \citenamefont {Schau{\ss}},
  \citenamefont {Rubio-Abadal}, \citenamefont {Yefsah}, \citenamefont
  {Khemani}, \citenamefont {Huse}, \citenamefont {Bloch},\ and\ \citenamefont
  {Gross}}]{MBL2D}%
  \BibitemOpen
  \bibfield  {author} {\bibinfo {author} {\bibfnamefont {J.-Y.}\ \bibnamefont
  {Choi}}, \bibinfo {author} {\bibfnamefont {S.}~\bibnamefont {Hild}}, \bibinfo
  {author} {\bibfnamefont {J.}~\bibnamefont {Zeiher}}, \bibinfo {author}
  {\bibfnamefont {P.}~\bibnamefont {Schau{\ss}}}, \bibinfo {author}
  {\bibfnamefont {A.}~\bibnamefont {Rubio-Abadal}}, \bibinfo {author}
  {\bibfnamefont {T.}~\bibnamefont {Yefsah}}, \bibinfo {author} {\bibfnamefont
  {V.}~\bibnamefont {Khemani}}, \bibinfo {author} {\bibfnamefont {D.~A.}\
  \bibnamefont {Huse}}, \bibinfo {author} {\bibfnamefont {I.}~\bibnamefont
  {Bloch}}, \ and\ \bibinfo {author} {\bibfnamefont {C.}~\bibnamefont
  {Gross}},\ }\href {\doibase 10.1126/science.aaf8834} {\bibfield  {journal}
  {\bibinfo  {journal} {Science}\ }\textbf {\bibinfo {volume} {352}},\ \bibinfo
  {pages} {1547} (\bibinfo {year} {2016})},\ \Eprint
  {http://arxiv.org/abs/1604.04178} {arXiv:1604.04178} \BibitemShut {NoStop}%
\bibitem [{\citenamefont {Braun}\ \emph {et~al.}(2015)\citenamefont {Braun},
  \citenamefont {Friesdorf}, \citenamefont {Hodgman}, \citenamefont
  {Schreiber}, \citenamefont {Ronzheimer}, \citenamefont {Riera}, \citenamefont
  {del Rey}, \citenamefont {Bloch}, \citenamefont {Eisert},\ and\ \citenamefont
  {Schneider}}]{Emergence}%
  \BibitemOpen
  \bibfield  {author} {\bibinfo {author} {\bibfnamefont {S.}~\bibnamefont
  {Braun}}, \bibinfo {author} {\bibfnamefont {M.}~\bibnamefont {Friesdorf}},
  \bibinfo {author} {\bibfnamefont {S.~S.}\ \bibnamefont {Hodgman}}, \bibinfo
  {author} {\bibfnamefont {M.}~\bibnamefont {Schreiber}}, \bibinfo {author}
  {\bibfnamefont {J.~P.}\ \bibnamefont {Ronzheimer}}, \bibinfo {author}
  {\bibfnamefont {A.}~\bibnamefont {Riera}}, \bibinfo {author} {\bibfnamefont
  {M.}~\bibnamefont {del Rey}}, \bibinfo {author} {\bibfnamefont
  {I.}~\bibnamefont {Bloch}}, \bibinfo {author} {\bibfnamefont
  {J.}~\bibnamefont {Eisert}}, \ and\ \bibinfo {author} {\bibfnamefont
  {U.}~\bibnamefont {Schneider}},\ }\href {\doibase 10.1073/pnas.1408861112}
  {\bibfield  {journal} {\bibinfo  {journal} {Proc. Natl. Ac. Sc.}\ }\textbf
  {\bibinfo {volume} {112}},\ \bibinfo {pages} {3641} (\bibinfo {year}
  {2015})},\ \Eprint {http://arxiv.org/abs/1403.7199} {arXiv:1403.7199}
  \BibitemShut {NoStop}%
\bibitem [{\citenamefont {Aaronson}\ and\ \citenamefont
  {Arkhipov}(2013)}]{aaronson_computational_2010}%
  \BibitemOpen
  \bibfield  {author} {\bibinfo {author} {\bibfnamefont {S.}~\bibnamefont
  {Aaronson}}\ and\ \bibinfo {author} {\bibfnamefont {A.}~\bibnamefont
  {Arkhipov}},\ }\href@noop {} {\bibfield  {journal} {\bibinfo  {journal} {Th.
  Comp.}\ }\textbf {\bibinfo {volume} {9}},\ \bibinfo {pages} {143} (\bibinfo
  {year} {2013})},\ \Eprint {http://arxiv.org/abs/1011.3245} {arXiv:1011.3245}
  \BibitemShut {NoStop}%
\bibitem [{\citenamefont {Bremner}\ \emph {et~al.}(2016)\citenamefont
  {Bremner}, \citenamefont {Montanaro},\ and\ \citenamefont
  {Shepherd}}]{bremner_average-case_2016}%
  \BibitemOpen
  \bibfield  {author} {\bibinfo {author} {\bibfnamefont {M.~J.}\ \bibnamefont
  {Bremner}}, \bibinfo {author} {\bibfnamefont {A.}~\bibnamefont {Montanaro}},
  \ and\ \bibinfo {author} {\bibfnamefont {D.~J.}\ \bibnamefont {Shepherd}},\
  }\href {\doibase 10.1103/PhysRevLett.117.080501} {\bibfield  {journal}
  {\bibinfo  {journal} {Phys. Rev. Lett.}\ }\textbf {\bibinfo {volume} {117}},\
  \bibinfo {pages} {080501} (\bibinfo {year} {2016})},\ \Eprint
  {http://arxiv.org/abs/1504.07999} {arXiv:1504.07999} \BibitemShut {NoStop}%
\bibitem [{\citenamefont {Bremner}\ \emph {et~al.}(2017)\citenamefont
  {Bremner}, \citenamefont {Montanaro},\ and\ \citenamefont
  {Shepherd}}]{bremner_achieving_2017}%
  \BibitemOpen
  \bibfield  {author} {\bibinfo {author} {\bibfnamefont {M.~J.}\ \bibnamefont
  {Bremner}}, \bibinfo {author} {\bibfnamefont {A.}~\bibnamefont {Montanaro}},
  \ and\ \bibinfo {author} {\bibfnamefont {D.~J.}\ \bibnamefont {Shepherd}},\
  }\href {\doibase 10.22331/q-2017-04-25-8} {\bibfield  {journal} {\bibinfo
  {journal} {Quantum}\ }\textbf {\bibinfo {volume} {1}},\ \bibinfo {pages} {8}
  (\bibinfo {year} {2017})}\BibitemShut {NoStop}%
\bibitem [{\citenamefont {Boixo}\ \emph {et~al.}(2018)\citenamefont {Boixo},
  \citenamefont {Isakov}, \citenamefont {Smelyanskiy}, \citenamefont {Babbush},
  \citenamefont {Ding}, \citenamefont {Jiang}, \citenamefont {Bremner},
  \citenamefont {Martinis},\ and\ \citenamefont
  {Neven}}]{boixo_characterizing_2016}%
  \BibitemOpen
  \bibfield  {author} {\bibinfo {author} {\bibfnamefont {S.}~\bibnamefont
  {Boixo}}, \bibinfo {author} {\bibfnamefont {S.~V.}\ \bibnamefont {Isakov}},
  \bibinfo {author} {\bibfnamefont {V.~N.}\ \bibnamefont {Smelyanskiy}},
  \bibinfo {author} {\bibfnamefont {R.}~\bibnamefont {Babbush}}, \bibinfo
  {author} {\bibfnamefont {N.}~\bibnamefont {Ding}}, \bibinfo {author}
  {\bibfnamefont {Z.}~\bibnamefont {Jiang}}, \bibinfo {author} {\bibfnamefont
  {M.~J.}\ \bibnamefont {Bremner}}, \bibinfo {author} {\bibfnamefont {J.~M.}\
  \bibnamefont {Martinis}}, \ and\ \bibinfo {author} {\bibfnamefont
  {H.}~\bibnamefont {Neven}},\ }\href {\doibase 10.1038/s41567-018-0124-x}
  {\bibfield  {journal} {\bibinfo  {journal} {Nature Physics}\ ,\ \bibinfo
  {pages} {1}} (\bibinfo {year} {2018})},\ \Eprint
  {http://arxiv.org/abs/1608.00263} {arXiv:1608.00263} \BibitemShut {NoStop}%
\bibitem [{\citenamefont {Gao}\ \emph {et~al.}(2017)\citenamefont {Gao},
  \citenamefont {Wang},\ and\ \citenamefont {Duan}}]{gao_quantum_2017}%
  \BibitemOpen
  \bibfield  {author} {\bibinfo {author} {\bibfnamefont {X.}~\bibnamefont
  {Gao}}, \bibinfo {author} {\bibfnamefont {S.-T.}\ \bibnamefont {Wang}}, \
  and\ \bibinfo {author} {\bibfnamefont {L.-M.}\ \bibnamefont {Duan}},\ }\href
  {\doibase 10.1103/PhysRevLett.118.040502} {\bibfield  {journal} {\bibinfo
  {journal} {Phys. Rev. Lett.}\ }\textbf {\bibinfo {volume} {118}},\ \bibinfo
  {pages} {040502} (\bibinfo {year} {2017})},\ \Eprint
  {http://arxiv.org/abs/1607.04947} {arXiv:1607.04947} \BibitemShut {NoStop}%
\bibitem [{\citenamefont {Bermejo-Vega}\ \emph {et~al.}(2018)\citenamefont
  {Bermejo-Vega}, \citenamefont {Hangleiter}, \citenamefont {Schwarz},
  \citenamefont {Raussendorf},\ and\ \citenamefont {Eisert}}]{Supremacy}%
  \BibitemOpen
  \bibfield  {author} {\bibinfo {author} {\bibfnamefont {J.}~\bibnamefont
  {Bermejo-Vega}}, \bibinfo {author} {\bibfnamefont {D.}~\bibnamefont
  {Hangleiter}}, \bibinfo {author} {\bibfnamefont {M.}~\bibnamefont {Schwarz}},
  \bibinfo {author} {\bibfnamefont {R.}~\bibnamefont {Raussendorf}}, \ and\
  \bibinfo {author} {\bibfnamefont {J.}~\bibnamefont {Eisert}},\ }\href
  {\doibase 10.1103/PhysRevX.8.021010} {\bibfield  {journal} {\bibinfo
  {journal} {Phys. Rev. X}\ }\textbf {\bibinfo {volume} {8}},\ \bibinfo {pages}
  {021010} (\bibinfo {year} {2018})},\ \Eprint
  {http://arxiv.org/abs/1703.00466} {arXiv:1703.00466} \BibitemShut {NoStop}%
\bibitem [{\citenamefont {Morimae}(2017)}]{morimae_hardness_2017}%
  \BibitemOpen
  \bibfield  {author} {\bibinfo {author} {\bibfnamefont {T.}~\bibnamefont
  {Morimae}},\ }\href {\doibase 10.1103/PhysRevA.96.040302} {\bibfield
  {journal} {\bibinfo  {journal} {Phys. Rev. A}\ }\textbf {\bibinfo {volume}
  {96}},\ \bibinfo {pages} {040302} (\bibinfo {year} {2017})},\ \Eprint
  {http://arxiv.org/abs/1704.03640} {arXiv:1704.03640} \BibitemShut {NoStop}%
\bibitem [{\citenamefont {Miller}\ \emph {et~al.}(2017)\citenamefont {Miller},
  \citenamefont {Sanders},\ and\ \citenamefont {Miyake}}]{miller_quantum_2017}%
  \BibitemOpen
  \bibfield  {author} {\bibinfo {author} {\bibfnamefont {J.}~\bibnamefont
  {Miller}}, \bibinfo {author} {\bibfnamefont {S.}~\bibnamefont {Sanders}}, \
  and\ \bibinfo {author} {\bibfnamefont {A.}~\bibnamefont {Miyake}},\ }\href
  {\doibase 10.1103/PhysRevA.96.062320} {\bibfield  {journal} {\bibinfo
  {journal} {Phys. Rev. A}\ }\textbf {\bibinfo {volume} {96}},\ \bibinfo
  {pages} {062320} (\bibinfo {year} {2017})},\ \Eprint
  {http://arxiv.org/abs/1703.11002} {arXiv:1703.11002} \BibitemShut {NoStop}%
\bibitem [{\citenamefont {Gogolin}\ \emph {et~al.}()\citenamefont {Gogolin},
  \citenamefont {Kliesch}, \citenamefont {Aolita},\ and\ \citenamefont
  {Eisert}}]{SampleComplexity}%
  \BibitemOpen
  \bibfield  {author} {\bibinfo {author} {\bibfnamefont {C.}~\bibnamefont
  {Gogolin}}, \bibinfo {author} {\bibfnamefont {M.}~\bibnamefont {Kliesch}},
  \bibinfo {author} {\bibfnamefont {L.}~\bibnamefont {Aolita}}, \ and\ \bibinfo
  {author} {\bibfnamefont {J.}~\bibnamefont {Eisert}},\ }\href@noop {}
  {\enquote {\bibinfo {title} {Boson sampling in the light of sample
  complexity},}\ }\Eprint {http://arxiv.org/abs/1306.3995} {arXiv:1306.3995}
  \BibitemShut {NoStop}%
\bibitem [{\citenamefont {Aaronson}\ and\ \citenamefont
  {Arkhipov}()}]{AaronsonUniform}%
  \BibitemOpen
  \bibfield  {author} {\bibinfo {author} {\bibfnamefont {S.}~\bibnamefont
  {Aaronson}}\ and\ \bibinfo {author} {\bibfnamefont {A.}~\bibnamefont
  {Arkhipov}},\ }\href@noop {} {\enquote {\bibinfo {title} {{BosonSampling is
  far from uniform}},}\ }\Eprint {http://arxiv.org/abs/1309.7460}
  {arXiv:1309.7460} \BibitemShut {NoStop}%
\bibitem [{\citenamefont {Hangleiter}\ \emph {et~al.}(2017)\citenamefont
  {Hangleiter}, \citenamefont {Kliesch}, \citenamefont {Schwarz},\ and\
  \citenamefont {Eisert}}]{Hangleiter}%
  \BibitemOpen
  \bibfield  {author} {\bibinfo {author} {\bibfnamefont {D.}~\bibnamefont
  {Hangleiter}}, \bibinfo {author} {\bibfnamefont {M.}~\bibnamefont {Kliesch}},
  \bibinfo {author} {\bibfnamefont {M.}~\bibnamefont {Schwarz}}, \ and\
  \bibinfo {author} {\bibfnamefont {J.}~\bibnamefont {Eisert}},\ }\href
  {\doibase 10.1088/2058-9565/2/1/015004} {\bibfield  {journal} {\bibinfo
  {journal} {Quantum Sci. Technol.}\ }\textbf {\bibinfo {volume} {2}},\
  \bibinfo {pages} {015004} (\bibinfo {year} {2017})},\ \Eprint
  {http://arxiv.org/abs/1602.00703} {arXiv:1602.00703} \BibitemShut {NoStop}%
\bibitem [{\citenamefont {Kapourniotis}\ and\ \citenamefont
  {Datta}(2017)}]{kapourniotis_nonadaptive_2017}%
  \BibitemOpen
  \bibfield  {author} {\bibinfo {author} {\bibfnamefont {T.}~\bibnamefont
  {Kapourniotis}}\ and\ \bibinfo {author} {\bibfnamefont {A.}~\bibnamefont
  {Datta}},\ }\href@noop {} {\  (\bibinfo {year} {2017})},\ \Eprint
  {http://arxiv.org/abs/1703.09568} {arXiv:1703.09568} \BibitemShut {NoStop}%
\bibitem [{\citenamefont {Stockmeyer}(1983)}]{Stockmeyer}%
  \BibitemOpen
  \bibfield  {author} {\bibinfo {author} {\bibfnamefont {L.}~\bibnamefont
  {Stockmeyer}},\ }\bibfield  {booktitle} {\emph {\bibinfo {booktitle}
  {Proceedings of the Fifteenth Annual ACM Symposium on Theory of Computing}},\
  }\href {\doibase 10.1145/800061.808740} {\ \bibinfo {series} {STOC '83},\
  \bibinfo {pages} {118} (\bibinfo {year} {1983})}\BibitemShut {NoStop}%
\bibitem [{\citenamefont {Lund}\ \emph {et~al.}(2017)\citenamefont {Lund},
  \citenamefont {Bremner},\ and\ \citenamefont {Ralph}}]{lund_quantum_2017}%
  \BibitemOpen
  \bibfield  {author} {\bibinfo {author} {\bibfnamefont {A.~P.}\ \bibnamefont
  {Lund}}, \bibinfo {author} {\bibfnamefont {M.~J.}\ \bibnamefont {Bremner}}, \
  and\ \bibinfo {author} {\bibfnamefont {T.~C.}\ \bibnamefont {Ralph}},\ }\href
  {\doibase 10.1038/s41534-017-0018-2} {\bibfield  {journal} {\bibinfo
  {journal} {npj Quant. Inf.}\ }\textbf {\bibinfo {volume} {3}},\ \bibinfo
  {pages} {15} (\bibinfo {year} {2017})},\ \Eprint
  {http://arxiv.org/abs/1702.03061} {arXiv:1702.03061} \BibitemShut {NoStop}%
\bibitem [{\citenamefont {Schwarz}\ and\ \citenamefont {den
  Nest}(2013)}]{Schwarz13_Sparse}%
  \BibitemOpen
  \bibfield  {author} {\bibinfo {author} {\bibfnamefont {M.}~\bibnamefont
  {Schwarz}}\ and\ \bibinfo {author} {\bibfnamefont {M.~V.}\ \bibnamefont {den
  Nest}},\ }\href@noop {} {\  (\bibinfo {year} {2013})},\ \Eprint
  {http://arxiv.org/abs/1310.6749} {arXiv:1310.6749} \BibitemShut {NoStop}%
\bibitem [{\citenamefont {Jozsa}\ and\ \citenamefont
  {Nest}(2014)}]{jozsa_classical_2013}%
  \BibitemOpen
  \bibfield  {author} {\bibinfo {author} {\bibfnamefont {R.}~\bibnamefont
  {Jozsa}}\ and\ \bibinfo {author} {\bibfnamefont {M.~V.~d.}\ \bibnamefont
  {Nest}},\ }\href {http://arxiv.org/abs/1305.6190} {\bibfield  {journal}
  {\bibinfo  {journal} {Quant. Inf. Comp}\ }\textbf {\bibinfo {volume} {14}},\
  \bibinfo {pages} {0633–0648} (\bibinfo {year} {2014})},\ \Eprint
  {http://arxiv.org/abs/1305.6190} {arXiv:1305.6190} \BibitemShut {NoStop}%
\bibitem [{\citenamefont {Nakata}\ \emph {et~al.}(2014)\citenamefont {Nakata},
  \citenamefont {Koashi},\ and\ \citenamefont
  {Murao}}]{nakata_generating_2014}%
  \BibitemOpen
  \bibfield  {author} {\bibinfo {author} {\bibfnamefont {Y.}~\bibnamefont
  {Nakata}}, \bibinfo {author} {\bibfnamefont {M.}~\bibnamefont {Koashi}}, \
  and\ \bibinfo {author} {\bibfnamefont {M.}~\bibnamefont {Murao}},\ }\href
  {\doibase 10.1088/1367-2630/16/5/053043} {\bibfield  {journal} {\bibinfo
  {journal} {New J. Phys.}\ }\textbf {\bibinfo {volume} {16}},\ \bibinfo
  {pages} {053043} (\bibinfo {year} {2014})},\ \Eprint
  {http://arxiv.org/abs/1311.1128} {arXiv:1311.1128} \BibitemShut {NoStop}%
\bibitem [{\citenamefont {Gross}\ \emph {et~al.}(2007)\citenamefont {Gross},
  \citenamefont {Audenaert},\ and\ \citenamefont {Eisert}}]{gross_evenly_2007}%
  \BibitemOpen
  \bibfield  {author} {\bibinfo {author} {\bibfnamefont {D.}~\bibnamefont
  {Gross}}, \bibinfo {author} {\bibfnamefont {K.}~\bibnamefont {Audenaert}}, \
  and\ \bibinfo {author} {\bibfnamefont {J.}~\bibnamefont {Eisert}},\ }\href
  {\doibase 10.1063/1.2716992} {\bibfield  {journal} {\bibinfo  {journal} {J.
  Math. Phys.}\ }\textbf {\bibinfo {volume} {48}},\ \bibinfo {pages} {052104}
  (\bibinfo {year} {2007})},\ \Eprint {http://arxiv.org/abs/quant-ph/0611002}
  {arXiv:quant-ph/0611002} \BibitemShut {NoStop}%
\bibitem [{\citenamefont {Brand\~ao}\ \emph {et~al.}(2016)\citenamefont
  {Brand\~ao}, \citenamefont {Harrow},\ and\ \citenamefont
  {Horodecki}}]{brandao_local_2016}%
  \BibitemOpen
  \bibfield  {author} {\bibinfo {author} {\bibfnamefont {F.~G. S.~L.}\
  \bibnamefont {Brand\~ao}}, \bibinfo {author} {\bibfnamefont {A.~W.}\
  \bibnamefont {Harrow}}, \ and\ \bibinfo {author} {\bibfnamefont
  {M.}~\bibnamefont {Horodecki}},\ }\href {\doibase 10.1007/s00220-016-2706-8}
  {\bibfield  {journal} {\bibinfo  {journal} {Commun. Math. Phys.}\ }\textbf
  {\bibinfo {volume} {346}},\ \bibinfo {pages} {397} (\bibinfo {year}
  {2016})}\BibitemShut {NoStop}%
\bibitem [{\citenamefont {Bremner}\ \emph {et~al.}(2010)\citenamefont
  {Bremner}, \citenamefont {Jozsa},\ and\ \citenamefont
  {Shepherd}}]{BremnerOld}%
  \BibitemOpen
  \bibfield  {author} {\bibinfo {author} {\bibfnamefont {M.~J.}\ \bibnamefont
  {Bremner}}, \bibinfo {author} {\bibfnamefont {R.}~\bibnamefont {Jozsa}}, \
  and\ \bibinfo {author} {\bibfnamefont {D.~J.}\ \bibnamefont {Shepherd}},\
  }\href {\doibase 10.1098/rspa.2010.0301} {\bibfield  {journal} {\bibinfo
  {journal} {Proc. Roy. Soc.}\ }\textbf {\bibinfo {volume} {467}},\ \bibinfo
  {pages} {2126} (\bibinfo {year} {2010})},\ \Eprint
  {http://arxiv.org/abs/1005.1407} {arXiv:1005.1407} \BibitemShut {NoStop}%
\bibitem [{\citenamefont {Terhal}\ and\ \citenamefont
  {DiVincenzo}(2004)}]{terhal_adaptive_2004}%
  \BibitemOpen
  \bibfield  {author} {\bibinfo {author} {\bibfnamefont {B.~M.}\ \bibnamefont
  {Terhal}}\ and\ \bibinfo {author} {\bibfnamefont {D.~P.}\ \bibnamefont
  {DiVincenzo}},\ }\href {http://arxiv.org/abs/quant-ph/0205133} {\bibfield
  {journal} {\bibinfo  {journal} {Quant. Inf. Comp.}\ }\textbf {\bibinfo
  {volume} {4}},\ \bibinfo {pages} {134} (\bibinfo {year} {2004})},\ \Eprint
  {http://arxiv.org/abs/quant-ph/0205133} {arXiv:quant-ph/0205133} \BibitemShut
  {NoStop}%
\bibitem [{\citenamefont {Kuperberg}(2015)}]{Kuperberg15JonesPolynomial}%
  \BibitemOpen
  \bibfield  {author} {\bibinfo {author} {\bibfnamefont {G.}~\bibnamefont
  {Kuperberg}},\ }\href {\doibase 10.4086/toc.2015.v011a006} {\bibfield
  {journal} {\bibinfo  {journal} {Theory of Computing}\ }\textbf {\bibinfo
  {volume} {11}},\ \bibinfo {pages} {183} (\bibinfo {year} {2015})}\BibitemShut
  {NoStop}%
\bibitem [{\citenamefont {Fujii}\ and\ \citenamefont
  {Morimae}(2017)}]{fujii_commuting_2017}%
  \BibitemOpen
  \bibfield  {author} {\bibinfo {author} {\bibfnamefont {K.}~\bibnamefont
  {Fujii}}\ and\ \bibinfo {author} {\bibfnamefont {T.}~\bibnamefont
  {Morimae}},\ }\href {\doibase 10.1088/1367-2630/aa5fdb} {\bibfield  {journal}
  {\bibinfo  {journal} {New J. Phys.}\ }\textbf {\bibinfo {volume} {19}},\
  \bibinfo {pages} {033003} (\bibinfo {year} {2017})},\ \Eprint
  {http://arxiv.org/abs/1311.2128} {arXiv:1311.2128} \BibitemShut {NoStop}%
\bibitem [{\citenamefont {Bouland}\ \emph {et~al.}(2018)\citenamefont
  {Bouland}, \citenamefont {Fefferman}, \citenamefont {Nirkhe},\ and\
  \citenamefont {Vazirani}}]{bouland_quantum_2018}%
  \BibitemOpen
  \bibfield  {author} {\bibinfo {author} {\bibfnamefont {A.}~\bibnamefont
  {Bouland}}, \bibinfo {author} {\bibfnamefont {B.}~\bibnamefont {Fefferman}},
  \bibinfo {author} {\bibfnamefont {C.}~\bibnamefont {Nirkhe}}, \ and\ \bibinfo
  {author} {\bibfnamefont {U.}~\bibnamefont {Vazirani}},\ }\href
  {http://arxiv.org/abs/1803.04402} {\  (\bibinfo {year} {2018})},\ \Eprint
  {http://arxiv.org/abs/1803.04402} {arXiv:1803.04402} \BibitemShut {NoStop}%
\bibitem [{\citenamefont {Mann}\ and\ \citenamefont
  {Bremner}()}]{mann_complexity_2017}%
  \BibitemOpen
  \bibfield  {author} {\bibinfo {author} {\bibfnamefont {R.~L.}\ \bibnamefont
  {Mann}}\ and\ \bibinfo {author} {\bibfnamefont {M.~J.}\ \bibnamefont
  {Bremner}},\ }\href {http://arxiv.org/abs/1711.00686} {\ }\Eprint
  {http://arxiv.org/abs/1711.00686} {arXiv:1711.00686} \BibitemShut {NoStop}%
\bibitem [{\citenamefont {Aaronson}(2016)}]{aaronson2016}%
  \BibitemOpen
  \bibfield  {author} {\bibinfo {author} {\bibfnamefont {S.}~\bibnamefont
  {Aaronson}},\ }\enquote {\bibinfo {title} {{P$\neq$NP?}}}\ in\ \href
  {\doibase 10.1007/978-3-319-32162-2} {\emph {\bibinfo {booktitle} {Open
  problems in mathematics}}}\ (\bibinfo  {publisher} {Springer},\ \bibinfo
  {year} {2016})\BibitemShut {NoStop}%
\bibitem [{\citenamefont {Fortnow}(2005)}]{fortnow2005beyond}%
  \BibitemOpen
  \bibfield  {author} {\bibinfo {author} {\bibfnamefont {L.}~\bibnamefont
  {Fortnow}},\ }in\ \href {\doibase 10.1145/1060590.1060609} {\emph {\bibinfo
  {booktitle} {Proceedings of the Thirty-seventh Annual ACM Symposium on Theory
  of Computing}}},\ \bibinfo {series and number} {STOC '05}\ (\bibinfo
  {publisher} {ACM},\ \bibinfo {year} {2005})\BibitemShut {NoStop}%
\bibitem [{\citenamefont {Karp}\ and\ \citenamefont {Lipton}(1980)}]{karp1980}%
  \BibitemOpen
  \bibfield  {author} {\bibinfo {author} {\bibfnamefont {R.~M.}\ \bibnamefont
  {Karp}}\ and\ \bibinfo {author} {\bibfnamefont {R.~J.}\ \bibnamefont
  {Lipton}},\ }in\ \href {\doibase 10.1145/800141.804678} {\emph {\bibinfo
  {booktitle} {Proceedings of the Twelfth Annual ACM Symposium on Theory of
  Computing}}},\ \bibinfo {series and number} {STOC '80}\ (\bibinfo {year}
  {1980})\BibitemShut {NoStop}%
\bibitem [{\citenamefont {Dankert}\ \emph {et~al.}(2009)\citenamefont
  {Dankert}, \citenamefont {Cleve}, \citenamefont {Emerson},\ and\
  \citenamefont {Livine}}]{dankert_exact_2009}%
  \BibitemOpen
  \bibfield  {author} {\bibinfo {author} {\bibfnamefont {C.}~\bibnamefont
  {Dankert}}, \bibinfo {author} {\bibfnamefont {R.}~\bibnamefont {Cleve}},
  \bibinfo {author} {\bibfnamefont {J.}~\bibnamefont {Emerson}}, \ and\
  \bibinfo {author} {\bibfnamefont {E.}~\bibnamefont {Livine}},\ }\href
  {\doibase 10.1103/PhysRevA.80.012304} {\bibfield  {journal} {\bibinfo
  {journal} {Phys. Rev. A}\ }\textbf {\bibinfo {volume} {80}},\ \bibinfo
  {pages} {012304} (\bibinfo {year} {2009})},\ \Eprint
  {http://arxiv.org/abs/quant-ph/0606161} {arXiv:quant-ph/0606161} \BibitemShut
  {NoStop}%
\bibitem [{\citenamefont {Onorati}\ \emph {et~al.}(2017)\citenamefont
  {Onorati}, \citenamefont {Buerschaper}, \citenamefont {Kliesch},
  \citenamefont {Brown}, \citenamefont {Werner},\ and\ \citenamefont
  {Eisert}}]{onorati_mixing_2017}%
  \BibitemOpen
  \bibfield  {author} {\bibinfo {author} {\bibfnamefont {E.}~\bibnamefont
  {Onorati}}, \bibinfo {author} {\bibfnamefont {O.}~\bibnamefont
  {Buerschaper}}, \bibinfo {author} {\bibfnamefont {M.}~\bibnamefont
  {Kliesch}}, \bibinfo {author} {\bibfnamefont {W.}~\bibnamefont {Brown}},
  \bibinfo {author} {\bibfnamefont {A.~H.}\ \bibnamefont {Werner}}, \ and\
  \bibinfo {author} {\bibfnamefont {J.}~\bibnamefont {Eisert}},\ }\href
  {\doibase 10.1007/s00220-017-2950-6} {\bibfield  {journal} {\bibinfo
  {journal} {Commun. Math. Phys.}\ }\textbf {\bibinfo {volume} {355}},\
  \bibinfo {pages} {905} (\bibinfo {year} {2017})},\ \Eprint
  {http://arxiv.org/abs/1606.01914} {arXiv:1606.01914} \BibitemShut {NoStop}%
\bibitem [{\citenamefont {Harrow}\ and\ \citenamefont
  {Low}(2009)}]{harrow_random_2009}%
  \BibitemOpen
  \bibfield  {author} {\bibinfo {author} {\bibfnamefont {A.~W.}\ \bibnamefont
  {Harrow}}\ and\ \bibinfo {author} {\bibfnamefont {R.~A.}\ \bibnamefont
  {Low}},\ }\href {\doibase 10.1007/s00220-009-0873-6} {\bibfield  {journal}
  {\bibinfo  {journal} {Commun. Math. Phys.}\ }\textbf {\bibinfo {volume}
  {291}},\ \bibinfo {pages} {257} (\bibinfo {year} {2009})},\ \Eprint
  {http://arxiv.org/abs/0802.1919} {arXiv:0802.1919} \BibitemShut {NoStop}%
\bibitem [{\citenamefont {Zhu}\ \emph {et~al.}(2016)\citenamefont {Zhu},
  \citenamefont {Kueng}, \citenamefont {Grassl},\ and\ \citenamefont
  {Gross}}]{zhu_clifford_2016}%
  \BibitemOpen
  \bibfield  {author} {\bibinfo {author} {\bibfnamefont {H.}~\bibnamefont
  {Zhu}}, \bibinfo {author} {\bibfnamefont {R.}~\bibnamefont {Kueng}}, \bibinfo
  {author} {\bibfnamefont {M.}~\bibnamefont {Grassl}}, \ and\ \bibinfo {author}
  {\bibfnamefont {D.}~\bibnamefont {Gross}},\ }\href@noop {} {\  (\bibinfo
  {year} {2016})},\ \Eprint {http://arxiv.org/abs/1609.08172}
  {arXiv:1609.08172} \BibitemShut {NoStop}%
\bibitem [{\citenamefont {Emerson}\ \emph {et~al.}(2003)\citenamefont
  {Emerson}, \citenamefont {Weinstein}, \citenamefont {Saraceno}, \citenamefont
  {Lloyd},\ and\ \citenamefont {Cory}}]{emerson_pseudo-random_2003}%
  \BibitemOpen
  \bibfield  {author} {\bibinfo {author} {\bibfnamefont {J.}~\bibnamefont
  {Emerson}}, \bibinfo {author} {\bibfnamefont {Y.~S.}\ \bibnamefont
  {Weinstein}}, \bibinfo {author} {\bibfnamefont {M.}~\bibnamefont {Saraceno}},
  \bibinfo {author} {\bibfnamefont {S.}~\bibnamefont {Lloyd}}, \ and\ \bibinfo
  {author} {\bibfnamefont {D.~G.}\ \bibnamefont {Cory}},\ }\href {\doibase
  10.1126/science.1090790} {\bibfield  {journal} {\bibinfo  {journal}
  {Science}\ }\textbf {\bibinfo {volume} {302}},\ \bibinfo {pages} {2098}
  (\bibinfo {year} {2003})}\BibitemShut {NoStop}%
\bibitem [{\citenamefont {Brown}\ \emph {et~al.}(2008)\citenamefont {Brown},
  \citenamefont {Weinstein},\ and\ \citenamefont
  {Viola}}]{brown_quantum_2008-1}%
  \BibitemOpen
  \bibfield  {author} {\bibinfo {author} {\bibfnamefont {W.~G.}\ \bibnamefont
  {Brown}}, \bibinfo {author} {\bibfnamefont {Y.~S.}\ \bibnamefont
  {Weinstein}}, \ and\ \bibinfo {author} {\bibfnamefont {L.}~\bibnamefont
  {Viola}},\ }\href {\doibase 10.1103/PhysRevA.77.040303} {\bibfield  {journal}
  {\bibinfo  {journal} {Phys. Rev. A}\ }\textbf {\bibinfo {volume} {77}},\
  \bibinfo {pages} {040303} (\bibinfo {year} {2008})},\ \Eprint
  {http://arxiv.org/abs/0802.2675} {arXiv:0802.2675} \BibitemShut {NoStop}%
\bibitem [{\citenamefont {Neill}\ \emph {et~al.}(2017)\citenamefont {Neill},
  \citenamefont {Roushan}, \citenamefont {Kechedzhi}, \citenamefont {Boixo},
  \citenamefont {Isakov}, \citenamefont {Smelyanskiy}, \citenamefont {Barends},
  \citenamefont {Burkett}, \citenamefont {Chen},\ and\ \citenamefont
  {Chen}}]{neill_blueprint_2017}%
  \BibitemOpen
  \bibfield  {author} {\bibinfo {author} {\bibfnamefont {C.}~\bibnamefont
  {Neill}}, \bibinfo {author} {\bibfnamefont {P.}~\bibnamefont {Roushan}},
  \bibinfo {author} {\bibfnamefont {K.}~\bibnamefont {Kechedzhi}}, \bibinfo
  {author} {\bibfnamefont {S.}~\bibnamefont {Boixo}}, \bibinfo {author}
  {\bibfnamefont {S.~V.}\ \bibnamefont {Isakov}}, \bibinfo {author}
  {\bibfnamefont {V.}~\bibnamefont {Smelyanskiy}}, \bibinfo {author}
  {\bibfnamefont {R.}~\bibnamefont {Barends}}, \bibinfo {author} {\bibfnamefont
  {B.}~\bibnamefont {Burkett}}, \bibinfo {author} {\bibfnamefont
  {Y.}~\bibnamefont {Chen}}, \ and\ \bibinfo {author} {\bibfnamefont
  {Z.}~\bibnamefont {Chen}},\ }\href@noop {} {\  (\bibinfo {year} {2017})},\
  \Eprint {http://arxiv.org/abs/1709.06678} {arXiv:1709.06678} \BibitemShut
  {NoStop}%
\bibitem [{\citenamefont {Boykin}\ \emph {et~al.}(2000)\citenamefont {Boykin},
  \citenamefont {Mor}, \citenamefont {Pulver}, \citenamefont {Roychowdhury},\
  and\ \citenamefont {Vatan}}]{boykin_new_2000}%
  \BibitemOpen
  \bibfield  {author} {\bibinfo {author} {\bibfnamefont {P.~O.}\ \bibnamefont
  {Boykin}}, \bibinfo {author} {\bibfnamefont {T.}~\bibnamefont {Mor}},
  \bibinfo {author} {\bibfnamefont {M.}~\bibnamefont {Pulver}}, \bibinfo
  {author} {\bibfnamefont {V.}~\bibnamefont {Roychowdhury}}, \ and\ \bibinfo
  {author} {\bibfnamefont {F.}~\bibnamefont {Vatan}},\ }\href {\doibase
  10.1016/S0020-0190(00)00084-3} {\bibfield  {journal} {\bibinfo  {journal}
  {Inf. Proc. Lett.}\ }\textbf {\bibinfo {volume} {75}},\ \bibinfo {pages}
  {101} (\bibinfo {year} {2000})},\ \Eprint
  {http://arxiv.org/abs/quant-ph/9906054} {arXiv:quant-ph/9906054} \BibitemShut
  {NoStop}%
\bibitem [{\citenamefont {Kitaev}\ \emph {et~al.}(2002)\citenamefont {Kitaev},
  \citenamefont {Shen},\ and\ \citenamefont {Vyalyi}}]{kitaev2002classical}%
  \BibitemOpen
  \bibfield  {author} {\bibinfo {author} {\bibfnamefont {A.}~\bibnamefont
  {Kitaev}}, \bibinfo {author} {\bibfnamefont {A.}~\bibnamefont {Shen}}, \ and\
  \bibinfo {author} {\bibfnamefont {M.}~\bibnamefont {Vyalyi}},\ }\href@noop {}
  {\emph {\bibinfo {title} {Classical and Quantum Computation}}},\ Graduate
  studies in mathematics\ (\bibinfo  {publisher} {American Mathematical
  Society},\ \bibinfo {year} {2002})\BibitemShut {NoStop}%
\bibitem [{\citenamefont {Shi}(2003)}]{Shi:2003:BTC:2011508.2011515}%
  \BibitemOpen
  \bibfield  {author} {\bibinfo {author} {\bibfnamefont {Y.}~\bibnamefont
  {Shi}},\ }\href {http://dl.acm.org/citation.cfm?id=2011508.2011515}
  {\bibfield  {journal} {\bibinfo  {journal} {Quant. Inf. Comp.}\ }\textbf
  {\bibinfo {volume} {3}},\ \bibinfo {pages} {84} (\bibinfo {year} {2003})},\
  \Eprint {http://arxiv.org/abs/quant-ph/0205115} {arXiv:quant-ph/0205115}
  \BibitemShut {NoStop}%
\bibitem [{\citenamefont {Paetznick}\ and\ \citenamefont
  {Reichardt}(2013)}]{Paetznick13UFTQC_Transversal}%
  \BibitemOpen
  \bibfield  {author} {\bibinfo {author} {\bibfnamefont {A.}~\bibnamefont
  {Paetznick}}\ and\ \bibinfo {author} {\bibfnamefont {B.~W.}\ \bibnamefont
  {Reichardt}},\ }\href {\doibase 10.1103/PhysRevLett.111.090505} {\bibfield
  {journal} {\bibinfo  {journal} {Phys. Rev. Lett.}\ }\textbf {\bibinfo
  {volume} {111}},\ \bibinfo {pages} {090505} (\bibinfo {year} {2013})},\
  \Eprint {http://arxiv.org/abs/1304.3709} {arXiv:1304.3709} \BibitemShut
  {NoStop}%
\bibitem [{\citenamefont {Shor}(1996)}]{Shor:1996:FQC:874062.875509}%
  \BibitemOpen
  \bibfield  {author} {\bibinfo {author} {\bibfnamefont {P.~W.}\ \bibnamefont
  {Shor}},\ }in\ \href {\doibase 10.1109/SFCS.1996.548464} {\emph {\bibinfo
  {booktitle} {Proc. of 37th {Conf.} {Found.} {Comp.} {Sci.}}}}\ (\bibinfo
  {year} {1996})\ pp.\ \bibinfo {pages} {56--65},\ \Eprint
  {http://arxiv.org/abs/quant-ph/9605011} {arXiv:quant-ph/9605011} \BibitemShut
  {NoStop}%
\bibitem [{\citenamefont {Knill}\ \emph {et~al.}(1996)\citenamefont {Knill},
  \citenamefont {Laflamme},\ and\ \citenamefont
  {Zurek}}]{Knillquant-ph/9610011}%
  \BibitemOpen
  \bibfield  {author} {\bibinfo {author} {\bibfnamefont {E.}~\bibnamefont
  {Knill}}, \bibinfo {author} {\bibfnamefont {R.}~\bibnamefont {Laflamme}}, \
  and\ \bibinfo {author} {\bibfnamefont {W.}~\bibnamefont {Zurek}},\
  }\href@noop {} {\  (\bibinfo {year} {1996})},\ \Eprint
  {http://arxiv.org/abs/quant-ph/9610011} {arXiv:quant-ph/9610011} \BibitemShut
  {NoStop}%
\bibitem [{\citenamefont {Knill}\ \emph {et~al.}(1998)\citenamefont {Knill},
  \citenamefont {Laflamme},\ and\ \citenamefont {Zurek}}]{Knill365}%
  \BibitemOpen
  \bibfield  {author} {\bibinfo {author} {\bibfnamefont {E.}~\bibnamefont
  {Knill}}, \bibinfo {author} {\bibfnamefont {R.}~\bibnamefont {Laflamme}}, \
  and\ \bibinfo {author} {\bibfnamefont {W.~H.}\ \bibnamefont {Zurek}},\ }\href
  {\doibase 10.1098/rspa.1998.0166} {\bibfield  {journal} {\bibinfo  {journal}
  {Proc. Roy. Soc. A}\ }\textbf {\bibinfo {volume} {454}},\ \bibinfo {pages}
  {365} (\bibinfo {year} {1998})}\BibitemShut {NoStop}%
\bibitem [{\citenamefont {Gottesman}\ and\ \citenamefont
  {Chuang}(1999)}]{GateTeleportation}%
  \BibitemOpen
  \bibfield  {author} {\bibinfo {author} {\bibfnamefont {D.}~\bibnamefont
  {Gottesman}}\ and\ \bibinfo {author} {\bibfnamefont {I.~L.}\ \bibnamefont
  {Chuang}},\ }\href {\doibase 10.1038/46503} {\bibfield  {journal} {\bibinfo
  {journal} {Nature}\ }\textbf {\bibinfo {volume} {402}},\ \bibinfo {pages}
  {390} (\bibinfo {year} {1999})}\BibitemShut {NoStop}%
\bibitem [{\citenamefont {Raussendorf}\ \emph {et~al.}(2003)\citenamefont
  {Raussendorf}, \citenamefont {Browne},\ and\ \citenamefont
  {Briegel}}]{RaussendorfPhysRevA.68.022312}%
  \BibitemOpen
  \bibfield  {author} {\bibinfo {author} {\bibfnamefont {R.}~\bibnamefont
  {Raussendorf}}, \bibinfo {author} {\bibfnamefont {D.~E.}\ \bibnamefont
  {Browne}}, \ and\ \bibinfo {author} {\bibfnamefont {H.~J.}\ \bibnamefont
  {Briegel}},\ }\href {\doibase 10.1103/PhysRevA.68.022312} {\bibfield
  {journal} {\bibinfo  {journal} {Phys. Rev. A}\ }\textbf {\bibinfo {volume}
  {68}},\ \bibinfo {pages} {022312} (\bibinfo {year} {2003})},\ \Eprint
  {http://arxiv.org/abs/quant-ph/0301052} {arXiv:quant-ph/0301052} \BibitemShut
  {NoStop}%
\bibitem [{\citenamefont {Goldberg}\ and\ \citenamefont
  {Guo}(2014)}]{goldberg_complexity_2014}%
  \BibitemOpen
  \bibfield  {author} {\bibinfo {author} {\bibfnamefont {L.~A.}\ \bibnamefont
  {Goldberg}}\ and\ \bibinfo {author} {\bibfnamefont {H.}~\bibnamefont {Guo}},\
  }\href@noop {} {\  (\bibinfo {year} {2014})},\ \Eprint
  {http://arxiv.org/abs/1409.5627} {arXiv:1409.5627} \BibitemShut {NoStop}%
\bibitem [{\citenamefont {{Kitaev}}(1997)}]{Kitaev97_SolovayKitaevEarly}%
  \BibitemOpen
  \bibfield  {author} {\bibinfo {author} {\bibfnamefont {A.~Y.}\ \bibnamefont
  {{Kitaev}}},\ }\href {\doibase 10.1070/RM1997v052n06ABEH002155} {\bibfield
  {journal} {\bibinfo  {journal} {Russ. Math. Surv.}\ }\textbf {\bibinfo
  {volume} {52}},\ \bibinfo {pages} {1191} (\bibinfo {year}
  {1997})}\BibitemShut {NoStop}%
\bibitem [{\citenamefont {Nielsen}\ and\ \citenamefont
  {Chuang}(2000)}]{NielsenChuang}%
  \BibitemOpen
  \bibfield  {author} {\bibinfo {author} {\bibfnamefont {M.~A.}\ \bibnamefont
  {Nielsen}}\ and\ \bibinfo {author} {\bibfnamefont {I.~L.}\ \bibnamefont
  {Chuang}},\ }\href@noop {} {\emph {\bibinfo {title} {Quantum computation and
  quantum information}}},\ Cambridge Series on Information and the Natural
  Sciences\ (\bibinfo  {publisher} {Cambridge University Press},\ \bibinfo
  {year} {2000})\BibitemShut {NoStop}%
\bibitem [{\citenamefont {Bennett}\ \emph {et~al.}(1997)\citenamefont
  {Bennett}, \citenamefont {Bernstein}, \citenamefont {Brassard},\ and\
  \citenamefont {Vazirani}}]{Bennet97StrenghtsWeaknnesses_QC}%
  \BibitemOpen
  \bibfield  {author} {\bibinfo {author} {\bibfnamefont {C.~H.}\ \bibnamefont
  {Bennett}}, \bibinfo {author} {\bibfnamefont {E.}~\bibnamefont {Bernstein}},
  \bibinfo {author} {\bibfnamefont {G.}~\bibnamefont {Brassard}}, \ and\
  \bibinfo {author} {\bibfnamefont {U.}~\bibnamefont {Vazirani}},\ }\href
  {\doibase 10.1137/S0097539796300933} {\bibfield  {journal} {\bibinfo
  {journal} {SIAM J. Comp.}\ }\textbf {\bibinfo {volume} {26}},\ \bibinfo
  {pages} {1510} (\bibinfo {year} {1997})},\ \Eprint
  {http://arxiv.org/abs/quant-ph/9701001} {arXiv:quant-ph/9701001} \BibitemShut
  {NoStop}%
\bibitem [{\citenamefont {Aaronson}(2005)}]{Aaronson-ProcRS-2005}%
  \BibitemOpen
  \bibfield  {author} {\bibinfo {author} {\bibfnamefont {S.}~\bibnamefont
  {Aaronson}},\ }\href {\doibase 10.1098/rspa.2005.1546} {\bibfield  {journal}
  {\bibinfo  {journal} {Proc. Roy. Soc. A}\ }\textbf {\bibinfo {volume}
  {461}},\ \bibinfo {pages} {2063} (\bibinfo {year} {2005})},\ \Eprint
  {http://arxiv.org/abs/quant-ph/0412187} {arXiv:quant-ph/0412187} \BibitemShut
  {NoStop}%
\bibitem [{\citenamefont {Dell}\ \emph {et~al.}(2014)\citenamefont {Dell},
  \citenamefont {Husfeldt}, \citenamefont {Marx}, \citenamefont {Taslaman},\
  and\ \citenamefont {Wahlén}}]{dell_exponential_2014}%
  \BibitemOpen
  \bibfield  {author} {\bibinfo {author} {\bibfnamefont {H.}~\bibnamefont
  {Dell}}, \bibinfo {author} {\bibfnamefont {T.}~\bibnamefont {Husfeldt}},
  \bibinfo {author} {\bibfnamefont {D.}~\bibnamefont {Marx}}, \bibinfo {author}
  {\bibfnamefont {N.}~\bibnamefont {Taslaman}}, \ and\ \bibinfo {author}
  {\bibfnamefont {M.}~\bibnamefont {Wahlén}},\ }\href {\doibase
  10.1145/2635812} {\bibfield  {journal} {\bibinfo  {journal} {ACM Trans.
  Algorithms}\ }\textbf {\bibinfo {volume} {10}},\ \bibinfo {pages} {21:1}
  (\bibinfo {year} {2014})}\BibitemShut {NoStop}%
\bibitem [{\citenamefont {Cui}\ and\ \citenamefont
  {Wang}(2015)}]{cui_universal_2015}%
  \BibitemOpen
  \bibfield  {author} {\bibinfo {author} {\bibfnamefont {S.~X.}\ \bibnamefont
  {Cui}}\ and\ \bibinfo {author} {\bibfnamefont {Z.}~\bibnamefont {Wang}},\
  }\href {\doibase 10.1063/1.4914941} {\bibfield  {journal} {\bibinfo
  {journal} {J. Math. Phys.}\ }\textbf {\bibinfo {volume} {56}},\ \bibinfo
  {pages} {032202} (\bibinfo {year} {2015})}\BibitemShut {NoStop}%
\bibitem [{\citenamefont {Bocharov}\ \emph {et~al.}(2015)\citenamefont
  {Bocharov}, \citenamefont {Roetteler},\ and\ \citenamefont
  {Svore}}]{bocharov_efficient_2015}%
  \BibitemOpen
  \bibfield  {author} {\bibinfo {author} {\bibfnamefont {A.}~\bibnamefont
  {Bocharov}}, \bibinfo {author} {\bibfnamefont {M.}~\bibnamefont {Roetteler}},
  \ and\ \bibinfo {author} {\bibfnamefont {K.~M.}\ \bibnamefont {Svore}},\
  }\href {\doibase 10.1103/PhysRevA.91.052317} {\bibfield  {journal} {\bibinfo
  {journal} {Phys.\ Rev.\ A}\ }\textbf {\bibinfo {volume} {91}},\ \bibinfo
  {pages} {052317} (\bibinfo {year} {2015})},\ \Eprint
  {http://arxiv.org/abs/1409.3552} {arXiv:1409.3552} \BibitemShut {NoStop}%
\bibitem [{\citenamefont {Cleve}\ \emph {et~al.}(2016)\citenamefont {Cleve},
  \citenamefont {Leung}, \citenamefont {Liu},\ and\ \citenamefont
  {Wang}}]{cleve_near-linear_2015}%
  \BibitemOpen
  \bibfield  {author} {\bibinfo {author} {\bibfnamefont {R.}~\bibnamefont
  {Cleve}}, \bibinfo {author} {\bibfnamefont {D.}~\bibnamefont {Leung}},
  \bibinfo {author} {\bibfnamefont {L.}~\bibnamefont {Liu}}, \ and\ \bibinfo
  {author} {\bibfnamefont {C.}~\bibnamefont {Wang}},\ }\href
  {http://arxiv.org/abs/1501.04592} {\bibfield  {journal} {\bibinfo  {journal}
  {Quant. Inf. Comp.}\ }\textbf {\bibinfo {volume} {16}},\ \bibinfo {pages}
  {0721} (\bibinfo {year} {2016})},\ \Eprint {http://arxiv.org/abs/1501.04592}
  {arXiv:1501.04592} \BibitemShut {NoStop}%
\bibitem [{\citenamefont {Koenig}\ and\ \citenamefont
  {Smolin}(2014)}]{koenig_how_2014}%
  \BibitemOpen
  \bibfield  {author} {\bibinfo {author} {\bibfnamefont {R.}~\bibnamefont
  {Koenig}}\ and\ \bibinfo {author} {\bibfnamefont {J.~A.}\ \bibnamefont
  {Smolin}},\ }\href {\doibase 10.1063/1.4903507} {\bibfield  {journal}
  {\bibinfo  {journal} {J. Math. Phys.}\ }\textbf {\bibinfo {volume} {55}},\
  \bibinfo {pages} {122202} (\bibinfo {year} {2014})},\ \Eprint
  {http://arxiv.org/abs/1406.2170} {arXiv:1406.2170} \BibitemShut {NoStop}%
\bibitem [{\citenamefont {Eisert}\ \emph {et~al.}(2015)\citenamefont {Eisert},
  \citenamefont {Friesdorf},\ and\ \citenamefont {Gogolin}}]{1408.5148}%
  \BibitemOpen
  \bibfield  {author} {\bibinfo {author} {\bibfnamefont {J.}~\bibnamefont
  {Eisert}}, \bibinfo {author} {\bibfnamefont {M.}~\bibnamefont {Friesdorf}}, \
  and\ \bibinfo {author} {\bibfnamefont {C.}~\bibnamefont {Gogolin}},\ }\href
  {\doibase doi:10.1038/nphys3215} {\bibfield  {journal} {\bibinfo  {journal}
  {Nature Phys}\ }\textbf {\bibinfo {volume} {11}},\ \bibinfo {pages} {124}
  (\bibinfo {year} {2015})},\ \Eprint {http://arxiv.org/abs/1408.5148}
  {arXiv:1408.5148} \BibitemShut {NoStop}%
\bibitem [{\citenamefont {Polkovnikov}\ \emph {et~al.}(2011)\citenamefont
  {Polkovnikov}, \citenamefont {Sengupta}, \citenamefont {Silva},\ and\
  \citenamefont {Vengalattore}}]{ngupta_Silva_Vengalattore_2011}%
  \BibitemOpen
  \bibfield  {author} {\bibinfo {author} {\bibfnamefont {A.}~\bibnamefont
  {Polkovnikov}}, \bibinfo {author} {\bibfnamefont {K.}~\bibnamefont
  {Sengupta}}, \bibinfo {author} {\bibfnamefont {A.}~\bibnamefont {Silva}}, \
  and\ \bibinfo {author} {\bibfnamefont {M.}~\bibnamefont {Vengalattore}},\
  }\href {\doibase https://doi.org/10.1103/RevModPhys.83.863} {\bibfield
  {journal} {\bibinfo  {journal} {Rev.\ Mod.\ Phys.}\ }\textbf {\bibinfo
  {volume} {83}},\ \bibinfo {pages} {863} (\bibinfo {year} {2011})},\ \Eprint
  {http://arxiv.org/abs/1007.5331} {arXiv:1007.5331} \BibitemShut {NoStop}%
\bibitem [{\citenamefont {Jozsa}(2006)}]{Jozsa2006}%
  \BibitemOpen
  \bibfield  {author} {\bibinfo {author} {\bibfnamefont {R.}~\bibnamefont
  {Jozsa}},\ }\href@noop {} {\  (\bibinfo {year} {2006})},\ \Eprint
  {http://arxiv.org/abs/quant-ph/0603163} {arXiv:quant-ph/0603163} \BibitemShut
  {NoStop}%
\bibitem [{\citenamefont {Impagliazzo}\ and\ \citenamefont
  {Paturi}(1999)}]{CountingConjecture}%
  \BibitemOpen
  \bibfield  {author} {\bibinfo {author} {\bibfnamefont {R.}~\bibnamefont
  {Impagliazzo}}\ and\ \bibinfo {author} {\bibfnamefont {R.}~\bibnamefont
  {Paturi}},\ }in\ \href {\doibase 10.1109/CCC.1999.766282} {\emph {\bibinfo
  {booktitle} {Proc. XIV {IEEE} {Conf.} {Comp.} {Compl.}}}}\ (\bibinfo {year}
  {1999})\ pp.\ \bibinfo {pages} {237--240}\BibitemShut {NoStop}%
\bibitem [{\citenamefont {Aaronson}\ and\ \citenamefont
  {Chen}(2016)}]{aaronson_complexity-theoretic_2016}%
  \BibitemOpen
  \bibfield  {author} {\bibinfo {author} {\bibfnamefont {S.}~\bibnamefont
  {Aaronson}}\ and\ \bibinfo {author} {\bibfnamefont {L.}~\bibnamefont
  {Chen}},\ }\href@noop {} {\  (\bibinfo {year} {2016})},\ \Eprint
  {http://arxiv.org/abs/1612.05903} {arXiv:1612.05903} \BibitemShut {NoStop}%
\bibitem [{\citenamefont {Ozols}(2009)}]{ozols_how_2009}%
  \BibitemOpen
  \bibfield  {author} {\bibinfo {author} {\bibfnamefont {M.}~\bibnamefont
  {Ozols}},\ }\href
  {http://home.lu.lv/~sd20008/papers/essays/Random%20unitary%20[paper].pdf}
  {\emph {\bibinfo {title} {How to generate a random unitary matrix}}}\
  (\bibinfo  {publisher} {Mar},\ \bibinfo {year} {2009})\BibitemShut {NoStop}%
\bibitem [{\citenamefont {Mezzadri}(2006)}]{mezzadri_how_2006}%
  \BibitemOpen
  \bibfield  {author} {\bibinfo {author} {\bibfnamefont {F.}~\bibnamefont
  {Mezzadri}},\ }\href@noop {} {\  (\bibinfo {year} {2006})},\ \Eprint
  {http://arxiv.org/abs/math-ph/0609050} {arXiv:math-ph/0609050} \BibitemShut
  {NoStop}%
\bibitem [{\citenamefont {Weinstein}\ and\ \citenamefont
  {Hellberg}(2005)}]{weinstein_matrix_2005}%
  \BibitemOpen
  \bibfield  {author} {\bibinfo {author} {\bibfnamefont {Y.~S.}\ \bibnamefont
  {Weinstein}}\ and\ \bibinfo {author} {\bibfnamefont {C.~S.}\ \bibnamefont
  {Hellberg}},\ }\href {\doibase 10.1103/PhysRevA.72.022331} {\bibfield
  {journal} {\bibinfo  {journal} {Phys. Rev. A}\ }\textbf {\bibinfo {volume}
  {72}},\ \bibinfo {pages} {022331} (\bibinfo {year} {2005})}\BibitemShut
  {NoStop}%
\bibitem [{\citenamefont {Zyczkowski}\ and\ \citenamefont
  {Kus}(1994)}]{zyczkowski_random_1994}%
  \BibitemOpen
  \bibfield  {author} {\bibinfo {author} {\bibfnamefont {K.}~\bibnamefont
  {Zyczkowski}}\ and\ \bibinfo {author} {\bibfnamefont {M.}~\bibnamefont
  {Kus}},\ }\href {\doibase 10.1088/0305-4470/27/12/028} {\bibfield  {journal}
  {\bibinfo  {journal} {J. Phys. A}\ }\textbf {\bibinfo {volume} {27}},\
  \bibinfo {pages} {4235} (\bibinfo {year} {1994})}\BibitemShut {NoStop}%
\bibitem [{\citenamefont {Pozniak}\ \emph {et~al.}(1998)\citenamefont
  {Pozniak}, \citenamefont {Zyczkowski},\ and\ \citenamefont
  {Kus}}]{pozniak_composed_1998}%
  \BibitemOpen
  \bibfield  {author} {\bibinfo {author} {\bibfnamefont {M.}~\bibnamefont
  {Pozniak}}, \bibinfo {author} {\bibfnamefont {K.}~\bibnamefont {Zyczkowski}},
  \ and\ \bibinfo {author} {\bibfnamefont {M.}~\bibnamefont {Kus}},\ }\href
  {\doibase 10.1088/0305-4470/31/3/016} {\bibfield  {journal} {\bibinfo
  {journal} {J. Phys. A}\ }\textbf {\bibinfo {volume} {31}},\ \bibinfo {pages}
  {1059} (\bibinfo {year} {1998})},\ \Eprint
  {http://arxiv.org/abs/chao-dyn/9707006} {arXiv:chao-dyn/9707006} \BibitemShut
  {NoStop}%
\bibitem [{\citenamefont {Haake}(2010)}]{haake_quantum_2010}%
  \BibitemOpen
  \bibfield  {author} {\bibinfo {author} {\bibfnamefont {F.}~\bibnamefont
  {Haake}},\ }\href@noop {} {\emph {\bibinfo {title} {Quantum {signatures} of
  {chaos}}}},\ \bibinfo {series} {Springer {Series} in {Synergetics}},
  Vol.~\bibinfo {volume} {54}\ (\bibinfo  {publisher} {Springer Berlin
  Heidelberg},\ \bibinfo {address} {Berlin, Heidelberg},\ \bibinfo {year}
  {2010})\BibitemShut {NoStop}%
\end{thebibliography}%


\begin{appendix} 
\section{Some facts on random matrix theory, the Haar measure, and unitary designs}

\subsection{The Haar measure} 
\label{app:haar measure}

In this appendix, we give a precise definition of the Haar measure on the unitary group. 
To do so, let us first define a Radon measure. 

\begin{definition}[Radon measure]
	Let $(X,\mathcal{T})$ be a topological space and $\mathcal{B}$ its Borel algebra. A \emph{Radon measure} on $X$ is a measure $\mu : \mathcal{B} \rightarrow [0,+\infty]$ such that
	\begin{enumerate}[label=\roman*]
		\item for any compact set $K \subset X, \ \mu(K)< \infty$
		\item for any $B \in \mathcal{B}, \ \mu(B)=\inf \{ \mu(V) : B \subset V \text{ and                    } V \text{ open} \}$
		\item for any open set $V \subset X, \ \mu(V)=\sup \{ \mu(K) : K \subset V \text{                  and } K \text{ compact} \}$.
	\end{enumerate}
\end{definition}


\begin{definition}[Haar measure on the unitary group]
	The Haar measure is the unique (up to a strictly positive scalar factor) Radon measure which is non-zero on non-empty open sets and is left- and right-translation invariant, i.e.
	\begin{equation}
	\mu_{\rm Haar} (U) > 0 \quad \text{ for any non-empty open set } U \subset \mathcal{U}  
	\end{equation}
	and
\begin{equation}
	\mu_{\rm Haar} (B)=\mu_{\rm Haar} (u B)=\mu_{\rm Haar}(B u)
\end{equation}
for any $u \in \mathcal{U}$ and Borel set $B$ of $\mathcal{U}$, where the left- and right-translate of $B$ with respect to $u$ is given by
\begin{equation}
	uB=\set{u\, b : b \in B} \qquad \text{ and} \qquad 	Bu=\set{b\, u : b \in B}.
\end{equation}
\end{definition}


\subsection{Random matrix ensembles} 
\label{app:random matrices}

For the calculation of the distribution matrix elements of Haar-random unitaries it is instructive to introduce a few important ensembles of random matrices.
In this appendix we do so from a rather hands-on perspective. 

\begin{itemize}
    \item G$(N)$ (Ginibre Ensemble): The set of matrices $Z$ with complex Gaussian
        entries. 
        
        G$(N)$ is characterised by the measure $\mathrm{d}\mu_G(Z) \coloneqq
        \pi^{-N^2} \exp(-\tr(Z^\dagger Z) \mathrm{d} Z $, i.e., each individual
        entry $z_{i,j}$ is distributed as $\exp(-\abs{z_{i,j}}^2)/\pi$. 
    \item GUE$(N)$ (Gaussian Unitary Ensemble): The set of $N \times N$ Hermitian matrices with
        complex Gaussian entries, i.e., $H \in \mathrm{GUE} \Leftrightarrow H
        = D + R + R^\dagger$, where $D$ is a diagonal matrix with real Gaussian
        entries and $R$ is an upper triangular matrix with complex Gaussian
        entries. 

        GUE($N$) is characterised by the measure $\mathrm{d}\mu_{GUE} = {Z_{\text{GUE}(N)}}^{-1} \exp(-{
            {N} \mathrm{tr} (H^2)}/2 ) \mathrm{d} H$ on the space of Hermitian matrices. 
    \item CUE$(N)$ (Circular Unitary Ensemble): The set of Haar-random $N
        \times N $ unitary matrices.

        CUE$(N)$ is characterised by the Haar measure
        $\mathrm{d}\mu_{\mathrm{Haar}}$.
\end{itemize} 

    All of $\mathrm{d}\mu_{GUE}, \mathrm{d}\mu_G$, and $
    \mathrm{d}\mu_{\mathrm{Haar}}$ are left-
and right invariant under the action of $U(N)$. 
There are two ways of constructing Haar-random matrices. 
\begin{enumerate} 
    \item Draw a Gaussian matrix $Z \in \text{G}(N)$, and perform the unique QR
        decomposition such that $Z = QR$, with an orthogonal matrix $Q$ and 
        $R$ is required to have positive diagonal entries. 
        Setting $U = Q$, yields a Haar-random unitary \cite{ozols_how_2009,
        mezzadri_how_2006}.

    \item Draw a GUE matrix $Z \in$ GUE. Since $Z$ is Hermitian, the
        eigenvectors $v_i, i = 1, \ldots, N$ of $Z$ are orthonormal. 
        Multiplying each eigenvector $v_i$ by a random phase $\ee^{\phi_i}$ we
        can construct a Haar-random unitary matrix $U = (\ee^{\phi_1} v_1 \,
        \ee^{\phi_2} v_2 \, \cdots \, \ee^{\phi_n} v_N)$ writing those
        eigenvectors into the columns of $U$ \cite{weinstein_matrix_2005}. 
\end{enumerate} 
\medskip

\subsection{Unitary designs}
\label{app:k k-1 designs}
It is a simple exercise to show that if $\mu$ is a unitary $k$-design, all up to the $k^{\rm th}$ moments of $\mu$ equal the moments of the Haar measure. 
\begin{lemma}[$k-1$ designs from $k$ designs]
    \label{k and k-1 designs}
    Let $\mu$ be a distribution on the unitary group $U(N)$ that is an exact
    unitary $k$-design. Then $\mu$ is also a $(k-1)$-design.
\end{lemma}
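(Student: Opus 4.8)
The plan is to realise the $(k-1)$-th moment operator as a ``compression'' of the $k$-th one, obtained by tensoring with a fixed trace-one operator on an auxiliary copy of $\Hc$ and then tracing that copy out; equality of the $k$-th moment operators then descends to the $(k-1)$-th. Concretely, write $\Hc^{\otimes k}=\Hc^{\otimes(k-1)}\otimes\Hc$ and let $\tr_k$ denote the partial trace over the last factor. I claim that for \emph{any} distribution $\nu$ on $U(N)$ and any $Y\in\Lc(\Hc^{\otimes(k-1)})$,
\begin{equation}
M_\nu^{k-1}(Y)\;=\;\tr_k\!\left[\,M_\nu^{k}\!\left(Y\otimes\tfrac{1}{N}\idb\right)\right].\notag
\end{equation}

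To check this I would use that $U^{\otimes k}$ factorises as $U^{\otimes(k-1)}\otimes U$ across the above bipartition, and that $Y\otimes\idb/N$ factorises as well; hence $U^{\otimes k}\bigl(Y\otimes\tfrac{1}{N}\idb\bigr)(U^\dagger)^{\otimes k}=\bigl(U^{\otimes(k-1)}Y(U^\dagger)^{\otimes(k-1)}\bigr)\otimes\tfrac{1}{N}\idb$. Taking the partial trace over the last copy removes the factor $\idb/N$ (because $\tr(\idb/N)=1$ and $UU^\dagger=\idb$), leaving $U^{\otimes(k-1)}Y(U^\dagger)^{\otimes(k-1)}$; averaging over $\nu$ and using linearity of the partial trace gives the displayed identity. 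With this in hand the Lemma is immediate: apply the identity once for $\nu=\mu$ and once for $\nu=\mu_{\rm Haar}$ and invoke the hypothesis $M_\mu^{k}=M_{\mu_{\rm Haar}}^{k}$ to get $M_\mu^{k-1}(Y)=\tr_k[M_\mu^{k}(Y\otimes\idb/N)]=\tr_k[M_{\mu_{\rm Haar}}^{k}(Y\otimes\idb/N)]=M_{\mu_{\rm Haar}}^{k-1}(Y)$ for every $Y$, i.e.\ $\mu$ is an exact $(k-1)$-design. (The case $k=1$ is trivial, as $M_\nu^0$ is the identity on scalars for every $\nu$.)

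I do not anticipate a real obstacle; the only points requiring care are the bookkeeping of tensor factors and the observation that \emph{any} trace-one operator on the auxiliary copy does the job, $\idb/N$ being merely the convenient choice. As an aside, one could instead run the argument in the ``polynomials in matrix entries'' language used earlier in the paper: a balanced degree-$(k-1)$ monomial in the entries of $U$ and $U^\dagger$ can be multiplied by $\sum_a\lvert U_{a1}\rvert^2=1$ to become a sum of balanced degree-$k$ monomials, whose $\mu$-expectations coincide with their Haar values by the $k$-design hypothesis; this again forces $M_\mu^{k-1}=M_{\mu_{\rm Haar}}^{k-1}$. I would feature the partial-trace version as the main proof, since it is short and basis-independent.
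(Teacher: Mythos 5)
Your proof is correct and follows essentially the same route as the paper's: both rest on the observation that $U^{\otimes k}\bigl(Y\otimes c\,\idb\bigr)(U^\dagger)^{\otimes k}=\bigl(U^{\otimes(k-1)}Y(U^\dagger)^{\otimes(k-1)}\bigr)\otimes c\,\idb$, from which equality of $k$-th moments forces equality of $(k-1)$-th moments. The paper picks $X=Y\otimes\idb$ and leaves the final reduction implicit (its second display even equates operators living on spaces of different tensor rank), whereas you normalise to $\idb/N$ and trace out the auxiliary factor explicitly via $\tr_k$, which is a slightly cleaner rendering of the very same step.
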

\begin{proof}[Proof ]

Let $\mu$ be a unitary $k$ design. That means that it holds 
\begin{align}
    \mathbb{E}_\mu\left[ U ^{\otimes k } X(U^\dagger)^{\otimes k } \right]
    = \mathbb{E}_{\rm Haar} \left[ U ^{\otimes k } X(U^\dagger)^{\otimes k } \right]
\end{align}
for all operators $X$ acting on $\mathcal{L}(\mathcal{H}^{\otimes k})$.
Choose $X = Y \otimes \id$ with $Y$ being an arbitrary operator on
$\mathcal{L}(\mathcal{H}^{\otimes k-1})$. Then 
\begin{align}
    \mathbb{E}_\mu\left[ U ^{\otimes k -1} Y(U^\dagger)^{\otimes k -1 } \right]
    = \mathbb{E}_{\mu} \left[ U ^{\otimes k } X(U^\dagger)^{\otimes k } \right]
\end{align}
i.e., $\mu$ is a unitary $(k-1)$-design. 
\end{proof}
\begin{corollary}[Approximate $k-1$ designs from approximate $k$ designs]
    Let $\mu$ be an (additive or relative) approximate unitary $k$-design. Then $\mu$ is also an
    approximate unitary $(k-1)$-design, i.e., 
    \begin{align}
        \norm{M_\mu^k - M_{\rm Haar}^k}_\diamond \leq \epsilon \Rightarrow
        \norm{M_\mu^{k-1} - M_{\rm Haar}^{k-1}}_\diamond \leq \epsilon 
    \end{align}
    and likewise for relative errors. 
\end{corollary}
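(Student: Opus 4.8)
The plan is to lift the one-line computation in the proof of Lemma~\ref{k and k-1 designs} to the approximate setting, the key observation being that the extra tensor copy can be ``filled in'' with the maximally mixed state at no cost. First I would record the \emph{restriction identity}: for $\nu\in\{\mu,\mu_{\rm Haar}\}$ and any operator $Y$ on $\Hc^{\otimes(k-1)}$ one has $M_\nu^k(Y\otimes\id_\Hc)=M_\nu^{k-1}(Y)\otimes\id_\Hc$, since $U^{\otimes k}(Y\otimes\id_\Hc)(U^{\dagger})^{\otimes k}=\big(U^{\otimes(k-1)}Y(U^{\dagger})^{\otimes(k-1)}\big)\otimes\id_\Hc$ and this passes through the average over $\nu$. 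Writing $\Delta_j\coloneqq M_\mu^j-M_{\mu_{\rm Haar}}^j$ and $\iota\colon Y\mapsto Y\otimes\id_\Hc$, the identity reads $\Delta_k\circ\iota=\iota\circ\Delta_{k-1}$; moreover $\iota$ is left-inverted by $\tfrac1N\tr_k$ (partial trace over the $k$-th copy, then rescale), so for any linear $\Psi$ on $\Lc(\Hc^{\otimes k})$ the map $\tfrac1N\tr_k\circ\Psi\circ\iota$ is the ``$(k-1)$-restriction'' of $\Psi$; in particular $\tfrac1N\tr_k\circ M_\nu^k\circ\iota=M_\nu^{k-1}$.

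For the \emph{additive} case, assume $\norm{\Delta_k}_\diamond\le\epsilon$. For any ancilla $\Hc_R$ and any $X$ on $\Hc^{\otimes(k-1)}\otimes\Hc_R$ with $\norm{X}_1\le1$, put $\tilde X\coloneqq X\otimes(\id_\Hc/N)$, reordering tensor factors so the new copy occupies the $k$-th slot; then $\norm{\tilde X}_1=\norm{X}_1\le1$. Applying the restriction identity copy-wise, $(\Delta_k\otimes\id_R)(\tilde X)=\tfrac1N\big((\Delta_{k-1}\otimes\id_R)(X)\big)\otimes\id_\Hc$ up to reordering, so $\norm{(\Delta_{k-1}\otimes\id_R)(X)}_1=\norm{(\Delta_k\otimes\id_R)(\tilde X)}_1\le\norm{\Delta_k}_\diamond$. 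Taking the supremum over $X$ and over $\Hc_R$ gives $\norm{\Delta_{k-1}}_\diamond\le\norm{\Delta_k}_\diamond\le\epsilon$.

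For the \emph{relative} case, recall that the hypothesis says the maps $\Phi^-\coloneqq M_\mu^k-(1-\epsilon)M_{\mu_{\rm Haar}}^k$ and $\Phi^+\coloneqq(1+\epsilon)M_{\mu_{\rm Haar}}^k-M_\mu^k$ are completely positive. By the restriction identity applied to $\Phi^\pm$, $\tfrac1N\tr_k\circ\Phi^-\circ\iota=M_\mu^{k-1}-(1-\epsilon)M_{\mu_{\rm Haar}}^{k-1}$ and $\tfrac1N\tr_k\circ\Phi^+\circ\iota=(1+\epsilon)M_{\mu_{\rm Haar}}^{k-1}-M_\mu^{k-1}$. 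Since $\iota$ is completely positive ($Y\ge0\Rightarrow Y\otimes\id_\Hc\ge0$, and this is stable under tensoring with an ancilla up to a permutation of factors), $\tr_k$ is completely positive (it is a channel), and a composition of completely positive maps scaled by $1/N>0$ is completely positive, both right-hand sides are completely positive. Hence $(1-\epsilon)M_{\mu_{\rm Haar}}^{k-1}\le M_\mu^{k-1}\le(1+\epsilon)M_{\mu_{\rm Haar}}^{k-1}$, i.e.\ $\mu$ is a relative $\epsilon$-approximate $(k-1)$-design.

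I do not expect a genuine obstacle here---the statement really is a routine corollary of Lemma~\ref{k and k-1 designs}. The only points deserving care are the tensor-factor reorderings, which are conjugations by permutation unitaries and hence preserve both $\norm{\cdot}_1$ and positivity; the (routine) fact that $\iota\colon Y\mapsto Y\otimes\id_\Hc$ is completely positive; and the bookkeeping ensuring the ancilla used in the additive reduction is admissible in the diamond-norm definition, which it is since one takes a supremum over all ancillae. The substance of the whole argument is simply that a spare copy of $\id_\Hc/N$ is a normalised state and therefore free, in both the diamond-norm estimate and the complete-positivity one.
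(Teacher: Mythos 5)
Your proof is correct. The paper does not supply its own proof of this corollary — it is stated without argument immediately after Lemma~\ref{k and k-1 designs}, the exact-design case — and your argument is exactly the natural lift of that lemma's one-line trick ($X=Y\otimes\id$) to the approximate setting. The restriction identity $M_\nu^k(Y\otimes\id_\Hc)=M_\nu^{k-1}(Y)\otimes\id_\Hc$ is verified correctly; padding the input with $\id_\Hc/N$ is trace-norm-neutral, which gives the additive diamond-norm bound (indeed the sharper monotonicity $\norm{\Delta_{k-1}}_\diamond\le\norm{\Delta_k}_\diamond$); and the relative case follows since $\iota$, $\tr_k$, and positive rescaling are all completely positive, so the CP-order hypothesis $\Phi^\pm\ge 0$ pushes forward to the $(k-1)$-copy moment operators. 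The only small wording quibble: you could equally just write $(\Delta_k\otimes\id_R)(\tilde X)=(\Delta_{k-1}\otimes\id_R)(X)\otimes(\id_\Hc/N)$ without the extraneous $\tfrac1N$ prefactor, which is the same object but reads more cleanly alongside the $\norm{\id_\Hc/N}_1=1$ normalization; this does not affect correctness.
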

\smallskip

\section{Matrix elements of Haar-random unitaries} 
\label{matrix elements} 

Let us now derive the distribution of the amplitudes $\abs{\bra{a} U
\ket{b}}^2$ of the matrix elements a Haar-random unitary $U$
\cite{zyczkowski_random_1994,pozniak_composed_1998,weinstein_matrix_2005,haake_quantum_2010}.
To this end we apply knowledge about the distribution of entries of eigenvectors of GUE matrices and their relation to Haar-random unitaries (see App.~\ref{app:random matrices}). 
We follow Ref.~\cite{haake_quantum_2010}, Chapter 4.9. 

The eigenvectors $v_i$ of a given operator $H \in$ GUE$(N)$ have $N$ complex
components $c_k$
and unit norm $\norm{v_i}_2 = 1$. Since every eigenvector can be unitarily
transformed into an arbitrary vector of unit norm, the only invariant
characteristic of those eigenvectors is the norm itself. Thus, the joint probability for its components $\{c_k\}$
must read 
\begin{align}
    P_{\text{GUE}}( \set{c_k}) = \text{const} \cdot \delta\left( 1- \sum_{k =
    1}^N \abs{c_k}^2 \right) \, ,
\end{align}
where the constant is fixed by normalisation. 

\begin{widetext}
Assuming real entries for now (we can always go to complex ones by doubling
$N$) we can calculate that normalisation by evaluating the integral on the
$N$-dimensional unit sphere 
\begin{align}
    \text{const} & = \int_{-\infty}^\infty \left( \prod_{i = 1}^N \mathrm{d}
    c_i\right) \delta\left( 1- \sum_{k = 1}^N \abs{c_k}^2 \right) \\
    & =  \int \mathrm{d} \omega^{N - 1} \int_0^\infty \mathrm{d} R \,
    R^{N-1}\delta( 1- R^2) \\
    & = \int \mathrm{d} \omega^{N - 1} \int_0^\infty \mathrm{d} R \,
    R^{N-1} \frac{1}{2 R} [\delta( 1- R) + \delta(1+R) ] \\
    &= \pi^{N/2} /\Gamma(N/2) \, . 
\end{align}
Similarly, we can calculate the marginal distribution 
\begin{align}
    P^{(N,l)}(c_1, \ldots, c_l) & = \pi^{-N/2}\Gamma(N/2) \int_{-\infty}^\infty \left( \prod_{i = l+1}^N \mathrm{d}
    c_i\right) \delta\left( 1- \sum_{k = 1}^N \abs{c_k}^2 \right) \\
     & =  \int \mathrm{d} \omega^{N -l- 1} \int_0^\infty \mathrm{d} R \,
    R^{N-l-1}\delta( 1- R^2 - \sum_{k = 1 }^l \abs{c_k}^2) \\
    & = \pi^{-l/2} \frac{\Gamma(N/2)}{\Gamma((N-l)/2)} \left( 1- \sum_{k=1}^l \abs{c_k}^2
    \right)^{(N-l-2)/2}\, . 
\end{align}
\end{widetext}
For the GUE we then obtain the probability density for the amplitude $y = x_1^2 + x_2^2$ of a single complex entry $x_1 + \ii x_2$ of an eigenvector to be the twofold integral over real and imaginary part
\begin{align}
\label{eq:p_gue}
P_{\text{GUE}}(y) &= \int dx_1 dx_2 P^{(2N,2)} (x_1,x_2) \delta(y - x_1^2 - x_2^2) \nonumber \\
& = (N - 1)(1-y)^{N-2} .
\end{align}
Since the eigenvectors of a GUE matrix are identically distributed (up to a global phase) as the columns of a CUE matrix, we obtain the same distribution as \eqref{eq:p_gue} for the amplitudes of the matrix elements of a CUE matrix \cite{weinstein_matrix_2005}.
Notably, as $N$ becomes much larger than 1, we obtain 
\begin{equation}
P_{\text{Haar}}(p) = (N- 1)(1 - p )^{N-2} \xrightarrow{N \gg 1 } N \exp(-N p ) \, . 
\end{equation}
The first and second moments of $P_{\text{CUE}}$ are then given by
\begin{align}
    \mathbb{E}_{\rm Haar} [p]  & = \frac{1}{N} ,\label{eq:firstmoment app} \\
    \mathbb{E}_{\rm Haar}[p^2]  & = \frac{2}{N ( N + 1)}. \label{eq:secondmoment app}
\end{align}

\end{appendix}

\end{document}